\documentclass[10pt]{article}
\usepackage{amsfonts,amsmath,amssymb,amsthm}
\usepackage[pdftex]{graphicx}
\usepackage[hmargin=1in,vmargin=1in]{geometry}
\usepackage{natbib} 
\usepackage{setspace}
\usepackage{enumerate}
\usepackage[toc,title,titletoc,header]{appendix}
\usepackage[colorlinks,citecolor=blue,hypertexnames=false]{hyperref}
\usepackage{etoolbox}
\usepackage{booktabs}
\usepackage{bbm}
\usepackage{thmtools}
\usepackage{array,longtable}
\usepackage[font=small]{caption}
\newcolumntype{C}{>{$}c<{$}} % automatic math mode, centered
\allowdisplaybreaks
\usepackage{subcaption}
\usepackage{tikz}
\usetikzlibrary{arrows.meta,patterns}

\newenvironment{packed_enum}{\begin{enumerate} \setlength{\itemsep}{1pt}\setlength{\parskip}{0pt}\setlength{\parsep}{0pt}}{\end{enumerate}}

\def\qed{\rule{2mm}{2mm}}

\newtheorem{theorem}{Theorem}[section]
\newtheorem{lemma}{Lemma}[section]

\theoremstyle{definition}
\newtheorem{example}{Example}[section]

\newtheorem{remark}{Remark}[section]
\newtheorem{assumption}{Assumption}[section]

\AtEndEnvironment{remark}{~\qed}
\AtEndEnvironment{example}{~\qed}

\DeclareMathOperator{\var}{Var}

\DeclareMathOperator*{\argmax}{argmax}
\DeclareMathOperator*{\mvec}{vec}

\begin{document}

\title{Inference for Linear Systems with Unknown Coefficients\thanks{We thank Geonwoo Kim for outstanding research assistance. Shaikh acknowledges financial support from National Science Foundation Grant SES-2419008.}}

\author{
Yuehao Bai \\
Department of Economics \\
University of Southern California \\
\url{yuehao.bai@usc.edu}
\and
Kirill Ponomarev\\
Department of Economics \\
University of Chicago \\
\url{kponomarev@uchicago.edu}
\and
Max Tabord-Meehan\\
Department of Economics \\
University of Toronto \\
\url{m.tabordmeehan@utoronto.ca}
\and
Andres Santos\\
Department of Economics \\
University of California--Los Angeles\\
\url{andres@econ.ucla.edu}
\and
Azeem M.\ Shaikh \\
Department of Economics \\
University of Chicago \\
\url{amshaikh@uchicago.edu}
\and
Alexander Torgovitsky \\
Department of Economics \\
University of Chicago \\
\url{torgovitsky@uchicago.edu}
}

\bigskip

\maketitle

\vspace{-0.3in}

\begin{spacing}{1.1}
\begin{abstract}
This paper considers the problem of testing whether there exists a solution satisfying certain non-negativity constraints to a linear system of equations.  Importantly and in contrast to some prior work, we allow all parameters in the system of equations, including the slope coefficients, to be unknown.  For this reason, we describe the linear system as having unknown (as opposed to known) coefficients.  This hypothesis testing problem arises naturally when constructing confidence sets for possibly partially identified parameters in the analysis of nonparametric instrumental variables models, treatment effect models, and random coefficient models, among other settings.  To rule out certain instances in which the testing problem is impossible, in the sense that the power of any test will be bounded by its size, we begin our analysis by characterizing the closure of the null hypothesis with respect to the total variation distance. We then use this characterization to develop novel testing procedures based on sample-splitting.  We establish the validity of our testing procedures under weak and interpretable conditions on the linear system.  An important feature of these conditions is that they permit the dimensionality of the problem to grow rapidly with the sample size.  A further attractive property of our tests is that they do not require simulation to compute suitable critical values. We illustrate the practical relevance of our theoretical results in a simulation study.
\end{abstract}
\end{spacing}

\noindent KEYWORDS: Linear programming, linear (in)equalities, partial identification, uniform inference, treatment effects, nonparametric instrumental variables

\noindent JEL classification codes: C31, C35, C36

\thispagestyle{empty} 
\newpage
\setcounter{page}{1}

\section{Introduction}
Given an independent and identically distributed (i.i.d.) sample $\{Z_i\}_{i=1}^n$ with $Z_i$ distributed according to $P \in \mathbf P$, this paper studies the hypothesis testing problem
\begin{equation} \label{eq:null}
    H_0: P \in \mathbf P_0 \quad \text{vs.} \quad H_1: P \in \mathbf P \setminus \mathbf P_0~,
\end{equation}
where $\mathbf P$ is a ``large'' set of distributions satisfying conditions described below and
\begin{align}
    \mathbf P_0 := \{P \in \mathbf P: A_0(P)x_0 + A_1(P)x_1 = \beta(P) \text{ for some } x_0 \in \mathbb R^{d_0}, x_1 \in \mathbb R^{d_1}, x_1 \geq 0\}~. \label{eq:null-A0A1}
\end{align}
Here, ``$x_1 \geq 0$'' signifies that all coordinates of $x_1 \in \mathbb R^{d_1}$ are non-negative, $A_0(P)$ is a $p \times d_0$ matrix with $d_0 \geq 0$, $A_1(P)$ is a $p \times d_1$ matrix, and $\beta(P)$ is a $p \times 1$ vector.

As discussed further in Section \ref{sec:examples}, the testing problem described above arises naturally in many settings of empirical interest, including (i) inference for linear functionals of structural functions in nonparametric instrumental variables (NPIV) models with shape restrictions, as in \cite{freyberger2015identification}; (ii) inference for average marginal effects in nonlinear models with random coefficients, as in \cite{fox2011simple}; (iii) inference for treatment effect parameters that are partially identified through the marginal treatment response (MTR) framework of \cite{mogstad2018using}; (iv) inference under ``synthetic parallel trends'' with convex weights as considerd in \cite{liu2025synthetic}; and (v) inference for counterfactual choice probabilities in the distribution-free binary choice model in \cite{gurussell2023joe}. We also note that the null hypothesis \eqref{eq:null} subsumes as a special case the setting where $A_0(P)$ and $A_1(P)$ are \emph{known}, for which there exist many other examples \citep[see, for instance, the examples discussed in][]{fang2023inference}. 

We demonstrate in Section \ref{sec:farkas} that testing \eqref{eq:null} in certain special cases is impossible (in the sense that the power of any test is bounded by its size), by observing that the null set $\mathbf{P}_0$ is dense in $\mathbf{P}$ with respect to the total variation metric. Accordingly, the first contribution of the paper is to obtain a characterization of the closure of $\mathbf P_0$ which guides the construction of our test. Exploiting Farkas' lemma, we show that (a subset of) the closure of $\mathbf P_0$ with respect to the total variation metric can be described in terms of a set of linear inequality restrictions involving the projection of $(A_1(P), \beta(P))$ onto the orthogonal complement of the column span of $A_0(P)$. Specifically, the characterization amounts to verifying if, for every unit vector, the minimum of a collection of linear inequalities is non-positive. 

Building on this observation, in Section \ref{sec:test} we develop an inference procedure based on sample splitting. Using the first subsample, we construct a unit vector for which we expect that our characterization of the null is most strongly violated. We then use the second subsample to formally test whether or not our inequalities are violated. By virtue of this sample-splitting construction, the resulting test is straightforward to implement and involves no simulation: in particular, the construction of the ``violating'' unit vector requires solving two linear programs, and the test statistic is provided in closed form with rejection based on a comparison with the appropriate quantile of a standard normal distribution. Moreover, the uniform asymptotic validity of our test is established under weak and interpretable regularity conditions while allowing the dimensions of $p$, $d_0$, $d_1$ to grow with the sample size $n$. 

We propose two variants of our test. The first, which we term the ``direct'' method, tests if the union of \emph{all} the inequalities in our characterization hold. The second, which we term the ``screening'' method, tests whether or not a \emph{single} inequality in our characterization is non-positive, under the restriction that the other inequalities are positive with high probability. In our simulation study, the screening method typically yields shorter confidence intervals through test inversion, at the cost of introducing one tuning parameter when selecting the unit vector in the first subsample.

Inference procedures for testing \eqref{eq:null} and closely related null hypotheses have gained increasing attention in the literature. \cite{bai2022testing}, \cite{andrews2023inference}, \cite{cox2023simple}, and \cite{fang2023inference} all propose inference procedures which could be applied to \eqref{eq:null} whenever $A_0(P)$ and $A_1(P)$ do not depend on $P$ (i.e. they are known quantities). The method proposed in our paper immediately applies to this setting as a special case. There is also a literature for the closely related problem of inference for the \emph{value} of a linear program: see in particular \cite{freyberger2015identification}, \cite{cho2024simple}, \cite{gafarov2025simple}, \cite{voronin2025linear}, \cite{goff2025inference}. As explained in \cite{cox2025testing}, these methods can be used to address the same empirical problems as those that we consider in Section \ref{sec:examples}, but there is a subtle technical difference between the two settings, and in general these methods are neither tuning parameter nor simulation-free.

Two recent papers that propose inference procedures which could be used to test \eqref{eq:null} are \cite{cox2025testing} and \cite{goff2025inference}.  The theoretical results in \cite{cox2025testing} and \cite{goff2025inference} require rank conditions which, as argued in \cite{liu2025synthetic}, could be difficult to verify or may even be violated in certain settings of empirical interest.  In contrast, we are able to establish the asymptotic validity of our procedures under weak and interpretable regularity conditions on the ranks of $A_0(P)$ and $A_1(P)$. Moreover, neither paper establishes the validity of their tests in a high-dimensional regime where $p$, $d_1$ and $d_0$ are allowed to grow with sample size, as we do in this paper.\footnote{We note that \cite{goff2025inference} explain that they derive their results in a semi high-dimensional regime, where the number of unknown coefficients in the linear system must remain bounded.}  We illustrate the finite-sample performance of our procedures using simulation designs based on the ones considered previously in these papers, as well as some high-dimensional counterparts that these papers do not consider.  

The remainder of the paper is organized as follows.  In Section~\ref{sec:examples}, we describe several examples of empirical problems of interest that can be accommodated in our framework.  Our main results are contained in Sections~\ref{sec:farkas} and \ref{sec:test}: Section \ref{sec:farkas} presents our characterization of the closure of the null hypothesis, whereas Section \ref{sec:test} describes the test and establishes its uniform asymptotic validity.  In Section~\ref{sec:sims}, we study the finite-sample behavior of our proposed tests in a simulation study.  Proofs of all results are collected in the Appendix.

\section{Examples}\label{sec:examples}
In this section, we present a collection of motivating examples, two of which we revisit in the simulation study in Section \ref{sec:sims}. \cite{goff2025inference} and \cite{cox2025testing} discuss several additional examples that provide further motivation.
\begin{example}[Nonparametric instrumental variables with shape restrictions] \label{eg:npiv}
Consider the NPIV model studied in \cite{freyberger2015identification}, in which
\begin{gather}\label{npiv1}
Y = g(X) + U~,\notag \\
E_P[U|W] = 0~,    
\end{gather}
with $X$ a possibly endogenous explanatory variable supported on $\mathcal{X} = \{x_1, \dots, x_H\}$, and $W$ an instrumental variable supported on $\mathcal{W} = \{w_1, \dots, w_K\}$ with $K < H$. Define
\begin{align*}
\pi_{hk}(P) & = P \{X = x_h, W = w_k\} \\
	m_k(P) & = E_P[Y 1\{W = w_k\}]  \\
 \Pi(P) & = (\pi_{hk}(P))_{1 \leq h \leq H, 1 \leq k \leq K} \\
	m(P) & = (m_1(P), \dots, m_K(P))'~.   
\end{align*}
Then, $g = (g(x_1), \dots, g(x_H)) \in \mathbb R^H$ satisfies $ \Pi(P)' g = m(P)$. Assume additionally that the vector $g$ satisfies shape constraints encoded as $ Sg \leq 0$ for some matrix $S \in \mathbb R^{M \times H}$. 
 
Suppose we wish to test the null hypothesis $H_0: L(g) = L_0 \in \mathbb R$, where $L(g)$ represents a generic linear functional $L(g) = c'g$. Putting everything together, we obtain the following system:
\[
A_0(P) = \begin{pmatrix}
\Pi(P)'\\
S\\
c'
\end{pmatrix}
\hspace{3em}
A_1(P) = \begin{pmatrix}
	\mathbf{0}_{K \times M} \\
	\mathbf{I}_{M} \\
	\mathbf{0}_{1\times M}
\end{pmatrix}
\hspace{3em}
\beta(P) = \begin{pmatrix}
	m(P) \\
	\mathbf{0}_{M\times 1} \\
	L_0
\end{pmatrix}~.
\]
In this example, $A_1(P)$ is known and $A_0(P)$ and $\beta(P)$ are unknown.
\end{example}

\begin{example}[Average marginal effects in nonlinear models with random coefficients]
  \cite{fox2011simple} consider a class of nonlinear mixture models with discrete unobserved heterogeneity.
  A simple example is a static, binary choice logit model with random coefficients:
  \begin{align*}
    Y = 1\{C'W - U \geq 0\}~,
  \end{align*}
  where $Y \in \{0,1\}$ is an observed choice, $W$ is a vector of observed explanatory variables, $C$ is a vector of latent random coefficients, and $U$ is a latent random variable that follows a standard logistic distribution, independently of $(C,W)$.
  Under these assumptions, a consumer of type $c$ with observables $w$ chooses $Y = 1$ with probability
  \begin{align}
    \label{eq:rcl:decision-per-type}
    P \{Y = 1 \vert W = w, C = c\} = \frac{1}{1 + \exp(-c'w)} := \ell(c'w)~,
  \end{align}
  where $\ell(\cdot)$ is the standard logistic distribution function.
  \cite{bajarifoxryan2007aer} and \cite{fox2011simple} assume $C$ is independent of $W$ and approximate the distribution of $C$ using a discrete distribution with known support points $(c_{1},\ldots,c_{d})$ and unknown respective probabilities $\pi := (\pi_{1},\ldots,\pi_{d})$.
  Then \eqref{eq:rcl:decision-per-type} implies observed probabilities
  \begin{align}
    \label{eq:rcl:observed-moment}
    P \{Y = 1 \vert W = w\} = \sum_{j=1}^{d} \pi_{j}\ell(c_{j}'w)~.
  \end{align}
  A target parameter in this model is the average marginal effect (AME) of the $k$th explanatory variable \citep[for example,][Section 2.2.5]{wooldridge2010}:
  \begin{align}
    \alpha(P)
    :=
    E_{P}\left[
      \frac{\partial}{\partial w_{k}}
      \ell(C'W)
    \right]
    =
    E_{P}[
      C_{k}\ell(C'W)(1-\ell(C'W))
    ]
    =
    \sum_{j=1}^{d}
    \pi_{j}\underbrace{
      c_{jk}E_{P}[\ell(c_{j}'W)(1-\ell(c_{j}'W))]
    }_{\alpha_{j}(P)}~,
    \label{eq:rcl:target-param}
  \end{align}
  where $c_{jk}$ is the $k$th component of the $j$th support point, $c_{j}$.

  Suppose we wish to test the null hypothesis $H_{0}: \alpha(P) = \alpha_{0}$ for a vector of probabilities $\pi$ that satisfies \eqref{eq:rcl:observed-moment} at $p-2$ support points $w_{1},\ldots,w_{p-2}$.
  This problem fits into form \eqref{eq:null-A0A1} with $x_{0}$ null, $x_{1} = \pi$,
  \begin{align*}
    A_{1}(P)
    =
    \begin{pmatrix}
      \ell(c_{1}'w_{1}) & \cdots & \ell(c_{d}'w_{1}) \\
      \vdots & \vdots & \vdots \\
      \ell(c_{1}'w_{p-2}) & \cdots & \ell(c_{d}'w_{p-2}) \\
      1 & \cdots & 1 \\
      \alpha_{1}(P) & \cdots & \alpha_{d}(P)
    \end{pmatrix}
    \hspace{3em}
    \beta(P)
    =
    \begin{pmatrix}
      P\{Y=1|W = w_1\} \\
      \vdots \\
      P\{Y=1|W=w_{p-2}\} \\
      1 \\
      \alpha_{0}
    \end{pmatrix}~.
    %\label{eq:mixed-logit-setup}
  \end{align*}
  Notice that the dependence of $A_{1}(P)$ on $P$ comes through the row corresponding to the AME.
  The same structure will generally appear in a linear program with known coefficients \citep{fang2023inference} when the target parameter is an object that averages over observed heterogeneity.
\end{example}

\begin{example}[Instrumental variables with heterogeneous treatment effects] \label{eg:mst}
  \cite{mogstad2018using} develop an approach to marginal treatment effect analysis \citep{heckman2005structural} that allows for shape constraints and partial identification.
  A binary treatment $D$ produces two real-valued potential outcomes $Y(0)$, $Y(1)$, and an observed outcome $Y = (1-D)Y(0) + DY(1)$. The researcher additionally observes an instrument $Z$.
  Treatment assignment satisfies the \cite{imbens1994identification} monotonicity condition, which can be equivalently written with the threshold-crossing model $D = 1\{p(Z) \geq U\}$, where $U$ is a uniformly distributed unobservable that is independent of $Z$ and $p(z) := P\{D = 1 \vert Z = z\}$ is the propensity score \citep{vytlacil2002independence}.
  The marginal treatment response functions are assumed to take a linear-in-parameters form
  \begin{align}
    \label{eq:mst:linear-in-parameters}
    E[Y(d) \vert U = u] = \theta(d)'b(d \vert u),
  \end{align}
  where $\theta(d)$ are unknown parameters and $b(d \vert u)$ are known basis functions.

  The linear parameterization \eqref{eq:mst:linear-in-parameters} implies that common target parameters can also be written as linear functions of the parameters, $\theta := (\theta(0), \theta(1))$, taking the general form
  \begin{align}
    \tau^{\star}
    =
    \theta(0)'
    E\left[
      \int_0^1
      b(0 \vert u)
      \omega^\star(0 \vert u,Z)du
    \right]
    +
    \theta(1)'
    E\left[
      \int_0^1
      b(1 \vert u)\omega^\star(1 \vert u,Z)du
    \right]
    =
    \sum_{d \in \{0,1\}}
    \theta(d)'E[t(d \vert Z)],
    \label{eq:mst:target-parameter}
  \end{align}
  where $\omega^{\star}(d \vert u,z)$ are scalar weights that are known or identified and $t(d \vert z) = \int_0^1 b(d \vert u)\omega^\star(d \vert u,z)du$.
  \cite{mogstadtorgovitsky2024hole} observe that if $Y(d)$ is mean independent of $Z$, conditional on $U$, then the linear-in-parameters form \eqref{eq:mst:linear-in-parameters} also has implications for the observed outcome:
  \begin{align}
    \label{eq:mst:implied-conditional-mean}
    E_P[Y \vert D, Z]
    =
    \phi(D,Z)'\theta,
  \end{align}
  where $\theta := (\theta(0), \theta(1))$ and $\phi$ is a known function of $(D,Z)$ that depends on the basis functions, $b$, and the propensity score, $p(Z)$.
  Bounds on $\tau^{\star}$ can be found by considering all values of \eqref{eq:mst:target-parameter} that can be produced by $\theta$ that satisfy \eqref{eq:mst:implied-conditional-mean} either for all $(D,Z)$ or for some implied moments.
  \cite{mogstad2018using} propose using moments of the form $E_P[Ys(D,Z)]$, which can be shown from \eqref{eq:mst:implied-conditional-mean} to also be linear in $\theta$.
  \cite{sheatorgovitsky2023os} alternatively propose using the normal equations implied by \eqref{eq:mst:implied-conditional-mean}: 
  \begin{align}
    \label{eq:mst:normal-equations}
    E_P[\phi(D,Z)\phi(D,Z)']\theta
    =
    E_P[\phi(D,Z)Y],
  \end{align}
  which has the advantage of always being the same dimension as $\theta$ and not requiring one to choose the $s$ functions.
  In either case, shape constraints can be imposed on the marginal treatment response functions by constraining $\theta$.

  Consider testing whether $\tau^* = \tau_0$ for $\tau_0 \in \mathbb R$. Without shape constraints, the normal equation version of the problem can be phrased as \eqref{eq:null} by taking $x_{0} = \theta$, then setting
  \begin{align}
    \label{eq:mst-a0-no-constraints}
    A_{0}(P)
    =
    \begin{pmatrix}
      E_{P}[\phi(D,Z)\phi(D,Z)'] \\
      E_{P}[t(Z)]'
    \end{pmatrix}
    \quad
    \text{and}
    \quad
    \beta(P)
    =
    \begin{pmatrix}
    E_{P}[\phi(D,Z)Y] \\
    \tau_0
    \end{pmatrix}
    ,
  \end{align}
  where $t(Z) = (t(0 \vert Z)', t(1 \vert Z)')'$.
  Shape constraints can be imposed by including appropriate slack variables.
  For example, if $Y \in \{0,1\}$ is binary and $b(d \vert u)$ are Bernstein polynomials, then the implied MTR function can be constrained to lie in $[0,1]$ by restricting all elements of $\theta$ to lie within $[0,1]$.
  To incorporate these shape constraints into \eqref{eq:null}, we would now let $x_{0}$ be null and set $x_{1} = [\theta', s']'$, where $s$ are slack variables.
  Instead of \eqref{eq:mst-a0-no-constraints}, we would take $A_{0}$ to be empty and set
  \begin{align*}
    A_{1}(P)
    =
    \begin{pmatrix}
      E_{P}[\phi(D,Z)\phi(D,Z)'] & 0_{d_{\theta} \times d_{\theta}} \\
      E_{P}[t(Z)]' & 0_{1 \times d_{\theta}} \\
      \mathbf I_{d_{\theta}} & \mathbf I_{d_{\theta}}
    \end{pmatrix}
    \quad
    \text{and}
    \quad
    \beta(P)
    =
    \begin{pmatrix}
      E_{P}[\phi(D,Z)Y] \\
      \tau_0 \\
      1_{d_{\theta} \times 1}
    \end{pmatrix},
  \end{align*}
  where $0_{d_{\theta} \times d_{\theta}}$ is a $d_{\theta}$-dimensional square matrix of zeros, $\theta_{1 \times d_{\theta}}$ is a $d_{\theta}$-dimensional row vector of zeros, $1_{d_{\theta} \times 1}$ is a $d_{\theta}$-dimensional column vector of ones, and $\mathbf I_{d_{\theta}}$ is a $d_{\theta}$-dimensional identity matrix.
  The new rows relative to \eqref{eq:mst-a0-no-constraints} correspond to the constraint $\theta + s \leq 1$, which requires $\theta \leq 1$ because the slack variable $s$ is non-negative.
\end{example}

\begin{example}[Synthetic parallel trends with convex weights]
Consider the causal panel data setting presented in \citet{liu2025synthetic}. There are $K$ aggregate units indexed by $k\in\{1,\dots,K\}$, observed over periods
$t\in\{1,\dots,T_0,T\}$, where $\{1, \ldots, T_0\}$ denote pre-treatment periods and $T$ denotes a treatment period at which unit $k=1$ is treated (units $k\ge 2$ are never treated). Let $\tilde \mu_t^k(1)$ and $\tilde \mu_t^k(0)$ denote potential aggregate outcomes for unit $k$ at time $t$ with and without treatment, respectively, and let the
observed aggregate outcome be given by
\[
\mu_t^k(P) = \tilde \mu_t^k(0) + \big(\tilde \mu_t^k(1)-\tilde \mu_t^k(0)\big)1\{k=1,t=T\}~.
\]
The target parameter is the effect on the treated unit at time $T$:
\[
\tau = \tilde \mu_T^1(1)- \tilde \mu_T^1(0)~.
\]
\cite{liu2025synthetic} maintains the assumption of (convex) \emph{synthetic parallel trends} (SPT); that is, there exists a set of weights
$(\omega_k: 2 \le k \le K) \in \mathbb R^{K-1}$ with $\sum_{2 \le k \le K}\omega_k = 1$, $\omega_k \ge 0$ for all $k$ such that
for every $t\in\{2,\dots,T\}$,
\[
\sum_{k=2}^K \omega_k\,\Delta\tilde \mu_t^k(0)=\Delta\tilde \mu_t^1(0)~,
\]
where $\Delta\tilde\mu^k_t(0) = \tilde\mu^k_t(0) - 
\tilde\mu^k_{t-1}(0)$. Let $\Delta \mu_t^k(P) = \Delta \tilde\mu_t^k(0)$ for all $k \geq 2$ and $t \geq 2$. Suppose we wish to test the null hypothesis $H_0: \tau = \tau_0$. Then, under the convex SPT assumption, we obtain the following system:
\[
A_1(P) = 
\begin{pmatrix}
\Delta\mu_{2}^{2}(P) & \cdots & \Delta\mu_{2}^{K}(P)\\
\vdots               & \ddots & \vdots\\
\Delta\mu_{T_0}^{2}(P) & \cdots & \Delta\mu_{T_0}^{K}(P) \\
\Delta\mu_{T}^{2}(P) & \cdots & \Delta\mu_{T}^{K}(P) \\
1 & \cdots & 1
\end{pmatrix}
\hspace{3em}
\beta(P) = 
\begin{pmatrix}
\Delta\mu_{2}^{1}(P)\\
\vdots\\
\Delta\mu_{T_0}^{1}(P) \\
\mu^1_T(P) - \tau_0 - \mu^1_{T_0}(P) \\
1
\end{pmatrix}~,
\]
and $A_0(P)$ does not exist. \cite{liu2025synthetic} also analyzes the setting where we drop the assumption of convexity, so that $\omega_k$ are not restricted to be non-negative. In this case, $A_0(P)$ is given by the above matrix of aggregate-outcome differences and $A_1(P)$ does not exist.
\end{example}

\begin{example}[Distribution-free binary choice]
  \cite{gurussell2023joe} consider binary choice models of the form
  \begin{align}
    Y = 1\left\{\varphi(D, Z, U) \geq 0\right\}~,
  \end{align}
  where $Y$ is a binary outcome, $\varphi$ is an unknown function, $D$ is an endogenous regressor, $Z$ is vector of exogenous regressors, and $U$ is a vector of unobservables.
  The distribution of $U$ is not restricted to lie in a parametric family, raising the possibility of partial identification.
  The authors observe that if $D$ and $Z$ are discrete, then the conditional distribution of $Y$ implied by the model is determined by the mass placed on a finite partition of the support of $U$ into sets $\mathcal{U}_{j}$, $j = 1,\ldots,d_{U}$.
  In particular:
  \begin{align}
    P\{Y = 1 \vert D = d, Z = z\}
    =
    \sum_{j=1}^{d_{U}} 1\left\{j \in \mathcal{J}(d, z)\right\} \theta_{j}(d,z)~,
    \label{eq:binary-response:obs-eq}
  \end{align}
  where $\theta_{j}(d,z)$ is the mass that the distribution of $U$ places on $\mathcal{U}_{j}$, conditional on $D = d, Z = z$, and the set $\mathcal{J}(d,z)$ collects the appropriate indices for sets that lead to $Y = 1$ when $D = d$ and $Z = z$.
  If the instrument $Z$ is independent with $U$, then also
  \begin{align}
    \sum_{d}
    \theta_{j}(d,z)P\{D = d \vert Z = z\}
    =
    \sum_{d}
    \theta_{j}(d,z')P\{D = d \vert Z = z'\}
    \quad
    \text{for all $z, z'$, and $j$.}
    \label{eq:binary-response:independence}
  \end{align}
  A natural target parameter in this model is the counterfactual choice probability $\pi(P) := P\{Y(d^{\star}) = 1\} = P\{\varphi(d^{\star}, Z, U) \geq 0\}$ at some fixed $d^{\star}$, which can also expressed as a linear function of the $\theta_{j}(d,z)$ if the sets $\mathcal{U}_{j}$ have been constructed to be sufficiently fine:
  \begin{align}
    \pi(P)
    =
    \sum_{j=1}^{d_{U}}
    E_{P}
    \left[
      1\left\{ j \in \mathcal{J}_{\pi}(d^{\star},Z) \right\} \theta_{j}(D,Z)
    \right]
    =
    \sum_{j=1}^{d_{U}} \sum_{d, z}
    1\left\{ j \in \mathcal{J}_{\pi}(d^{\star},z) \right\} \theta_{j}(d,z)P\{D = d, Z = z\}~.
    \label{eq:binary-response:target-param}
  \end{align}
  Similar observations have been used for multinomial choice models by \cite{manski2007ier}, \cite{tebalditorgovitskyyang2023e}, and \cite{gurussellstringham2024}.

  Suppose we wish to test the null hypothesis $H_{0}: \pi(P) = \pi_{0}$ for a vector of probabilities $\{\theta_{j}(d,z)\}_{j,d,z}$ that satisfies \eqref{eq:binary-response:obs-eq} and \eqref{eq:binary-response:independence} when $D$ and $Z$ are both discrete.
  This problem fits into form \eqref{eq:null-A0A1} with $x_{0}$ null, $x_{1}$ taken to be the $\theta_{j}(d,z)$ arranged in a vector across $(j,d,z)$, and $A_{1}(P)$ and $\beta(P)$ constructed from the linear functions \eqref{eq:binary-response:obs-eq}--\eqref{eq:binary-response:target-param}, with \eqref{eq:binary-response:target-param} set equal to $\pi_{0}$ and additional sum-to-one constraints for each $(d,z)$.
  The rows of $A_{1}(P)$ corresponding to \eqref{eq:binary-response:independence}--\eqref{eq:binary-response:target-param} both depend on $P$.
\end{example}

\section{A Useful Characterization of $\mathbf P_0$} \label{sec:farkas}

\begin{figure}[t]
\centering

% ---------- Panel 1 ----------
\begin{minipage}[t]{0.48\linewidth}
\centering
\begin{tikzpicture}[x=0.95cm,y=0.95cm]
  % bounds
  \def\xmin{-3} \def\xmax{3}
  \def\ymin{-3} \def\ymax{3}
  \def\gap{0.06} % visual gap around a=0 (purely stylistic)

  % Shade null set: a != 0 (shade both sides, leaving a thin gap at a=0)
  \fill[gray!12] (\xmin,\ymin) rectangle (-\gap,\ymax);
  \fill[gray!12] (\gap,\ymin) rectangle (\xmax,\ymax);

  % axes
  \draw[-{Stealth[length=2mm]}] (\xmin,0) -- (\xmax+0.25,0) node[below] {$a$};
  \draw[-{Stealth[length=2mm]}] (0,\ymin) -- (0,\ymax+0.25) node[left] {$b$};

  % dotted vertical line at a=0 (excluded from null except origin; included in closure)
  \draw[dotted, line width=0.9pt] (0,\ymin) -- (0,\ymax);

  % included origin
  \fill (0,0) circle (2.0pt);
\end{tikzpicture}

\vspace{2mm}
{\small (i) $H_0: ax=b \text{ for some }  x\in\mathbb R$}
\end{minipage}
\hfill
% ---------- Panel 2 ----------
\begin{minipage}[t]{0.48\linewidth}
\centering
\begin{tikzpicture}[x=0.95cm,y=0.95cm]
  % bounds
  \def\xmin{-3} \def\xmax{3}
  \def\ymin{-3} \def\ymax{3}
  \def\gap{0.06} % visual gap around a=0 (purely stylistic)

  % Shade null set: (a>0,b>=0) union (a<0,b<=0)
  % Leave a thin gap around a=0 to emphasize exclusion of (0,b!=0).
  \fill[gray!12] (\gap,0) rectangle (\xmax,\ymax);     % quadrant I (excluding a=0)
  \fill[gray!12] (\xmin,\ymin) rectangle (-\gap,0);    % quadrant III (excluding a=0)

  % axes
  \draw[-{Stealth[length=2mm]}] (\xmin,0) -- (\xmax+0.25,0) node[below] {$a$};
  \draw[-{Stealth[length=2mm]}] (0,\ymin) -- (0,\ymax+0.25) node[left] {$b$};

  % dotted vertical line at a=0 (in closure, not in null except origin)
  \draw[dotted, line width=0.9pt] (0,\ymin) -- (0,\ymax);

  % included origin
  \fill (0,0) circle (2.0pt);
\end{tikzpicture}

\vspace{2mm}
{\small (ii) $H_0: ax=b \text{ for some }  x \geq 0$}
\end{minipage}

\caption{Shaded region indicates the null set in the $(a,b)$-plane. The dotted line at $a=0$ highlights points excluded from the null (except $(0,0)$) but belonging to its closure.}\label{fig:closure}
\end{figure}
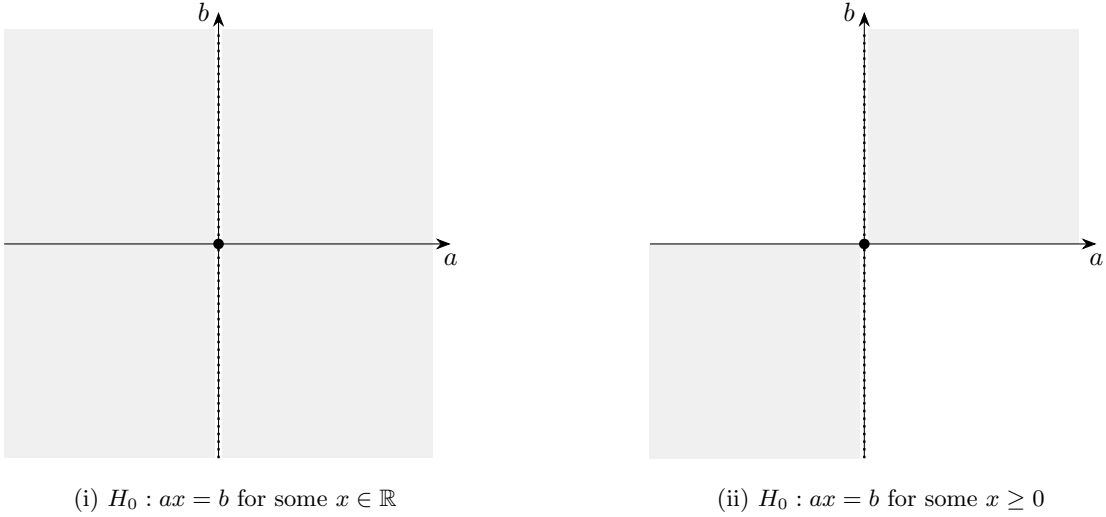

In this section, we provide a characterization of $\mathbf P_0$ that informs the construction of the test we present in Section \ref{sec:test}. 
Before discussing the characterization formally, we motivate the need for such a characterization by demonstrating the impossibility of testing the null hypothesis that $P\in \mathbf P_0$ in a seemingly simple example. 
In particular, consider the case in which, for some scalars $a(P)$ and $b(P)$, the set $\mathbf{P}_0$ is given by
\[\mathbf{P}_0 = \{P \in \mathbf{P}: a(P)x = b(P) \text{ for some } x \in \mathbb{R}\}~.\]
This is a special case of \eqref{eq:null-A0A1} where $d_0 = 1$ and $A_1(P)$ does not exist. In Figure \ref{fig:closure}(i) we plot the set
\[\mathbf{C}_0 = \{(a, b) \in \mathbb{R}^2: ax = b \text{ for some } x \in \mathbb{R}\}~,\]
which represents the set of points $(a(P),b(P)) \in \mathbb{R}^2$ for which there exists a distribution $P \in \mathbf{P}_0$. From Figure \ref{fig:closure}(i) we notice immediately that the closure of $\mathbf{C}_0$ (as a subset of $\mathbb{R}^2$) is the entire space $\mathbb{R}^2$. This observation suggests that the closure of $\mathbf{P}_0$ (with respect to the total variation metric) coincides with the entire set of distributions $\mathbf{P}$, and indeed we discuss conditions under which this is the case below. In contrast, Figure \ref{fig:closure}(ii) depicts the analogous set $\mathbf{C}_0$ for testing the null hypothesis given by
\[\mathbf{P}_0 = \{P \in \mathbf{P}: a(P)x = b(P) \text{ for some } x \in \mathbb{R}, x \ge 0\}~,\]
which is a special case of \eqref{eq:null-A0A1} where $d_1$ = 1 and $A_0(P)$ does not exist. Here, we see that the closure of $\mathbf{C}_0$ (as a subset of $\mathbb{R}^2$) is a strict subset of $\mathbb{R}^2$, which suggests that the closure of $\mathbf{P}_0$ in this example is a strict subset of $\mathbf{P}$.

Because it is impossible to test a null hypothesis for which $\mathbf{P}_0$ is dense in $\mathbf{P}$ with respect to the total variation metric, and more generally that it is impossible for any test to have non-trivial power against alternatives which lie on the \emph{boundary} of $\mathbf P_0$ \citep[see for instance][]{romano2004non}, our test is based on a characterization of the closure of $\mathbf P_0$ relative to the total variation metric, which we denote by $\mathrm{cl}(\mathbf P_0)$. Towards that end, define
\begin{align}\label{eq:C0}
\mathbf {C}_0 := \left\{(A_0,A_1,b) \in \mathbb R^{p \times d_0}\times \mathbb R^{p \times d_1} \times \mathbb R^p: A_0x_0 + A_1x_1 = b \text{ for some } x_0 \in \mathbb R^{d_0}, x_1 \in \mathbb R^{d_1}, x_1 \geq 0\right\}~,
\end{align}
so that $\mathbf P_0 = \{P \in \mathbf P: (A_0(P),A_1(P),\beta(P)) \in \mathbf{C}_0\}$. Accordingly, we begin by deriving a characterization of the closure of $\mathbf {C}_0$ with respect to the Euclidean topology, and then relate this characterization back to $\mathrm{cl}(\mathbf P_0)$. First, consider the following alternative representation of $\mathbf{C}_0$ based on pre-multiplying the equation in \eqref{eq:C0} by the annihilator of $A_0$, which we denote by $M_0$.

\begin{lemma} \label{lem:annihilator}
Let $M_0$ denote the projection operator onto the orthogonal complement of the column space of $A_0$. Then
\[\mathbf C_0 = \{(A_0, A_1, b): M_0 A_1 x_1 = M_0b \text{ for some } x_1 \in \mathbb R^{d_1}, x_1 \geq 0\}~.\]
\end{lemma}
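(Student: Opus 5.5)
The plan is to prove the two set inclusions separately, fixing an arbitrary triple $(A_0,A_1,b)$ throughout. Write $P_0 := I - M_0$ for the orthogonal projection onto the column space of $A_0$. The facts we will use are elementary: $M_0 A_0 = 0$, $M_0 + P_0 = I$, and $P_0 v$ lies in the column space of $A_0$ for every $v \in \mathbb R^p$, so $P_0 v = A_0 y$ for some $y \in \mathbb R^{d_0}$. When $d_0 = 0$ we interpret $M_0 = I$, and the statement is trivial.

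For the inclusion ``$\subseteq$'', suppose $A_0 x_0 + A_1 x_1 = b$ with $x_1 \geq 0$. Premultiplying by $M_0$ and using $M_0 A_0 = 0$ gives $M_0 A_1 x_1 = M_0 b$, with the same non-negative $x_1$. So the triple lies in the right-hand set.

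For the inclusion ``$\supseteq$'', suppose $M_0 A_1 x_1 = M_0 b$ for some $x_1 \geq 0$. Set $r := b - A_1 x_1$. Then $M_0 r = 0$, so $r = P_0 r$ lies in the column space of $A_0$, and there is some $x_0 \in \mathbb R^{d_0}$ with $A_0 x_0 = r$. One concrete choice is $x_0 = A_0^{+} r$, where $A_0^+$ is the Moore--Penrose inverse, since $A_0 A_0^+ = P_0$. It follows that $A_0 x_0 + A_1 x_1 = b$ with $x_1 \geq 0$, so the triple belongs to $\mathbf C_0$.

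There is no real obstacle here. The only step needing care is the identification $\ker M_0 = \mathrm{col}(A_0)$. This holds because $M_0$ is the orthogonal projection onto $\mathrm{col}(A_0)^\perp$, so $M_0 r = 0$ exactly when $r \in (\mathrm{col}(A_0)^\perp)^\perp = \mathrm{col}(A_0)$, using that $p$ is finite.
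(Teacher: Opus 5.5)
Your proof is correct and follows essentially the same approach as the paper: premultiply by $M_0$ for one inclusion, and for the other use that $M_0 A_1 x_1 = M_0 b$ forces $b - A_1 x_1$ into the column space of $A_0$. The paper projects $A_1\tilde x_1$ and $b$ onto $\mathrm{col}(A_0)$ separately and subtracts, whereas you work with the residual $r = b - A_1 x_1$ directly; this is only a cosmetic difference.
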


Let $a_1 , \dots, a_{d_1}$ denote the columns of $A_1$. The columns of $M_0A_1$ are then given by $M_0a_1, \ldots, M_0a_{d_1}$. Given this alternative representation of $\mathbf C_0$, we can apply Farkas' lemma to conclude that $(A_0,A_1,b) \in \mathbf C_0$ if and only if for all $y \in \mathbb R^p$, either there exists some $1 \le j \le d_1$ such that $a_j'M_0y < 0$ or $b' M_0 y \geq 0$. Consider the set of triples $(A_0, A_1, b)$ obtained by weakening the strict inequalities $a_j'M_0y < 0$ to weak inequalities:
\begin{equation} \label{eq:C0bar}
\bar{\mathbf C}_0 := \left \{(A_0, A_1, b) \in \mathbb R^{p \times d_0}\times \mathbb R^{p \times d_1} \times \mathbb R^p: \sup_{y \in \mathbb R^p} \min \left \{ \min_{1 \leq j \leq d_1} a_j'M_0y, - b' M_0 y \right \} \leq 0 \right \} ~.    
\end{equation}
Theorem \ref{thm:closure} formalizes the sense in which replacing these strict inequalities with weak inequalities relates to the closure of the set $\mathbf{C}_0$.

\begin{theorem}\label{thm:closure}
Let 
\begin{equation}\label{eq:CRD:def}
\mathbf{C}^{\rm RD} := \{(A_0, A_1, b) \in \mathbb R^{p \times d_0}\times \mathbb R^{p \times d_1} \times \mathbb R^p: \mathrm{rank}(A_0) < d_0\}~.
\end{equation}
Then,
\[\mathrm{cl}(\mathbf{C}_0) = \bar{\mathbf{C}}_0 \cup \mathbf{C}^{\rm RD} ~,\] 
where $\mathrm{cl}(\mathbf{C}_0)$ denotes the closure of $\mathbf{C}_0$ in the Euclidean topology.
\end{theorem}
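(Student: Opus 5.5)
The plan is to prove the two inclusions separately. In both directions Lemma \ref{lem:annihilator} and Farkas' lemma let us work with the columns $M_0a_j$ and the vector $M_0b$ in place of the original system. A preliminary observation: the objective inside the supremum in \eqref{eq:C0bar} is positively homogeneous in $y$. Hence the supremum is either $0$ or $+\infty$. Membership in $\bar{\mathbf C}_0$ is therefore equivalent to
\[
\max_{\|y\|=1}\min\Big\{\min_{1\le j\le d_1} a_j'M_0y,\,-b'M_0y\Big\}\le 0,
\]
that is, there is no $y$ with $a_j'M_0y>0$ for all $j$ and $b'M_0y<0$.

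\emph{Inclusion $\bar{\mathbf C}_0\cup\mathbf C^{\rm RD}\subseteq \mathrm{cl}(\mathbf C_0)$.}

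First take $(A_0,A_1,b)\in\mathbf C^{\rm RD}$. There is $v\neq 0$ with $A_0v=0$; pick a coordinate $k$ with $v_k\neq 0$. Set $A_0^\varepsilon := A_0+\varepsilon\, b\, e_k'$. Then $A_0^\varepsilon v=\varepsilon v_k b$, so $x_0=v/(\varepsilon v_k)$ and $x_1=0$ solve the perturbed system. Letting $\varepsilon\to0$ shows the point lies in the closure.

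Next take $(A_0,A_1,b)\in\bar{\mathbf C}_0$. Perturb only $A_1$, replacing each column by $a_j^\varepsilon:=a_j+\varepsilon b$ with $\varepsilon>0$; $A_0$ and hence $M_0$ are unchanged. Suppose Farkas' lemma failed for the perturbed system. Then there would be $y$ with $(a_j+\varepsilon b)'M_0y\ge 0$ for all $j$ and $b'M_0y<0$. This gives $a_j'M_0y\ge -\varepsilon b'M_0y>0$ for all $j$, contradicting membership in $\bar{\mathbf C}_0$. So by Farkas and Lemma \ref{lem:annihilator}, the perturbed triple lies in $\mathbf C_0$, and it converges to $(A_0,A_1,b)$ as $\varepsilon\to 0$.

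\emph{Inclusion $\mathrm{cl}(\mathbf C_0)\subseteq\bar{\mathbf C}_0\cup\mathbf C^{\rm RD}$.}

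First, $\mathbf C_0\subseteq\bar{\mathbf C}_0$ directly. By Lemma \ref{lem:annihilator} and Farkas, every $y$ has $a_j'M_0y<0$ for some $j$ or $b'M_0y\ge0$, so the minimum in \eqref{eq:C0bar} is $\le 0$ for every $y$.

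Now let $(A_0^{(m)},A_1^{(m)},b^{(m)})\in\mathbf C_0$ converge to $(A_0,A_1,b)$. If $\mathrm{rank}(A_0)<d_0$ we are in $\mathbf C^{\rm RD}$ and done. Otherwise $A_0$ has full column rank (this includes the trivial case $d_0=0$, where $M_0=I$). Full column rank is an open condition, so $A_0^{(m)}$ also has full column rank for large $m$. On that set, $M_0=I-A_0(A_0'A_0)^{-1}A_0'$ is a continuous function of $A_0$, so $M_0^{(m)}\to M_0$.

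The function $(A_0,A_1,b)\mapsto\max_{\|y\|=1}\min\{\min_j a_j'M_0y,-b'M_0y\}$ is a max over a compact set of functions jointly continuous in the data and $y$. It is therefore continuous near full-rank $A_0$. It is $\le 0$ along the sequence, hence $\le0$ at the limit, so the limit lies in $\bar{\mathbf C}_0$.

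The main obstacle is this last step. The map $A_0\mapsto M_0$ is discontinuous exactly where the rank of $A_0$ drops. That is why $\mathbf C^{\rm RD}$ must be carved out, and the argument needs care to use continuity only on the full-column-rank stratum.
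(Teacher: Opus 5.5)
Your proof is correct. The inclusion $\mathrm{cl}(\mathbf C_0)\subseteq\bar{\mathbf C}_0\cup\mathbf C^{\rm RD}$ and the step $\mathbf C^{\rm RD}\subseteq\mathrm{cl}(\mathbf C_0)$ are essentially the paper's Lemmas~\ref{lem:closure_part1} and~\ref{lem:closure_part2}. Openness of full column rank gives $M_0^{(m)}\to M_0$, and $A_0$ is perturbed by $\varepsilon b$ along a null direction. Passing to the limit for each fixed $y$ already suffices, so continuity of the maximum over the sphere is more than you need, though harmless.

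The real difference is in $\bar{\mathbf C}_0\subseteq\mathrm{cl}(\mathbf C_0)$. The paper splits on whether $K=\{M_0A_1x_1:x_1\ge 0\}$ is pointed. If it is, a dual-cone argument (Lemmas~\ref{lem:pointed} and~\ref{lem:boyd}) shows the weak and strict Farkas conditions coincide, so the triple already lies in $\mathbf C_0$ (Lemma~\ref{lem:closure_part3}). If not, a nontrivial $\lambda\ge 0$ with $\sum_j\lambda_jM_0a_j=0$ is used to perturb one column (Lemma~\ref{lem:closure_part4}). Your uniform perturbation $a_j\mapsto a_j+\varepsilon b$, with a one-line Farkas contradiction, handles both regimes at once and needs no conic duality.

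The paper's route, once the split is made correctly, buys structural information: for full-rank $A_0$, points of $\bar{\mathbf C}_0\setminus\mathbf C_0$ only occur when the generators $M_0a_j$ admit a vanishing nontrivial nonnegative combination. Yours buys brevity and, in fact, robustness, because pointedness is not the right dichotomy when some column $M_0a_j$ is zero. In that case $K$ can be pointed while $\{y: a_j'M_0y>0 \text{ for all } j\}$ is empty, so the identification of the interior of $K^*$ in the proof of Lemma~\ref{lem:pointed} breaks down.

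For instance, in the scalar setting of Figure~\ref{fig:closure}(ii), a point $(a,b)=(0,b)$ with $b\neq 0$ lies in $\bar{\mathbf C}_0$ and has $K=\{0\}$ pointed, yet it is not in $\mathbf C_0$, contradicting Lemma~\ref{lem:closure_part3}. The paper's argument would need to split by Gordan's alternative instead, i.e.\ by the existence of $0\neq\lambda\ge 0$ with $M_0A_1\lambda=0$. Your perturbation covers such points automatically: take $a^\varepsilon=\varepsilon b$ and $x_1=1/\varepsilon$.
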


Theorem \ref{thm:closure} shows that the closure of $\mathbf{C}_0$ can be characterized by combining the set $\bar{\mathbf{C}}_0$, which describes the set of triples obtained by weakening the inequalities in the conclusion of Farkas' lemma, along with the set of triples for which $A_0$ is rank deficient. Note the theorem implies that, if $p < d_0$, then the closure of $\mathbf{C}_0$ becomes $\mathbb R^{p \times d_0}\times \mathbb R^{p \times d_1} \times \mathbb R^p$. As a result, we implicitly assume that $p \ge d_0$ for the rest of the paper. This characterization of the closure forms the basis for the test which we present in Section \ref{sec:test}. 

Finally, we relate $\mathrm{cl}(\mathbf{C}_0)$ to the closure $\mathrm{cl}(\mathbf{P}_0)$ in the total variation distance. Let $\widetilde{\mathbf P}_0 := \{P \in \mathbf P: (A_0(P), A_1(P), \beta(P)) \in \mathrm{cl}(\mathbf C_0)\}$ denote the pre-image of $\mathrm{cl}(\mathbf C_0)$ in $\mathbf{P}$. We claim that in general $\mathbf P_0 \subseteq \widetilde{\mathbf P}_0 \subseteq \mathrm{cl}(\mathbf P_0)$. Indeed, it follows by construction that $\mathbf{P}_0 \subseteq \widetilde{\mathbf P}_0$. As a result, any test that controls size on $\widetilde{\mathbf P}_0$ will necessarily control size on $\mathbf{P}_0$. Meanwhile, we generally expect $\widetilde{\mathbf P}_0 \subseteq \mathrm{cl}(\mathbf P_0)$, in which case we will not have power against any distribution in $\widetilde{\mathbf P}_0$. By definition, this will be the case whenever $\mathbf P$ is ``rich'' enough in the sense that for every $P \in \mathbf P$ such that $(A_0(P), A_1(P), \beta(P)) \in \mathrm{cl}(\mathbf C_0)$, there exists a sequence $P_n \in \mathbf P_0$ such that $P_n$ converges to $P$ in the total variation metric.

Low-level sufficient conditions for this ``richness" property could be obtained, for example, if we view $P$ as the distribution of a random vector in $\mathbb R^{p(d_0 + d_1 + 1)}$ and define $\mathrm{vec}(A_0(P), A_1(P), \beta(P))$ to be the corresponding vector of means, where $\mathrm{vec}$ is the vec-operator \citep[see, for instance, Chapter 2 of][]{magnus2019matrix}. In this case, $\mathbf P$ is rich enough if it contains a sufficiently large collection of normal location families. To illustrate, let $\mu = \mvec(A_0(P), A_1(P), \beta(P))$ where $(A_0(P), A_1(P), \beta(P)) \in \mathrm{cl}(\mathbf{C}_0)$. Then by definition there exists a sequence of vectors $\mu_n$ corresponding to triplets in $\mathbf{C}_0$ such that $\mu_n \rightarrow \mu$. Let $H = \mathrm{span}\{\mu_n - \mu: n \ge 1\}$ and let $\Sigma$ be a positive semi-definite matrix whose range is exactly $H$, so that $\mu_n - \mu \in \mathrm{range}(\Sigma)$ for all $n \ge 1$. Then, the sequence of distributions $N(\mu_n, \Sigma)$ converges to $N(\mu, \Sigma)$ in the total variation metric by Lemma \ref{lem:hellinger} in the appendix.

% For example, define $\Sigma = \sum_{1 \leq i \leq k} v_i v_i'$, where $v_i, 1 \leq i \leq k$ form an orthonormal basis of $H$. Suppose L(mu_n - mu) = 0. then range(Sigma) = H means that L Sigma L'= 0, so var(LX) = 0 for X \sim N(anything, Sigma).
% \begin{remark}\label{rem:tv}
% The converse direction of Lemma \ref{lem:tv}, although less relevant for our purposes, can be obtained if  $\mathbf P$ is restricted in such a way that there exists a test $\phi_n$ that controls size for $\alpha \in (0, 1)$ uniformly in $\widetilde{\mathbf P}_0$, i.e.,
% \[\limsup_{n \to \infty} \sup_{P \in \widetilde{\mathbf P}_0} E_P[\phi_n] \leq \alpha~, \]
% and is also pointwise consistent, i.e., for $P \in \mathbf P \setminus \widetilde{\mathbf P}_0$,
% \[ \lim_{n \to \infty} E_P[\phi_n] = 1~. \]
% In this case, $\widetilde{\mathbf P}_0 \supseteq \mathrm{cl}(\mathbf P_0)$.
% \end{remark}

\begin{remark}
Following \cite{fang2021inference}, it may seem natural to first transform \eqref{eq:null-A0A1} into standard form
\begin{equation} \label{eq:null-standard}
\mathbf P_0 = \{ P \in \mathbf P : (A(P), \beta(P)) \in \mathbf{C}^{\rm alt}_0 \}~,  
\end{equation}
where $\mathbf{C}^{\rm alt}_0 = \{(A,b) \in \mathbb R^{p \times (2d_0 + d_1)}\times\mathbb R^p: Ax = b  \text{ for some } x \in \mathbb R^{2d_0 + d_1}, x \geq 0\}$. Indeed, given a triple $(A_0, A_1, b) \in \mathbf{C}_0$, we can obtain a pair $(A,b) \in \mathbf{C}^{\rm alt}_0$ by defining $A = (A_0 \quad {-A_0} \quad A_1)$. However, this transformation would not help provide a useful characterization of the closure as presented in this section. 
Let $\tilde{a}_j$ for $1 \le j \le 2d_0 + d_1$ denote the columns of $A$. By applying the reasoning we used to obtain $\bar{\mathbf{C}}_0$ in \eqref{eq:C0bar}, we obtain the set
\[\bar{\mathbf{C}}^{\rm alt}_0 = \left\{(A,b) \in \mathbb R^{p \times (2d_0 + d_1)}\times\mathbb R^p: \sup_{y \in \mathbb R^p} \min \left \{ \min_{1 \leq j \leq 2d_0 + d_1} \tilde a_j' y, -b'y \right \} \leq 0\right\}~.\]
Note that the $d_0 + 1$ to $2d_0$-th columns of $A$ are simply the negatives of the first $d_0$ columns, so that there always exists a $j$ for which $\tilde{a}_j'y \leq 0$. As a result, $\bar{\mathbf{C}}^{\rm alt}_0$ recovers the \emph{entire} space $\mathbb R^{p \times (2d_0 + d_1)}\times\mathbb R^p$ and thus does not help to provide a useful characterization of the closure.
\end{remark}

\section{The Test} \label{sec:test}
\subsection{Description of the Test}
Let $\{Z_i\}_{i=1}^n$ be i.i.d.\ with $Z_i$ distributed according to $P \in \mathbf P$. Recall from Section \ref{sec:farkas} that the closure of the null space can be characterized as the set of distributions $P$ for which either $A_0(P)$ is rank-deficient, or 
\begin{equation}\label{sec:test1} 
\sup_{y \in \mathbb R^p} \min \left \{ \min_{1 \leq j \leq d_1} a_j(P)'M_0(P)y, - \beta(P)' M_0(P) y \right \}\leq 0~.
\end{equation}
Recognizing that whether condition \eqref{sec:test1} holds is not affected by norm constraints on $y$, and defining $b_j(P) := a_j(P)$ for $1 \le j \le d_1$, $b_{d_1+1}(P) := -\beta(P)$, and $J = \{1, 2, \ldots, d_1 + 1\}$, we may rewrite \eqref{sec:test1} as
\begin{equation} \label{eq:null_final}
 \min_{j \in J}~ b_j(P)'M_0(P)y \leq 0 \quad \quad \text{for all } y \in \mathbb R^p \text{ such that } \|y\|_1 \leq 1~,
\end{equation}
where $\|y\|_1 = \sum_{i=1}^p |y_i|$ for any $(y_1,\ldots, y_p)\in \mathbb R^p$.
In what follows, we consider the following equivalent formulation of \eqref{eq:null_final}: For any $J^* \subseteq J$ and $(J^*)^c := J \setminus J^*$, condition \eqref{eq:null_final} is equivalent to the statement 
\begin{equation} \label{eq:null-0-1}
\min_{j \in J^*} b_j(P)'M_0(P)y \le 0 \quad \quad \text{for all } y \in \mathbb R^p \text{ such that } \|y\|_1 \leq 1 \text{ and } \min_{j \in (J^*)^c} b_j'(P)M_0(P)y > 0~.
\end{equation}

Our test uses a sample-splitting procedure in which one sample split is used to select a $y\in \mathbb R^p$ and a second split is used to test whether the inequalities in \eqref{eq:null-0-1} hold at the selected $y$.  We consider two proposals for $(J^*)^c$. Our first proposal takes $(J^*)^c$ to be those $j\in J$ for which $b_j(P)'M_0(P)$ is known deterministically -- i.e.\ for which $b_j(P)^\prime M_0(P)$ does not depend on $P$. 
In other words, we test the condition $\min_{j \in J^*} b_j(P)'M_0(P)y \le 0$ using a unit vector $y$ that is known to satisfy $\min_{j \in (J^*)^c}b_j(P)'M_0(P)y > 0$. 
We call this method the ``direct'' method in what follows.
Our second proposal sets $J^*=\{j^*\}$ for some non-random $j^*$. In this case, we test the condition $b_{j^*}(P)'M_0(P)y \le 0$ using a unit vector $y$ such that $\min_{j \in (J^*)^c}b_j(P)'M_0(P)y > 0$ holds with high probability. We call this method the ``screening'' method. In all of our examples, we set $j^* = d_1 + 1$, which is natural in settings where we perform test inversion to construct a confidence set for a scalar parameter whose null value only enters the vector $b_{d_1 + 1}(P)$; see, e.g., the examples in Section \ref{sec:examples}. 
We show via simulation in Section \ref{sec:sims} that the screening method often generates shorter confidence intervals than the direct method, at the cost of introducing an additional tuning parameter which determines the amount of ``screening'' that is performed in the first sample split. %when evaluating whether or not $\min_{j \in (J^*)^c}b_j(P)'M_0(P)y > 0$.

We next present a high-level description of the test and defer the details of the construction of its specific components to Sections \ref{sec:properties} and \ref{sec:yhat}. To construct the test, we first randomly split the data into two samples $\{Z_i\}_{i \in I_{1, n}}$, $\{Z_i\}_{i \in I_{2, n}}$ of sizes $n_1$ and $n_2$, where $I_{1, n} \cup I_{2, n} = \{1, \dots, n\}$ and $I_{1, n} \cap I_{2, n} = \emptyset$. In what follows, we always assume that $n_2 \to \infty$ as $n \to \infty$ and allow $n_1$ to be fixed for the direct method but require $n_1 \to \infty$ for the screening method. Throughout, we use the superscript $(k)$ to denote when a given quantity is a function of only the $k$th split. Using the first sample split $\{Z_i\}_{i\in I_{1,n}}$, we construct a vector $\hat y_n^{(1)}$ which represents a direction in which the weak inequality in \eqref{eq:null-0-1} appears to be ``most violated'' --- we discuss how to construct such a vector in Section \ref{sec:yhat}. 

Next, given suitable estimators $\hat{b}_{j, n}^{(2)}$ and $\hat{M}_{0, n}^{(2)}$ for $b_j(P)$ and $M_0(P)$ computed in the second sample split $\{Z_i\}_{i\in I_{2,n}}$, we define the test statistic
\begin{equation} \label{eq:test_stat}
T_n := \min_{j \in J^*}  \frac{\sqrt{n_2}(\hat b_{j, n}^{(2)})'\hat{M}_{0, n}^{(2)} \hat y_n^{(1)}}{\hat \sigma_{j,n}^{(2)}(\hat y_n^{(1)})}~,
\end{equation}
where $\hat{\sigma}_{j, n}^{(2)}(\hat y_n^{(1)})$ is an estimator for the asymptotic standard deviation of $\sqrt{n_2}(\hat b_{j, n}^{(2)})'\hat{M}_{0, n}^{(2)} \hat y_n^{(1)}$.
Finally, we set
\begin{equation} \label{eq:test}
\phi_n := 1 \{T_n > z_{1 - \alpha}\}~, 
\end{equation}
where, for $\alpha \in (0, 1)$, $z_{1 - \alpha}$ denotes the $1 - \alpha$ quantile of a standard normal distribution. 

\begin{remark}\label{rem:div0}
It may occur that for some $j\in J^*$ the asymptotic variance of $\sqrt{n_2}(\hat b_{j, n}^{(2)})'\hat M_{0,n}^{(2)}\hat y_n^{(1)}$ is zero. 
To avoid degeneracy of the corresponding standard error, 
we define $\hat\sigma_{j, n}^{(2)}(\hat y_n^{(1)})$ using a small truncation, as described in Section \ref{sec:properties}. This modification is technically motivated and typically has no effect on the test in practice. 
Indeed, if the numerator of \eqref{eq:test_stat} is negative for any $j \in J^*$, then we fail to reject for any choice of truncation. If the numerator of \eqref{eq:test_stat} is positive for all $j \in J^*$ then the choice of truncation can only induce a failure to reject if the estimated variance falls below the truncation threshold for at least one index that attains the minimum in the truncated version of $T_n$.
\end{remark}

\begin{remark}
In practice, researchers may want to reduce the uncertainty introduced by sample splitting by aggregating the test results obtained from multiple different splits of the data. This can be accomplished by appropriately aggregating the (upper bounds on) $p$-values produced by the test described in \eqref{eq:test}. To that end, we have found the exchangeable improvement to the ``twice the average'' $p$-value, as described in \cite{gasparin2025combining}, works well in simulations.
\end{remark}

\subsection{Properties of the Test} \label{sec:properties}
In this section, we present results establishing the asymptotic validity of our test and a detailed construction of the standard deviation in \eqref{eq:test_stat}. When stating our assumptions, we will suppress the superscript $(k)$, with the understanding that all assumptions stated on the entire sample will also hold when applied on the sample splits $\{Z_i\}_{I_{1, n}}$ and $\{Z_i\}_{i\in I_{2, n}}$, under suitable scaling. 

Our first assumption imposes conditions on the estimators $\hat A_{0,n}$ for $A_0(P)$ and $\hat b_{j,n}$ for $b_j(P)$ with $1\leq j \leq d_1+1$.
In its statement, $\|\cdot\|_2$ denotes the Euclidean norm, $\|\cdot\|_{2, 2}$ denotes the operator norm of a matrix when the domain and range are endowed with $\|\cdot\|_2$, and $a \vee b = \max(a, b)$ for any $a, b \in \mathbb R$.
\begin{assumption} \label{ass:prelim-estimators}
Let $\{Z_i\}_{i=1}^n$ be i.i.d.\ with marginal distribution $P \in \mathbf{P}$. Then,
\vspace{-0.1 in}
\begin{packed_enum}
\item[(a)] There are $\Psi(Z_i, P) \in \mathbb R^{p \times d_0}$ with $E_P[\Psi(Z_i, P)] = 0$, $\varphi_j(Z_i, P) \in \mathbb R^p$ with $E_P[\varphi_j(Z_i, P)] = 0$ for $1 \leq j \leq d_1 + 1$, and $a_n$ for which $a_n / \sqrt n \to 0$ such that uniformly in $P\in \mathbf P$,
\begin{align}
    \label{eq:A0_IF} \Big \| \sqrt n(\hat A_{0, n} - A_0(P)) - \frac{1}{\sqrt n} \sum_{1 \leq i \leq n} \Psi(Z_i, P) \Big \|_{2, 2} & = O_P(a_n / \sqrt n) \\
    \label{eq:B_IF} \max_{1 \leq j \leq d_1 + 1} \Big \| \sqrt n(\hat b_{j, n} - b_j(P)) - \frac{1}{\sqrt n} \sum_{1 \leq i \leq n} \varphi_j(Z_i, P) \Big \|_2 & = O_P(a_n / \sqrt n)~.
\end{align}
\item[(b)] For each $p \geq 1$, there are $1\leq K_{0, p}, K_{1, p} < \infty$ such that $\|\Psi(Z_i, P)\|_{2,2} \leq K_{0, p}$ with probability one and
\begin{align*}
& \sup_{P \in \mathbf P} \Big ( \| E_P[\Psi(Z_i, P)\Psi(Z_i, P)']\|_{2,2} \vee \|E_P[\Psi(Z_i, P)' \Psi(Z_i, P)]\|_{2,2} \\
& \hspace{6cm} \vee \max_{1 \le j \le d_1 + 1}E_P[\varphi_j(Z_i, P)'\varphi_j(Z_i, P)] \Big ) \leq K_{1, p}~.    
\end{align*}
\end{packed_enum}
\end{assumption}

\noindent Assumption \ref{ass:prelim-estimators}(a) requires our estimators for $A_0(P)$ and $b_j(P)$ for $1 \leq j \leq d_1 + 1$ to be asymptotically linear with influence functions whose moments are disciplined by Assumption \ref{ass:prelim-estimators}(b).
Assumption \ref{ass:prelim-estimators}(a) is automatically satisfied with $a_n = 0$ whenever the entries of $A_0(P)$ and $b_j(P)$ are expectations and the entries of $\hat A_{0,n}$ and $\hat b_{j,n}$ are the corresponding sample means.

Our second assumption imposes boundedness and non-degeneracy conditions on $A_0(P)$ and $b_j(P)$.
In its statement, $\bar s(A_0(P))$ and $\underline{s}(A_0(P))$ denote the maximum and minimum singular value of $A_0(P)$.

\begin{assumption}\label{ass:singular-value}
$A_0(P)$ and $b_j(P)$ are such that
\vspace{-0.1 in}
\begin{packed_enum}
    \item[(a)] $\sup_{P \in \mathbf P} \bar s(A_0(P)) \leq \bar s_p$ for some $\bar s_p$ satisfying $1\leq \bar s_p < \infty$.
    \item[(b)] $\inf_{P \in \mathbf P} \underline s(A_0(P)) \geq \underline s > 0$ for some $\underline s$ not depending on $p$.
    \item[(c)] $\sup_{P \in \mathbf P} \max_{1 \leq j \leq d_1 + 1} \|b_j(P)\|_2 \leq K_{2, p}$ for some $K_{2,p}$ satisfying $1\leq K_{2,p} < \infty$.
\end{packed_enum}
\end{assumption}
\noindent Assumption \ref{ass:singular-value}(a) requires the maximum singular value of $A_0(P)$ is bounded above uniformly in $\mathbf{P}$, with the bound possibly depending on $p$. Assumption \ref{ass:singular-value}(b) ensures that $A_0(P)'A_0(P)$ is bounded away from degeneracy. This rules out the set of rank deficient matrices $\mathbf{C}^{\rm RD}$ in our characterization of the closure of the null hypothesis, as defined in Theorem \ref{thm:closure}. Assumption \ref{ass:singular-value}(c) requires that the Euclidean norm of $b_j(P)$ is bounded uniformly in $P \in \mathbf P$, with the bound possibly depending on $p$. 

We note that Assumption \ref{ass:singular-value}(b) can be dropped if we let $n_1 \to \infty$ and apply the test $\phi_n$ in \eqref{eq:test} only when $\underline s(\hat{A}^{(1)}_{0,n}) > \tau$ for some pre-specified small value $\tau > 0$ (and do not reject the null hypothesis when $s(\hat{A}^{(1)}_{0,n}) \leq \tau$).
We emphasize, however, that even if we maintain Assumption \ref{ass:singular-value}(b), our assumptions impose no requirements on the rank of $A_1(P)$.  In contrast, the assumptions underlying \cite{cox2025testing} and \cite{goff2025inference} implicitly restrict the ranks of both $A_0(P)$ and $A_1(P)$.  

Assumptions \ref{ass:prelim-estimators} and \ref{ass:singular-value} are instrumental in obtaining an asymptotic expansion for our test statistic. 
In particular, letting $A_0^\dagger(P)$ denote the Moore-Penrose pseudoinverse of $A_0(P)$, we will show that for every $1\leq j \leq d_1+1$ the vector $\sqrt n(\hat b_{j, n}' \hat M_{0, n} - b_j(P)' M_0(P))'$ is asymptotically linear with influence function
\begin{equation} \label{eq:xi}
	 \xi_j(Z_i, P) := M_0(P)\varphi_j(Z_i, P) - M_0(P)\Psi(Z_i, P)A_0^{\dagger}(P) b_j(P) - A_0^\dagger(P)' \Psi(Z_i, P)^\prime M_0(P) b_j(P)~.
\end{equation}

Our third assumption imposes moment restrictions on the influence function $\xi_j(Z,P)$.
\begin{assumption} \label{ass:bdd-moments-IF}
There is a constant $K_\xi < \infty$ not depending on $p$ such that the following holds: $$\sup_{P \in \mathbf P} \sup_{\|y\|_1\leq 1}\max_{1\leq j \leq d_1+1} E_P[|\xi_j(Z_i, P)'y|^{3}]^{1/3} \leq K_{\xi}.$$
\end{assumption}
\noindent Assumption \ref{ass:bdd-moments-IF} ensures that we are able to couple an influence function for our test statistic to a Gaussian random variable uniformly in $P\in \mathbf P$.
The requirement of Assumption \ref{ass:bdd-moments-IF} is satisfied, for example, if the third moments of the entries of $\xi_j(Z,P)$ are uniformly bounded across $1\leq j\leq d_1+1$ and $P\in \mathbf P$. 

Our fourth assumption ensures that the inequalities not examined by our test statistic (i.e., those in $(J^*)^c$) are indeed positive when evaluated at a $\hat y^{(1)}_n$ not equal to zero, as required by the characterization of the null hypothesis in \eqref{eq:null-0-1}.
We describe methods to construct such a $\hat y_n^{(1)}$ in Section \ref{sec:yhat} below.

\begin{assumption}\label{ass:y_hat}
For $\mathcal{Y}(P;J^*) := \{y \in \mathbb R^p : \|y\|_1 \leq 1 \text{ and } b_j^\prime(P)M_0(P)y > 0 \text{ for all } j\in (J^*)^c\}$, we have
\[ \lim_{n \to \infty} \inf_{P \in \mathbf P_0} P \big \{ \{\hat y_n^{(1)} \in \mathcal{Y}(P;J^*)\} \cup \{\hat y_n^{(1)} = 0 \} \big \} = 1~. \]
\end{assumption}

Assumptions \ref{ass:y_hat} automatically holds, for instance, for the direct method which either sets $J^*$ to equal $J$ (so $(J^*)^c$ is empty) or $(J^*)^c$ to only contains coordinates $j$ for which $b_j^\prime M_0(P)$ is known.
In this case, $\hat y_n^{(1)}$ can be chosen to belong to $\mathcal{Y}(P;J^*)$ with probability one for any $n_1$, and our asymptotics only require that $n_2 \to \infty$.
In contrast, the screening method intuitively conducts a pre-test in the first fold to ensure that $\hat y_n^{(1)}\in \mathcal Y(P;J^*)$ with high probability. 
In this case, our asymptotics therefore require $n_1\to \infty$ in order for Assumption \ref{ass:y_hat} to be satisfied.
We also note that Assumption \ref{ass:y_hat} allows $\hat y_n^{(1)}$ to equal zero when it does not belong to $\mathcal Y(P;J^*)$.
This flexibility is important because our test never rejects when $\hat y_n^{(1)}$ is zero (since then $T_n =0$; see \eqref{eq:test_stat}). 
Therefore, setting $\hat y_n^{(1)} =0$ allows us to decide not to reject after examining the first sample split --- e.g., if we fail to reject in the screening method pre-test.

We now turn to the construction of the standard error $\hat \sigma_{j,n}(y)$ employed in the construction of our test statistic. 
To this end, we let $\mvec$ denote the vec-operator and $\otimes$ denote the Kronecker product \citep[see, for instance, Chapter 2 of][]{magnus2019matrix}. 
We further define the asymptotic covariance matrix 
\[ V_j(P) := \var_P \bigg [ \begin{pmatrix}\mvec \Psi(Z_i, P) \\ \varphi_j(Z_i, P)\end{pmatrix} \bigg ] \]
for our estimator $\hat A_{0,n}$ and $\hat b_{j,n}$ and let $\hat V_{j, n}$ denote an estimator for $V_j(P)$.
For each fixed $y \in \mathbb R^p$ and $1\leq j \leq d_1+1$, we apply the Delta method to the function $(A_0(P), b_j(P)) \mapsto b_j(P)' M_0(P) y$ to obtain the asymptotic variance of the estimator $\hat b_{j,n}^\prime \hat M_{0,n}y$. 
Accordingly, defining the gradient
\[ D_j(P; y) := \begin{pmatrix}- \Big ( A_0^{\dagger}(P) y \otimes M_0(P) b_j(P) + A_0^{\dagger}(P) b_j(P) \otimes M_0(P) y \Big ) \\ M_0(P) y  \end{pmatrix}~, \]
it is possible to show that the asymptotic variance of $\hat b_{j,n}^\prime \hat M_{0,n}y$ equals $\sigma_j^2(P; y) := D_j(P; y)' V_j(P) D_j(P; y)$. 
By analogy, for $\hat A_{0,n}^\dagger$ the Moore-Penrose pseudoinverse of $\hat A_{0,n}$, we estimate  $D_j(P; y)$ by setting
\[ \hat D_{j, n}(y) = \begin{pmatrix} - \big ( \hat A_{0, n}^{\dagger} y \otimes \hat M_{0, n} \hat b_{j, n} + \hat A_{0, n}^{\dagger} \hat b_{j, n} \otimes \hat M_{0, n} y \big ) \\ \hat M_{0, n} y \end{pmatrix}~. \]
As an estimator for the asymptotic variance we then set $\hat{\sigma}^2_{j, n}(y) = (\hat D_{j, n}(y)' \hat{V}_{j, n} \hat D_{j, n}(y)) \vee \underline{\sigma}^2$ for $1 \leq j \leq d_1 + 1$, where, as previously discussed in Remark \ref{rem:div0}, $\underline{\sigma} > 0$ is a small positive constant which we include to address potential (near) degeneracies in $\sigma^2_j(P;y)$. 

Our fifth assumption imposes that our estimator $\hat V_{j,n}$ is suitably uniformly consistent.

\begin{assumption} \label{ass:variance-estimator}
$\|\hat V_{j, n} - V_j(P)\|_{2, 2} = o_P(K_{2,p}^{-2})$ uniformly in $P \in \mathbf P$ and $1 \leq j \leq d_1 + 1$.
\end{assumption}
\noindent Assumption \ref{ass:variance-estimator} enables us to show that the standard errors $\hat \sigma_{j,n}(y)$ are suitably uniformly consistent in both $P\in \mathbf P$ and $1\leq j \leq d_1+1$.
It requires that $\hat V_{j,n}$ converge to $V_j(P)$ at a rate faster than $K_{2,p}^2$, where $K_{2,p}$ depends on $p$ and is specified in Assumption \ref{ass:singular-value}(c).
We state Assumption \ref{ass:variance-estimator} as a high level condition because the structure of $V_j(P)$ is dictated by the influence functions of our estimators (as introduced in Assumption \ref{ass:prelim-estimators}).
In applications for which our estimators are sample means, and hence $\hat V_{j,n}$ are sample covariance matrices, sufficient conditions for Assumption \ref{ass:variance-estimator} are readily available from the literature; see, e.g., Chapter 6 in \cite{wainwright2019high}.

Our final assumption imposes conditions on the rates of convergence of our moment and error bounds.

\begin{assumption} \label{ass:rates}
The following rate restrictions hold as $n \to \infty$:
\vspace{-0.1 in}
\begin{packed_enum}
    \item[(a)] $\displaystyle K_{2,p}(K_{0,p}\vee K_{1,p})\log(1+p) = o(\sqrt {n_2})$.
    \item[(b)] $\bar s_p^2(K_{0,p}\vee K_{1,p})\log(1+p) = o(n_2)$.
    \item[(c)] $\displaystyle K_{2,p} a_{n_2} = o(\sqrt {n_2})$.
\end{packed_enum}  
\end{assumption}
In Remark \ref{rem:rates} below, we discuss high-level sufficient conditions which guarantee Assumption \ref{ass:rates} holds, as well as how these conditions differ when $A_0(P)$ needs to be estimated versus when it is known (or does not exist). 
At this point we emphasize, however, that Assumptions \ref{ass:rates} are automatically satisfied in asymptotic regimes in which $p$ and $d_1$ are fixed.
Moreover, we note that Assumption \ref{ass:rates} only restricts the dimension $d_1$ through Assumptions \ref{ass:prelim-estimators} and \ref{ass:singular-value}(c), which impose that the linearization error, the second moment of the influence function $\varphi_j(Z,P)$, and the norm of $\|b_j(P)\|_2$ be bounded uniformly in $1\leq j \leq d_1$. 
If these bounds do not grow with $d_1$, then Assumption \ref{ass:rates} in fact leaves the dimension $d_1$ unrestricted.
As we discuss in the next section, however, certain approaches for selecting $\hat y_n^{(1)}$ may impose restrictions on the dimension $d_1$ relative to the sample size in the first split $\{Z_i\}_{i\in I_{n,1}}$.

Our next, main, result establishes that our proposed test is uniformly consistent in level over $\mathbf P_0$.

\begin{theorem} \label{thm:test}
Suppose Assumptions \ref{ass:prelim-estimators}--\ref{ass:rates} hold. Then, for any $\alpha < 0.5$ it follows that
\[ \limsup_{n \to \infty} \sup_{P \in \mathbf P_0} E_P[\phi_n] \leq \alpha~. \]
\end{theorem}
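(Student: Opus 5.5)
The plan is to condition on the first split and control everything through the second split only. By sample splitting, $\hat y_n^{(1)}$ is independent of $\{Z_i\}_{i \in I_{2,n}}$. So it suffices to bound $P\{T_n > z_{1-\alpha} \mid \hat y_n^{(1)} = y\}$ uniformly over $P \in \mathbf P_0$ and over deterministic $y$ in $\mathcal Y(P;J^*) \cup \{0\}$. By Assumption \ref{ass:y_hat}, the event that $\hat y_n^{(1)}$ lies outside this set has probability tending to zero uniformly. On $\{\hat y_n^{(1)} = 0\}$ we have $T_n = 0 < z_{1-\alpha}$ because $\alpha < 0.5$, so the test does not reject. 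For $y \in \mathcal Y(P;J^*)$ and $P \in \mathbf P_0 \subseteq \bar{\mathbf C}_0$-preimage (Theorem \ref{thm:closure}), the equivalence \eqref{eq:null-0-1} gives some $j_0 \in J^*$, depending on $(P,y)$, with $b_{j_0}(P)'M_0(P)y \le 0$. Since $T_n$ is a minimum over $J^*$,
\[
P\{T_n > z_{1-\alpha}\} \le P\Big\{ \tfrac{\sqrt{n_2}(\hat b^{(2)}_{j_0,n})'\hat M^{(2)}_{0,n} y}{\hat\sigma^{(2)}_{j_0,n}(y)} > z_{1-\alpha}\Big\}.
\]
After recentring by $\sqrt{n_2}\, b_{j_0}(P)'M_0(P)y \le 0$, this reduces to a one-dimensional, one-sided t-statistic bound for a single index.

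The core step is a uniform linearization: uniformly over $\|y\|_1 \le 1$, $1 \le j \le d_1+1$, and $P \in \mathbf P$,
\[
\sqrt{n_2}\big((\hat b^{(2)}_{j,n})'\hat M^{(2)}_{0,n} - b_j(P)'M_0(P)\big)y = \frac{1}{\sqrt{n_2}}\sum_{i \in I_{2,n}} \xi_j(Z_i,P)'y + o_P(1).
\]
I would prove this by a perturbation expansion of the projection $M_0 = I - A_0 A_0^\dagger$. Assumption \ref{ass:singular-value}(b) keeps $\underline s(\hat A_{0,n})$ bounded away from zero with high probability, using Weyl's inequality and a matrix Bernstein bound $\|\frac1n\sum\Psi\|_{2,2} = O_P(\sqrt{(K_{0,p}\vee K_{1,p})\log(1+p)/n})$ from Assumption \ref{ass:prelim-estimators}(b). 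On that event the projection is smooth, and its first derivative produces the two $\Psi$-terms in \eqref{eq:xi}. The second-order remainder is of order $\sqrt{n_2}\,\|\hat A_0 - A_0\|_{2,2}^2\,\bar s_p^2 K_{2,p}$ plus cross terms $\sqrt{n_2}\,\|\hat A_0 - A_0\|\,\|\hat b_j - b_j\|$, together with the linearization errors $K_{2,p}a_{n_2}/\sqrt{n_2}$. These vanish by Assumption \ref{ass:rates}(a)--(c). Because $\|y\|_2 \le \|y\|_1 \le 1$, operator-norm bounds suffice, and no union bound over $y$ is required.

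Next, I would establish consistency of the standard errors, namely $|\hat\sigma_{j,n}(y) - \sigma_j(P;y)\vee\underline\sigma| = o_P(1)$ uniformly. This follows from Assumption \ref{ass:variance-estimator} together with $\|\hat D_{j,n}(y) - D_j(P;y)\|_2 = o_P(1)$, which again uses the bounds on $\hat A_0^\dagger$ and $\|D_j\|_2 \lesssim K_{2,p}$. This explains the $K_{2,p}^{-2}$ rate in Assumption \ref{ass:variance-estimator}. The truncation at $\underline\sigma$ keeps the denominator bounded below.

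Finally, I would apply a Berry--Esseen bound conditional on the first split. Writing $\xi_j(Z_i,P)'y$ as scalar mean-zero i.i.d. variables with variance $\sigma^2_j(P;y)$ and third moment bounded by $K_\xi^3$ (Assumption \ref{ass:bdd-moments-IF}), I would split into two cases.
\begin{itemize}
\item If $\sigma_j(P;y) \ge \epsilon$, Berry--Esseen gives a normal approximation with error $O(K_\xi^3/(\epsilon^3\sqrt{n_2}))$. The probability is then at most $1 - \Phi\big(z_{1-\alpha}\,(\sigma\vee\underline\sigma)/\sigma\big) + o(1) \le \alpha + o(1)$, since $z_{1-\alpha} > 0$.
\item If $\sigma_j(P;y) < \epsilon$, Chebyshev shows the numerator is $O_P(\epsilon)$ while the denominator is at least $\underline\sigma$, so the rejection probability is small once $\epsilon$ is small relative to $\underline\sigma z_{1-\alpha}$.
\end{itemize}
I expect the main obstacle to be the uniform linearization step. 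Tracking the dimension-dependent constants in the perturbation of $M_0$ so that they match exactly Assumption \ref{ass:rates} is delicate. I would handle it with explicit identities for $\hat M_0 - M_0$ in terms of $\hat A_0 - A_0$ and $(\hat A_0'\hat A_0)^{-1}$.
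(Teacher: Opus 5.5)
Your architecture is the paper's. You discard $\{\hat y_n^{(1)}\notin\mathcal Y(P;J^*)\cup\{0\}\}$ via Assumption~\ref{ass:y_hat} and use $T_n=0<z_{1-\alpha}$ when $\hat y_n^{(1)}=0$. You then pick an index in $J^*$ with $b_j(P)'M_0(P)y\le 0$, bound by a supremum over deterministic $y$ using independence of the splits, linearize with influence function $\xi_j$, replace $\hat\sigma_{j,n}$ by $\sigma_j\vee\underline\sigma$, and finish with a Gaussian approximation. The only genuinely different step is the last one. You use Berry--Esseen when $\sigma_j(P;y)\ge\epsilon$ and Chebyshev otherwise. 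The paper (Lemma~\ref{lem:coupling}) instead uses a Strassen-type coupling whose error $(K_\xi/\delta\vee 1)^3C\log n/\sqrt n$ does not depend on $\sigma_j(P;y)$. Small or zero variances are then absorbed by requiring $z_{1-\alpha}-\epsilon-3\delta/\underline\sigma>0$, with no case split. Your version is more elementary and works just as well, since $K_\xi$ does not depend on $p$.

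There is a gap, however: two of your dimension-dependent claims do not close under Assumption~\ref{ass:rates} as stated, and the theorem is precisely about those rates.

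\emph{The remainder.} The second-order remainder must not carry $\bar s_p^2$. Since $\|\hat A_0-A_0\|_{2,2}^2=O_P((K_{0,p}\vee K_{1,p})\log(1+p)/n_2)$, your bound is of order $\bar s_p^2K_{2,p}(K_{0,p}\vee K_{1,p})\log(1+p)/\sqrt{n_2}$. Parts (a) and (b) only control $K_{2,p}(\cdot)/\sqrt{n_2}$ and $\bar s_p^2(\cdot)/n_2$ separately, so this need not vanish. For example, take $K_{2,p}\asymp n^{0.4}$, $\bar s_p^2\asymp n^{0.5}$, and $K_{0,p}\vee K_{1,p}$ bounded, up to logarithmic factors. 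Such condition-number factors come from naive expansions of $\hat A_0(\hat A_0'\hat A_0)^{-1}\hat A_0'$. Lemma~\ref{lem:lin_error} avoids them by using $\|\tilde M_0-M_0\|_{2,2}\le\min\{\|\tilde A_0^\dagger\|_{2,2},\|A_0^\dagger\|_{2,2}\}\|\tilde A_0-A_0\|_{2,2}$ and Wedin's bound for $\tilde A_0^\dagger-A_0^\dagger$. This gives a remainder of order $\sqrt{n_2}\|\hat A_0-A_0\|_{2,2}^2K_{2,p}/\underline s^2$. In that argument $\bar s_p$ enters only through keeping the smallest singular value of $\hat A_0$, and of the mean-value point, away from zero.

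\emph{The variance.} The bound $\|\hat D_{j,n}(y)-D_j(P;y)\|_2=o_P(1)$ does not give variance consistency. The terms $2D_j'V_j(\hat D_{j,n}-D_j)$ and $(\hat D_{j,n}-D_j)'V_j(\hat D_{j,n}-D_j)$ involve $\|V_j(P)\|_{2,2}$, which no assumption bounds. Assumption~\ref{ass:prelim-estimators}(b) controls $E[\Psi\Psi']$ and $E[\Psi'\Psi]$, not $\var(\mvec\Psi)$, whose norm can be of order $d_0K_{1,p}$. What you need is $\|V_j^{1/2}(P)(\hat D_{j,n}(y)-D_j(P;y))\|_2=o_P(1)$, combined with $\|V_j^{1/2}D_j\|_2=\sigma_j(P;y)$ being bounded. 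Lemma~\ref{lem:var_rate} obtains this by writing $(\hat D_{j,n}-D_j)'(\mvec\Psi,\varphi_j)$ as a sum of bilinear forms $v'\Psi u$ plus $y'(\hat M_{0,n}-M_0)\varphi_j$. It then uses $E[(v'\Psi u)^2]\le K_{1,p}\|u\|_2^2\|v\|_2^2$, which gives the rate $\sqrt{K_{1,p}}K_{2,p}\sqrt{(K_{0,p}\vee K_{1,p})\log(1+p)/n}$, covered by part (a).

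With these two repairs your argument goes through.
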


\begin{remark} \label{rem:rates}
While the constants specified by Assumptions \ref{ass:prelim-estimators} and \ref{ass:singular-value} are application specific, in many instances we can expect the bounds $K_{0, p} \vee K_{1, p} \lesssim p$, $K_{2, p} \vee \overline{s}_p \lesssim \sqrt{p}$,  and $a_n \lesssim p$ (or $a_n =0$) to hold.
Under these conditions, Assumption \ref{ass:rates} holds provided that $p^3 = o(n_2)$ (up to logs), while Assumption \ref{ass:variance-estimator} can also be shown to hold when $p^3 = o(n_2)$ (up to logs) under suitable moment restrictions. 
These rates simplify when the matrix $A_0(P)$ does not exist or does not depend on $P$.
In this case, an inspection of the proof of Theorem \ref{thm:test} reveals that Assumptions \ref{ass:rates}(a)(b) are no longer needed, Assumption \ref{ass:rates}(c) can be weakened to $a_n = o(\sqrt{n_2})$, and Assumption \ref{ass:variance-estimator} can be weakened by replacing $K_{2,p}$ with one. 
Thus, when $A_0(P)$ does not exist or is known, Theorem \ref{thm:test} requires $p =o(n_2)$ and $a_n = o(\sqrt{n_2})$, with the latter requirement automatically holding when there is no linearization error or being implied by $p^2 = o(n_2)$ when $a_n \lesssim p$.
\end{remark}

\subsection{Procedure for selecting $\hat y_n^{(1)}$} \label{sec:yhat}
In this section we describe a procedure for selecting $\hat y_n^{(1)}$ that satisfies Assumption \ref{ass:y_hat}.  Given suitable estimators $\hat{b}_{j, n}^{(1)}$ and $\hat{M}_{0, n}^{(1)}$, we select $\hat y_n^{(1)}$ as the solution to the following optimization problem:
\begin{equation} \label{eq:y_star}
\begin{aligned}
\hat y_n^{(1)} \in & \argmax_{\|y\|_1 \le 1}\min_{j \in J^*}~  \frac{\sqrt{n_1}(\hat b_{j, n}^{(1)})' \hat{M}_{0, n}^{(1)} y}{\hat{\omega}_{j, n}} \\
&\text{ subject to } \sqrt{n_1}(\hat{b}_{j, n}^{(1)})'\hat{M}_{0,n}^{(1)}y \ge \hat{\omega}_{j, n} \text{ for all } j \in (J^*)^c~,
\end{aligned}
\end{equation}
where $\hat{\omega}_{j, n} > 0$ are positive scaling parameters that we specify below. 
If the optimization problem in \eqref{eq:y_star} is found to be infeasible, then we set $\hat y_n^{(1)} = 0$. 
%If $\eqref{eq:y_star}$ returns a vector for which $0 < \|\hat{y}^{(1)}_n\|_1 < 1$, then we can normalize the vector to ensure it has unit length. 
In Appendix \ref{sec:LP}, we explain how this optimization problem can be re-formulated as a linear program. Note that the specific choice of $\hat\omega_{j,n}$ for $j \in J^*$ will not affect the validity of the procedure, although these should be carefully selected to ensure the test has good power; see the discussion following Lemma \ref{lem:y_hat} for details. In contrast, as we demonstrate in Lemma \ref{lem:y_hat}, the choice of $\hat \omega_{j,n}$ for $j \in (J^*)^c$ is relevant for the screening method. In the case of the direct method, that is, when $(J^*)^c$ contains only those $j$ for which $b_j(P)'M_0(P)$ is known, the specific choice of $\hat{\omega}_{j, n}$ for $j \in (J^*)^c$ is immaterial in practice: we can simply set the constraints in \eqref{eq:y_star} to ensure that $b_j(P)'M_0(P)y$ is strictly positive for every $j \in (J^*)^c$.  

\begin{lemma} \label{lem:y_hat}
Let $\hat{y}_n^{(1)}$ be defined as in \eqref{eq:y_star} and $\hat{\omega}_{j, n} > 0$ for all $j \in J$.  Then, the following statements hold:
\vspace{-0.3 in}
\begin{packed_enum}
    \item[(a)] Suppose that $b_j(P)'M_0(P)$ is known for all $j \in (J^*)^c$.
    % and $P\{\hat \omega_{j,n} > 0 \text{ for all } j\in (J^*)^c\}=1$ for all $P\in \mathbf P_0$.
    Then, it follows that Assumption \ref{ass:y_hat} holds.
    
    \item[(b)] Suppose Assumptions \ref{ass:prelim-estimators}, \ref{ass:singular-value}, \ref{ass:bdd-moments-IF} and \ref{ass:rates} hold, $K_{1,p}(K_{0, p} \vee K_{1, p})\log (1 + p) d_1^{1/3} = o(np^{2/3})$, and $\max_{j \in (J^*)^c} \{(pd_1)^{1/3}/\hat \omega_{j,n}\}= o_P(1)$ uniformly in $P\in \mathbf P$. Then, it follows that Assumption \ref{ass:y_hat} holds.
    
    \item[(c)] Suppose Assumptions \ref{ass:prelim-estimators}, \ref{ass:singular-value}, \ref{ass:bdd-moments-IF} and \ref{ass:rates} hold, and that the following moment restrictions hold
    \begin{equation}\label{lem:y_hat:disp}
    \sup_{P\in \mathbf P} E_P\left[\max_{1\leq j \leq d_1+1} \|\xi_j(Z,P)\|_\infty^2\right] \leq C_{\xi,p}^2 \hspace{0.4 in} \sup_{P\in \mathbf P} E_P\left[\max_{1\leq j \leq d_1+1}\|\varphi_j(Z,P)\|_2^2\right] \leq C_{\varphi,p}^2~.
    \end{equation}
    If $pC_{\varphi,p}^2\log^2(p+d_1)(K_{0,p}\vee K_{1,p}) = o(n)$ and $\max_{j \in (J^*)^c} \{C_{\xi,p}\sqrt{\log(p+d_1)}/\hat \omega_{j,n}\}= o_P(1)$ uniformly in $P\in \mathbf P$, then it follows that Assumption \ref{ass:y_hat} holds.
\end{packed_enum}
\end{lemma}

Part (a) of Lemma \ref{lem:y_hat} formally states that for the direct method, the sole constraint on the weights $\hat {\omega}_{j,n}$ is that they be positive.
Parts (b) and (c) specify the requirements on the weights $\hat \omega_{j,n}$ when the coordinates $j\in (J^*)^c$ may be such that $b_j(P)^\prime M_0(P)$ depends on $P$, as in the screening method.
In particular part (b) shows that Assumption \ref{ass:y_hat} is satisfied under rate restrictions on the growth of $d_1$ and that $\hat \omega_{j,n}$ diverge to infinity faster than $(pd_1)^{1/3}$.
As we show in part (c), however, these restrictions can be considerably weakened provided we strengthen our moment conditions by assuming that \eqref{lem:y_hat:disp} holds.
Under such a requirement, part (c) weakens the rate restrictions on $d_1$ and the rate at which $\hat {\omega}_{j,n}$ must diverge to be logarithmic.

Although the conditions of Lemma \ref{lem:y_hat} are satisfied for a wide range of choices of $\hat \omega_{j,n}$, careful choices of these weights should be used to ensure that our test has good power properties. 
First, the weights $\hat{\omega}_{j, n}$ for $j \in J^*$ should be selected in a way that ``appropriately weights'' the degree of violation of each component. For this we set $\hat \omega_{j,n} = \hat \sigma_{j,n}^{(1)}(\hat y_{0,n})$ for $\hat \sigma_{j,n}^{(1)}(y)$ our (truncated) estimate of the asymptotic standard deviation of $\sqrt{n_1}(\hat b_{j,n}^{(1)})^\prime \hat M_{0,n} y$ evaluated at $\hat y_{0,n}$. 
To maintain the linear structure of the optimization problem, we employ a preliminary direction $\hat y_{0,n}$ computed by solving the linear program
\begin{equation} \label{eq:y_prelim}
\begin{aligned}
\hat{y}_{0,n} \in & \argmax_{\|y\|_1 \le 1}\min_{j \in J^*}~  \sqrt{n_1}(\hat b_{j, n}^{(1)})' \hat{M}_{0, n}^{(1)} y \\
&\text{ subject to } \sqrt{n_1}(\hat{b}_{j, n}^{(1)})'\hat{M}_{0,n}^{(1)}y \ge 0 \text{ for all } j \in (J^*)^c~.
\end{aligned}
\end{equation}
The weights $\hat \omega_{j,n}$ for $j \in (J^*)^c$ should further be selected so that the assumptions of Lemma \ref{lem:y_hat} are satisfied and so that, under the alternative, \eqref{eq:y_star} is feasible with probability tending to one: for this we specify $\hat \omega_{j,n}$ to satisfy $\hat \omega_{j,n} = o_P(\sqrt{n_1})$. To this end, we've found that in simulations it works well to set $\hat \omega_{j,n} = c_n \cdot \hat \sigma_{j,n}^{(1)}(\hat y_{0,n})$ for some sequence $c_n$ diverging to infinity. 
We have found that the following choices work well in practice: in a low dimensional regime where $p$ and $d_1$ could be considered fixed, we set $c_n = \sqrt{\log(\log(n_1))}$; in a high dimensional regime where $p$ or $d_1$ diverges with sample size, we set $c_n = \sqrt{\log(\log(\log(n_1)))\times \log(p + d_1)}$. 

\begin{remark}
In the case of the direct method, in which $b_j(P)'M_0(P)$ is known for all $j \in (J^*)^c$, it is straightforward to verify that Assumption \ref{ass:y_hat} is in fact satisfied for any $\hat y_n^{(1)}$ such that $$\sqrt{n_1}(\hat{b}_{j, n}^{(1)})'\hat{M}_{0,n}^{(1)}y \ge \eta \text{ for all } j \in (J^*)^c$$ for some $\eta > 0$, not just $\hat y_n^{(1)}$ defined in \eqref{eq:y_star} and discussed in Lemma \ref{lem:y_hat}.
\end{remark}

\section{Simulations}\label{sec:sims}
In this section, we illustrate the finite-sample performance of our procedure via simulation. We consider three distinct settings, each inspired by earlier related papers.
In all designs, we examine the performance of the direct and the screening method with $n_1 = n_2 = n/2$.
In all cases we set $\hat \omega_{j,n}$ for $j \in J^*$ to equal $\hat \sigma^{(1)}_{j,n}(\hat y_{0n})$ for $\hat y_{0,n}$ as defined in \eqref{eq:y_prelim} and $\hat \omega_{j,n}$ for $j \in (J^*)^c$ to equal $c_n \cdot \hat \sigma^{(1)}_{j,n}(\hat y_{0n})$ with $c_n = \sqrt{\log(\log(\log(n_1)))\times \log(p + d_1)}$.

\subsection{\cite{cox2025testing}} \label{sec:css}
We first consider the simple one-sided model described in \cite{cox2025testing}. Let $C := (C_1, \ldots C_H)'$ and $X := (X_1, \ldots, X_H)'$ be random vectors and let $H \ge 0$ index the number of inequalities to be considered. In our notation, the design can be described as $A_0(P) = \nu(P) := (E_P[C_1], \dots, E_P[C_H])'$, $A_1(P) = \mathbf{I}_H$, $\beta(P) = -\mu(P) - \mathbf v \theta$, where $\mu(P) := (E_P[X_1], \ldots, E_P[X_H])'$ and $\mathbf v := (1, 1, 0, \dots, 0)' \in \mathbb R^H$. It can be shown that the identified set of $\theta$ is $(-\infty, 0]$. Under $P$, $(C,X)$ are distributed according to 
\begin{align*}
X &\sim N(\mu(P), I_H),\\
C   &\sim N(\nu(P), 2 I_H),
\end{align*}
with $\mu(P) = (-1, 1, 1, \ldots, 1)'$, $\nu(P) = (1, -1, -1, \ldots, -1)'$. Accordingly, given a random sample from $(X,C)$, $\hat{A}_{0,n}$ and $\hat{b}_{j,n}$ are computed by taking sample averages.

Figures \ref{fig:css} and \ref{fig:css2} present the rejection probabilities from $1,000$ draws for both of our proposed tests for the null hypothesis that $P \in \mathbf P_0$ at a $5\%$ significance level, over a grid of values of $\theta$ and across different choices of dimension $H$ and sample size $n$ (the direct method is represented by the solid line and the screening method by the dashed line). We see that both the direct and screening methods control size in all cases within the identified set for $\theta$. Outside of the identified set, the rejection probabilities for the screening method are slightly larger than for the direct method, although the differences become negligible for large $H$ and/or large $n$.

\begin{figure}[!htbp]
\centering

% ---------------- Row 1: J=3 ----------------
\begin{subfigure}[t]{0.4\textwidth}
  \centering
  \includegraphics[width=\linewidth]{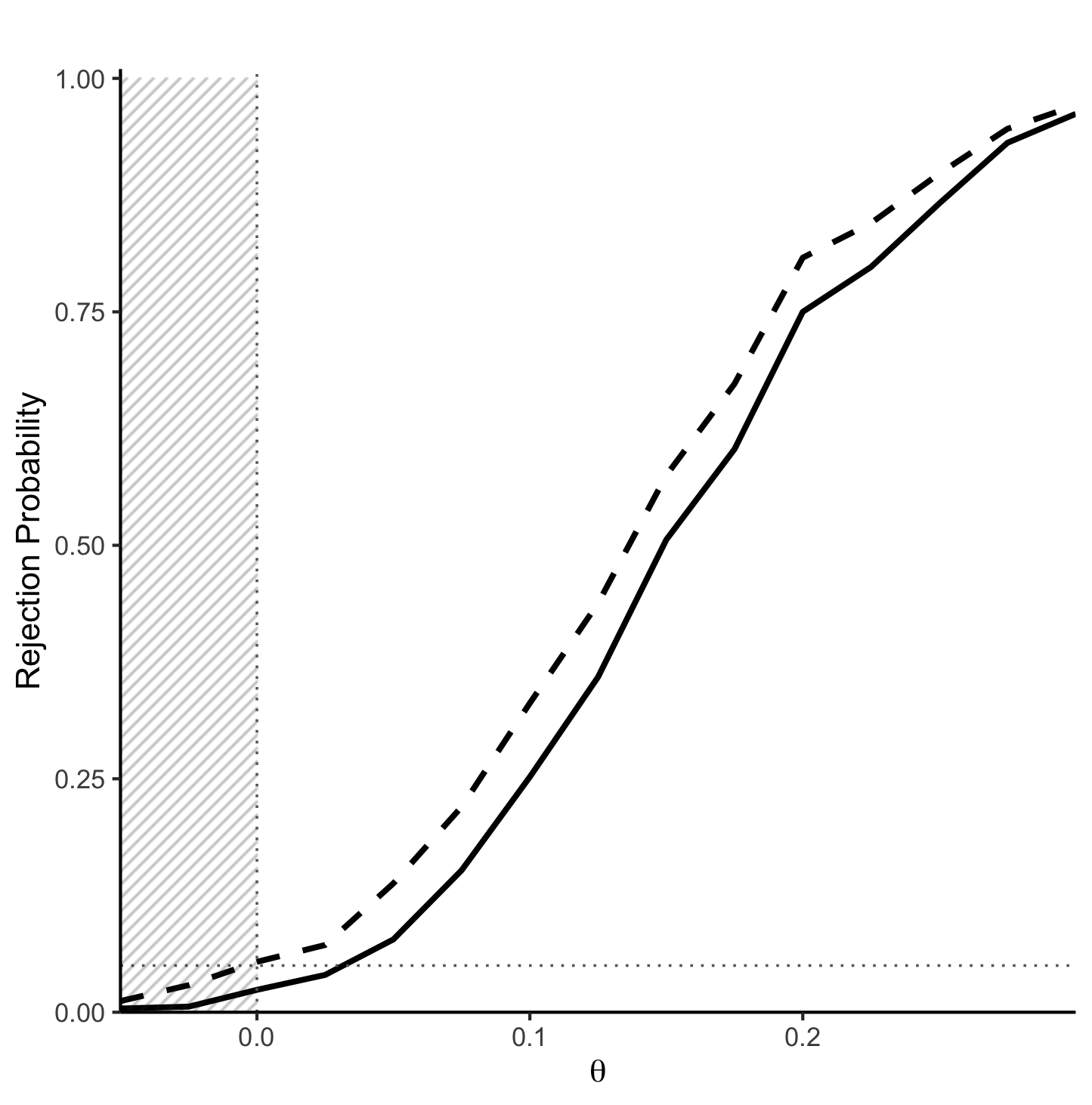}
  \caption{$H=3,\ n=500$}
\end{subfigure}\hfill
\begin{subfigure}[t]{0.4\textwidth}
  \centering
  \includegraphics[width=\linewidth]{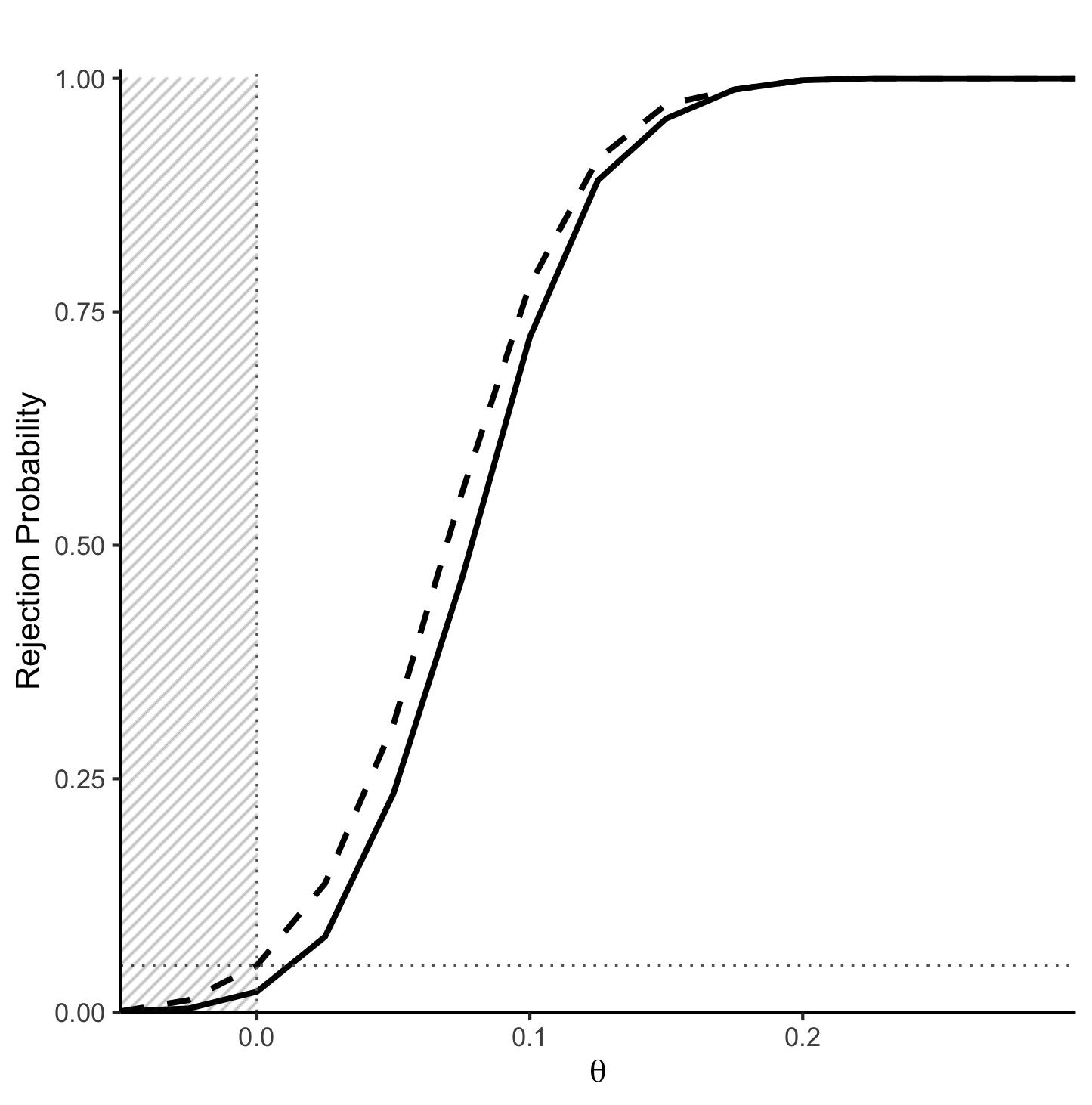}
  \caption{$H=3,\ n=2000$}
\end{subfigure}

\vspace{0.6em}

% ---------------- Row 2: J=10 ----------------
\begin{subfigure}[t]{0.4\textwidth}
  \centering
  \includegraphics[width=\linewidth]{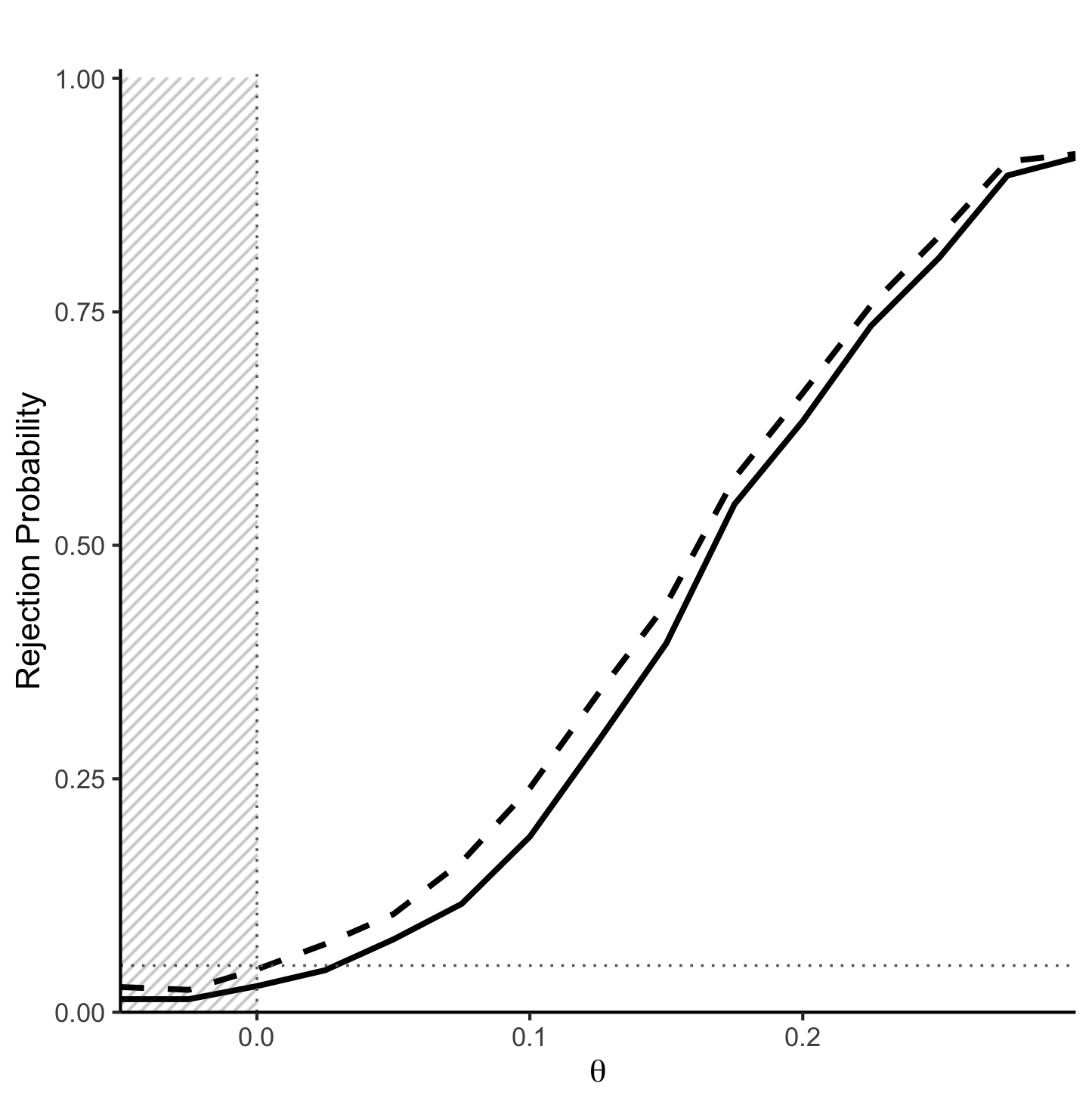}
  \caption{$H=10,\ n=500$}
\end{subfigure}\hfill
\begin{subfigure}[t]{0.4\textwidth}
  \centering
  \includegraphics[width=\linewidth]{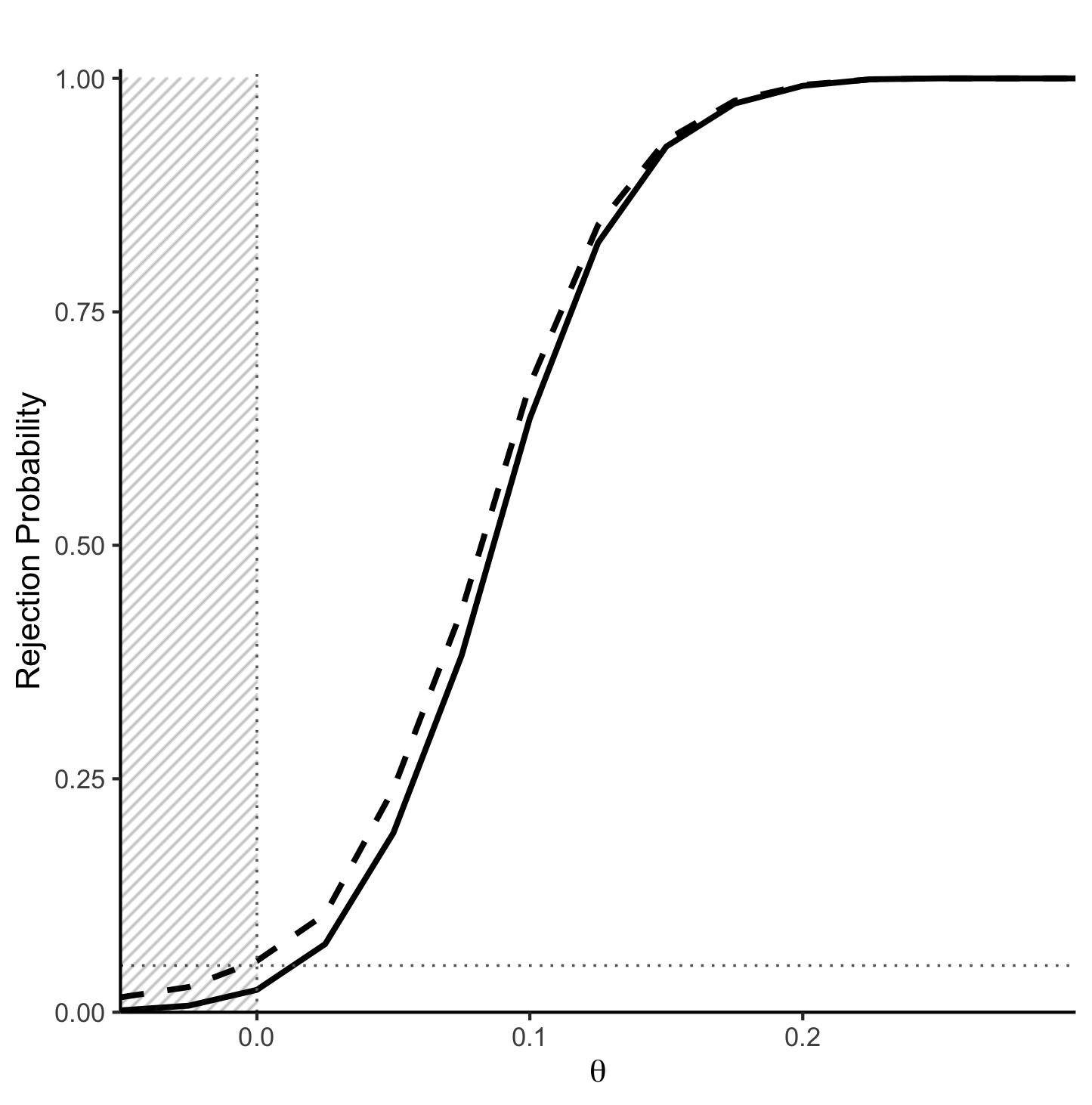}
  \caption{$H=10,\ n=2000$}
\end{subfigure}

\vspace{0.6em}

% ---------------- Row 3: J=50 ----------------
\begin{subfigure}[t]{0.4\textwidth}
  \centering
  \includegraphics[width=\linewidth]{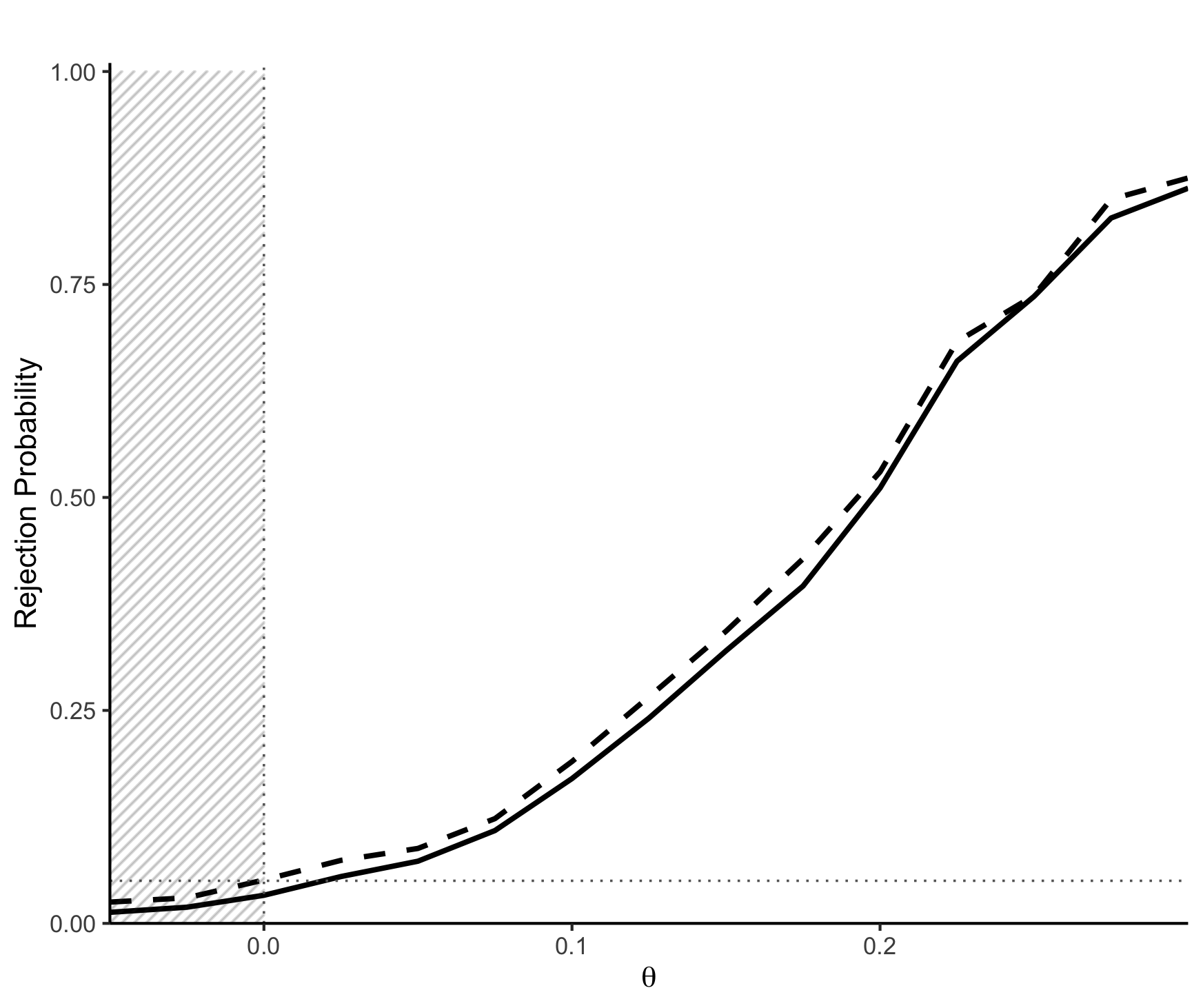}
  \caption{$H=50,\ n=500$}
\end{subfigure}\hfill
\begin{subfigure}[t]{0.4\textwidth}
  \centering
  \includegraphics[width=\linewidth]{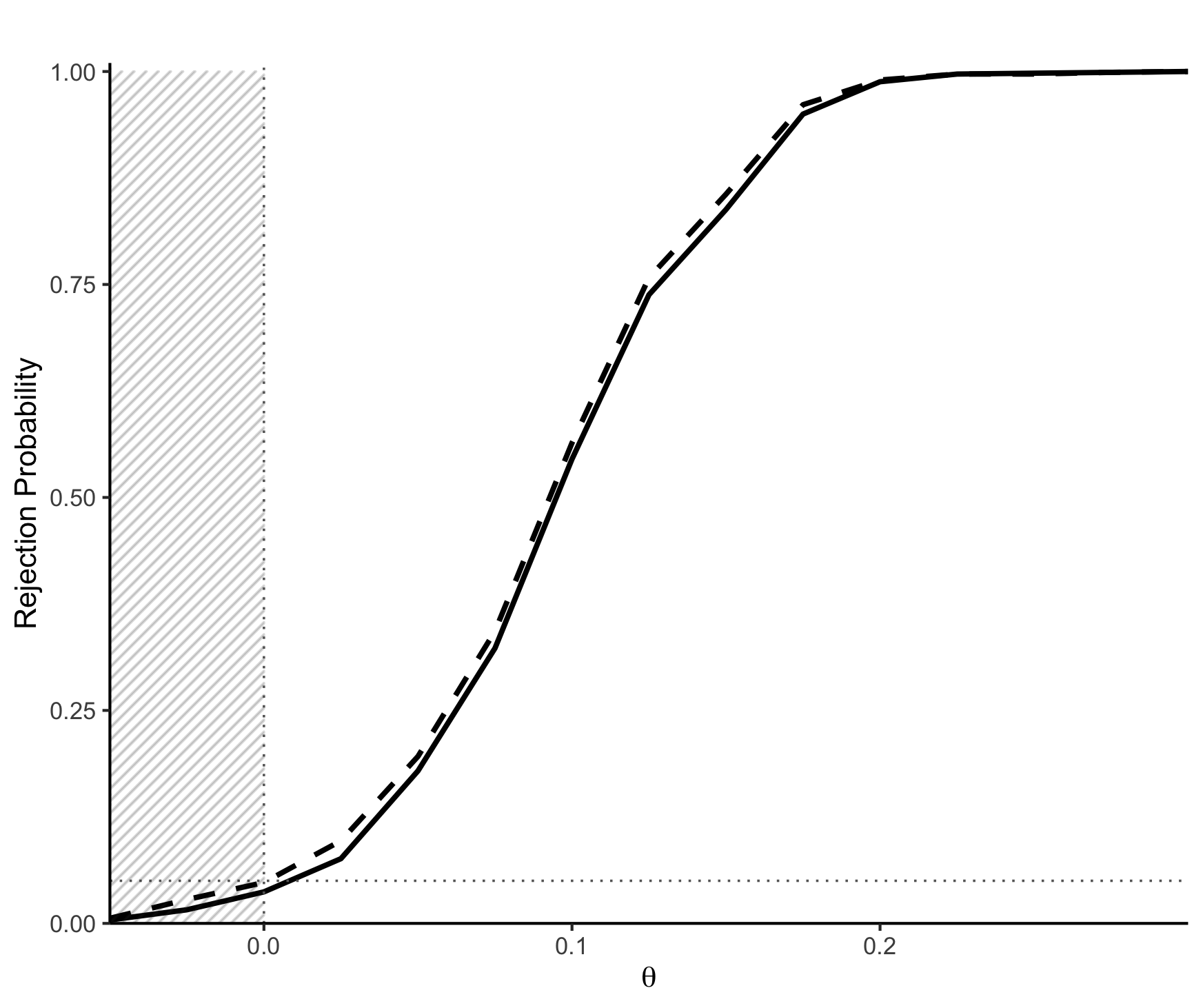}
  \caption{$H=50,\ n=2000$}
\end{subfigure}

\caption{\cite{cox2025testing} simulation rejection curves, arranged by $H$ (rows) and $n$ (columns). In each plot, the hypothesized value of $\theta$ is on the horizontal axis. The shaded region is the identified set for $\theta$. The dashed line represents the screening method and the solid line represents the direct method.}
\label{fig:css}
\end{figure}

\begin{figure}[!htbp]
\centering

% ---------------- Row 1: J=200 ----------------
\begin{subfigure}[t]{0.4\textwidth}
  \centering
  \includegraphics[width=\linewidth]{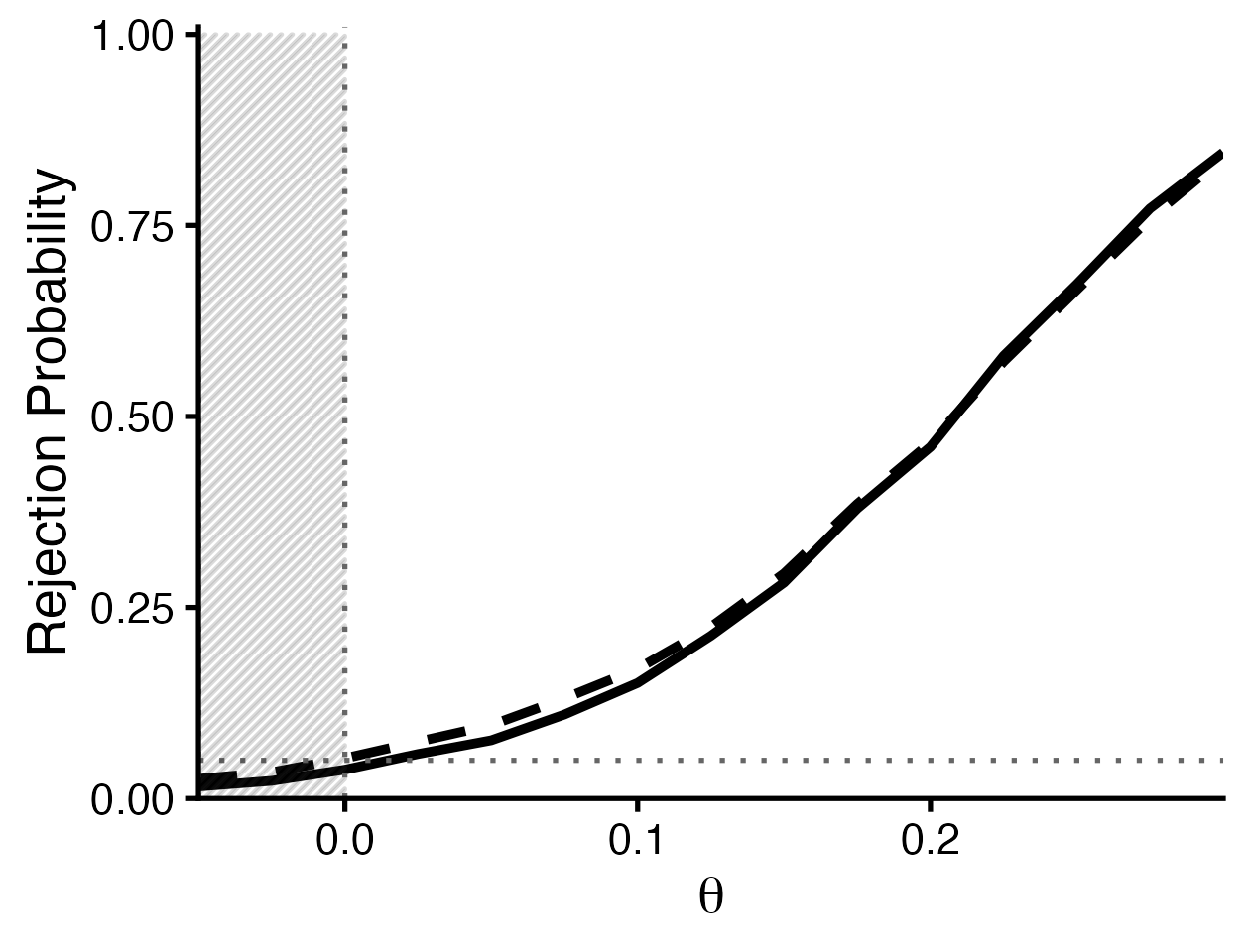}
  \caption{$H=200,\ n=500$}
\end{subfigure}\hfill
\begin{subfigure}[t]{0.4\textwidth}
  \centering
  \includegraphics[width=\linewidth]{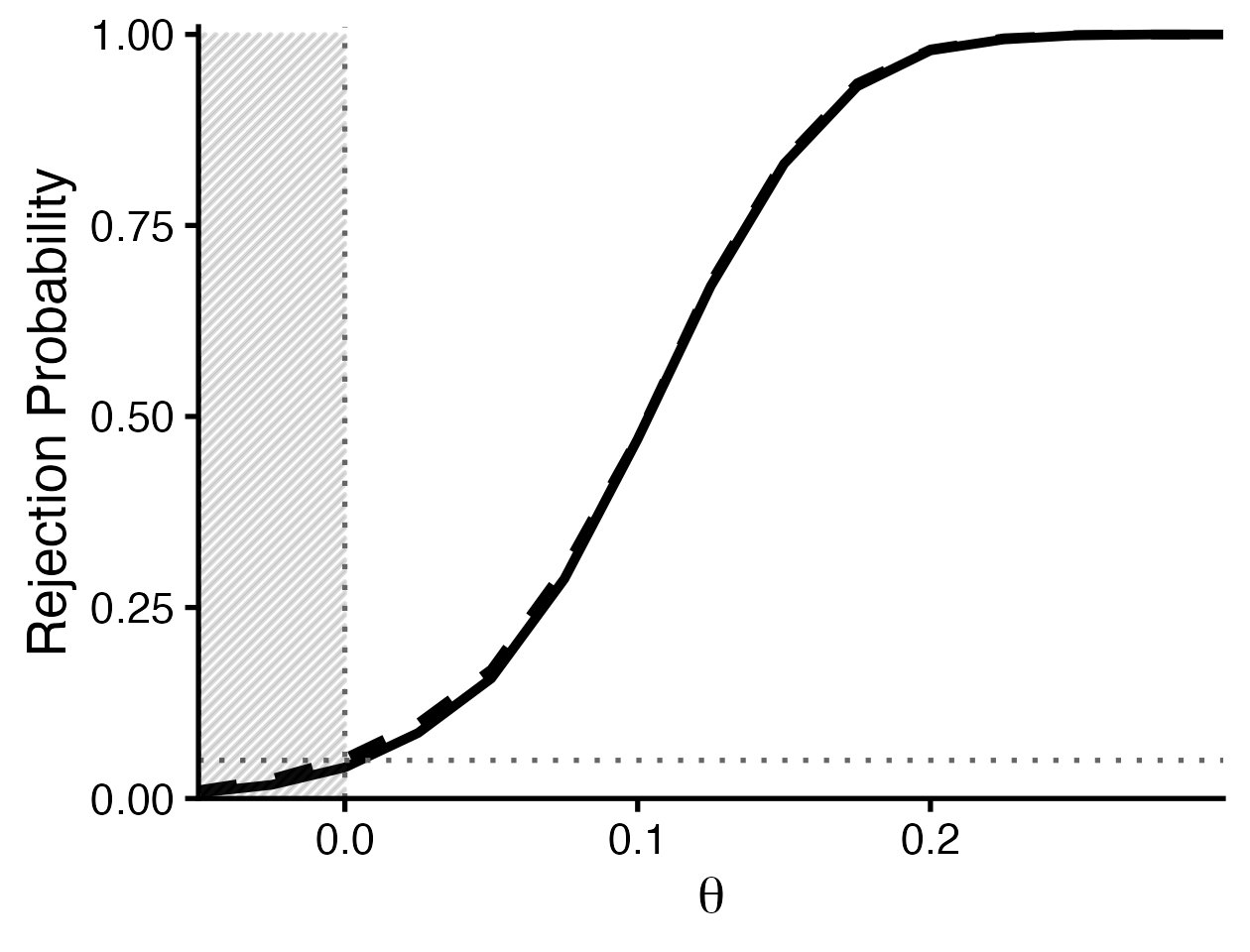}
  \caption{$H=200,\ n=2000$}
\end{subfigure}

\vspace{0.6em}

% ---------------- Row 3: J=50 ----------------
\begin{subfigure}[t]{0.4\textwidth}
  \centering
  \includegraphics[width=\linewidth]{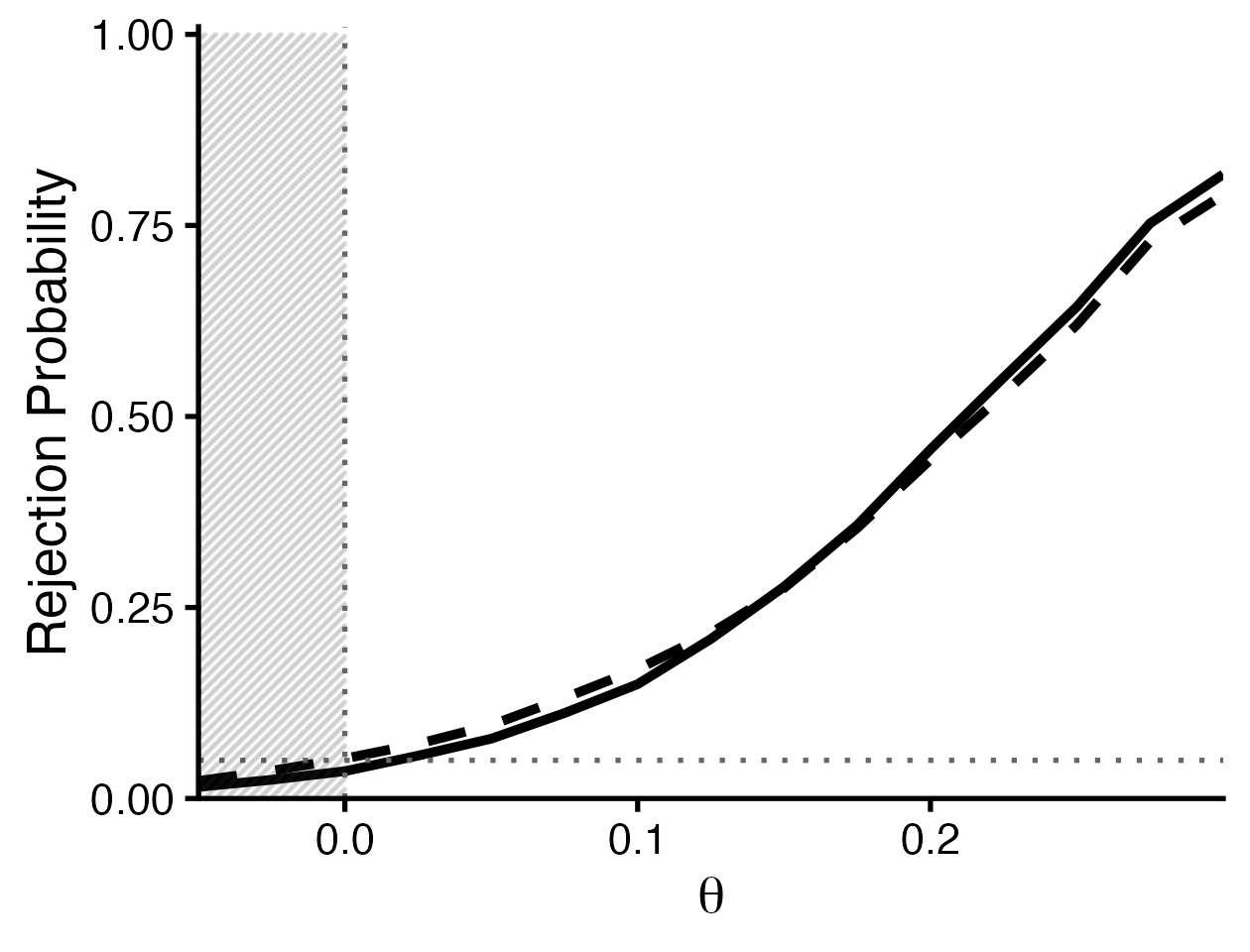}
  \caption{$H=500,\ n=500$}
\end{subfigure}\hfill
\begin{subfigure}[t]{0.4\textwidth}
  \centering
  \includegraphics[width=\linewidth]{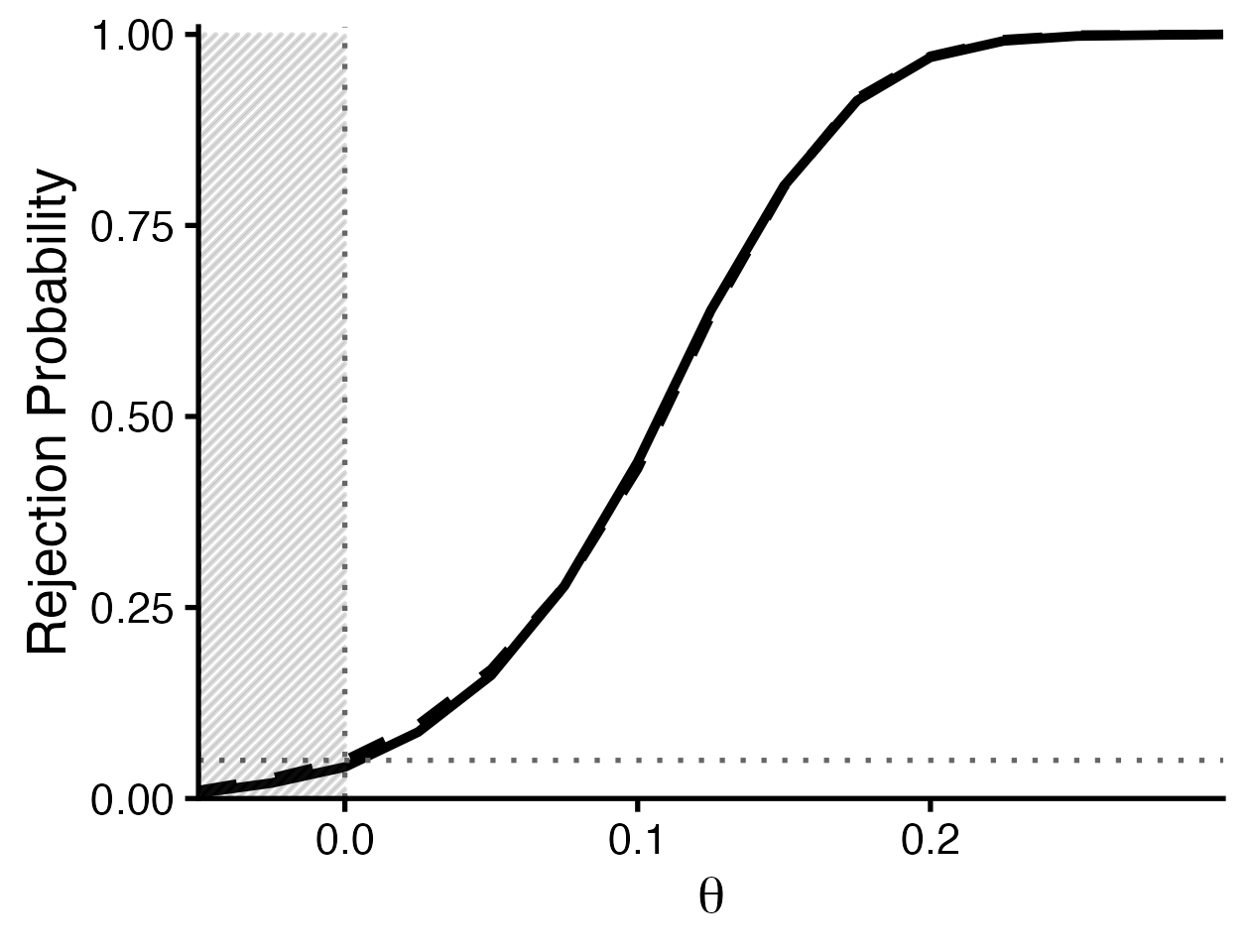}
  \caption{$H=500,\ n=2000$}
\end{subfigure}

\caption{\cite{cox2025testing} simulation rejection curves, arranged by $H$ (rows) and $n$ (columns). In each plot, the hypothesized value of $\theta$ is on the horizontal axis. The shaded region is the identified set for $\theta$. The dashed line represents the screening method and the solid line represents the direct method.}
\label{fig:css2}
\end{figure}

\subsection{\cite{goff2025inference}} \label{sec:goff}

 We revisit a simple instance of Example \ref{eg:mst} proposed by \cite{goff2025inference}, which is loosely based on the application in \cite{mogstadsantostorgovitsky2017nwp} to the bed net data used by \cite{dupas2014e}.
  The outcome $Y$ is a binary indicator of using a new type of anti-malarial bed net, the treatment $D$ is an indicator for purchasing the bed net, and the instrument $Z$ is a randomly-assigned price for the bed net.
  The context implies that $Y(0) = 0$.
  Suppose that the instrument is binary and that there are no covariates.
  The researcher assumes that $E[Y(1) \vert U = u] = \theta_{0} + \theta_{1}u + \theta_{2}u^{2}$ is a weakly decreasing, quadratic function of $u$, which must be contained within $[0,1]$ because $Y(1)$ is binary.
  These shape restrictions are imposed through four linear constraints:
  \begin{align}
    0 \leq \theta_{0} \leq 1,
    \quad
    0 \leq \theta_{0} + \theta_{1} + \theta_{2},
    \quad
    \theta_{1} \leq 0,
    \quad
    \text{and}
    \quad
    \theta_{1} + 2\theta_{2} \leq 0,
    \label{eq:mtr-goff-mbakop-constraints}
  \end{align}
  which imply $E[Y(1) \vert U = u] \leq 1$ for all $u$, because $E[Y(1) \vert U = 0] = \theta_{0} \leq 1$ and the derivative of $E[Y(1) \vert U = u]$ is $\theta_{1} + 2\theta_{2} \leq 0$.

  It will be convenient to use an alternative parameterization.
  Set $\tilde{\theta}_{1} = -\theta_{1}$, $\delta = \theta_{0} + \theta_{1} + \theta_{2}$, so that $\theta_{0}, \tilde{\theta}_{1}, \delta \geq 0$.
  Then the constraints \eqref{eq:mtr-goff-mbakop-constraints} can be simplified to
  \begin{align}
    \theta_{0} + s_1 = 1
    \quad
    \text{and}
    \quad
    -2 \theta_{0} + \tilde{\theta}_{1} + 2\delta + s_2 = 0,
    \label{eq:mtr-goff-mbakop-constraints-rewritten}
  \end{align}
  where $s_1, s_2 \geq 0$ are slack variables.
  The researcher matches the moments $E_{P}[YD \vert Z = z]$ for $z = 0,1$, creating the two restrictions
  \begin{align}
    \left ( p(0) - \frac{p(0)^3}{3} \right ) \theta_{0} + \left ( \frac{p(0)^3}{3} - \frac{p(0)^2}{2} \right ) \tilde{\theta}_{1} + \frac{p(0)^3}{3} \delta &= E_P[YD|Z=0] \notag \\
    \left ( p(1) - \frac{p(1)^3}{3} \right ) \theta_{0} + \left ( \frac{p(1)^3}{3} - \frac{p(1)^2}{2} \right ) \tilde{\theta}_{1} + \frac{p(1)^3}{3} \delta &= E_P[YD|Z=1]~.
    \label{eq:mtr-goff-mbakop-moments}
  \end{align}
  The null hypothesis of interest is $H_{0}: E_{P}[Y(1)] = \tau_{0}$, where
  \begin{align}
    E_{P}[Y(1)]
    =
    \int_{0}^{1}
    \theta_{0} - \tilde{\theta}_{1}u + (\delta - \theta_{0} + \tilde{\theta}_{1})u^{2}\, du
    =
    \frac{2}{3}\theta_{0}
    -
    \frac{1}{6}\tilde{\theta}_{0}
    +
    \frac{1}{3}\delta.
    \label{eq:mtr-target-parameter}
  \end{align}
  We set $x_{1} = (\theta_{0}, \tilde{\theta}_{1}, \delta, s_{1}, s_{2})'$, so that the problem defined by \eqref{eq:mtr-goff-mbakop-constraints-rewritten}--\eqref{eq:mtr-target-parameter} fits in the form \eqref{eq:null} with
  \begin{align*}
    A_1(P) =
    \begin{pmatrix}
      p(0) - \frac{p(0)^3}{3} & \frac{p(0)^3}{3} - \frac{p(0)^2}{2} & \frac{p(0)^3}{3} & 0 & 0 \\
      p(1) - \frac{p(1)^3}{3} & \frac{p(1)^3}{3} - \frac{p(1)^2}{2} & \frac{p(1)^3}{3} & 0 & 0 \\
      1 & 0 & 0 & 1 & 0 \\
      -2 & 1 & 2 & 0 & 1 \\
      \frac{2}{3} & -\frac{1}{6} & \frac{1}{3} & 0 & 0
    \end{pmatrix} \hspace{3em}
    \beta(P) =
    \begin{pmatrix}
      E_P[YD|Z=0] \\
      E_P[YD|Z=1] \\
      1 \\
      0 \\
      \tau_{0}
    \end{pmatrix},
  \end{align*}
  and with both $x_{0}$ and $A_{0}(P)$ null.

The simulation design is specified as $P\{Z = 1\} = 0.5 = P\{Z = 0\}$, $p(0) = 1/3$, $p(1) = 2/3$, $Y(0) = 0$, $Y(1) = \mathbf{1}\{V \le \theta_{0} + \theta_{1} U + \theta_{2} U^2\}$ where $V|Z,U \sim U[0,1]$ and $(\theta_{0},\theta_{1},\theta_{2}) = (1, -1, 0.5)$. Given this design, the identified set for the ATE is $[0.58,0.67]$. Using a random sample from $(Y,D,Z)$, $\hat{b}_{j,n}$ is once again computed using sample analogs (recall that $A_0(P)$ does not exist given this parametrization). 

Figure \ref{fig:GM} presents the rejection probabilities from $1,000$ draws for tests of the null hypothesis $H_0: E[Y(1)] = \tau_{0}$ at a $5\%$ significance level, for a grid of values of $\tau_0$. Both tests control size within the identified set for all sample sizes, however, the rejection probability remains below $5\%$ well outside of the identified set in small samples. At all sample sizes, we find that the direct method is more conservative outside of the identified set than the screening method in this example.

\begin{figure}[!htbp]
\centering

\begin{subfigure}[t]{0.45\textwidth}
  \centering
  \includegraphics[width=\linewidth]{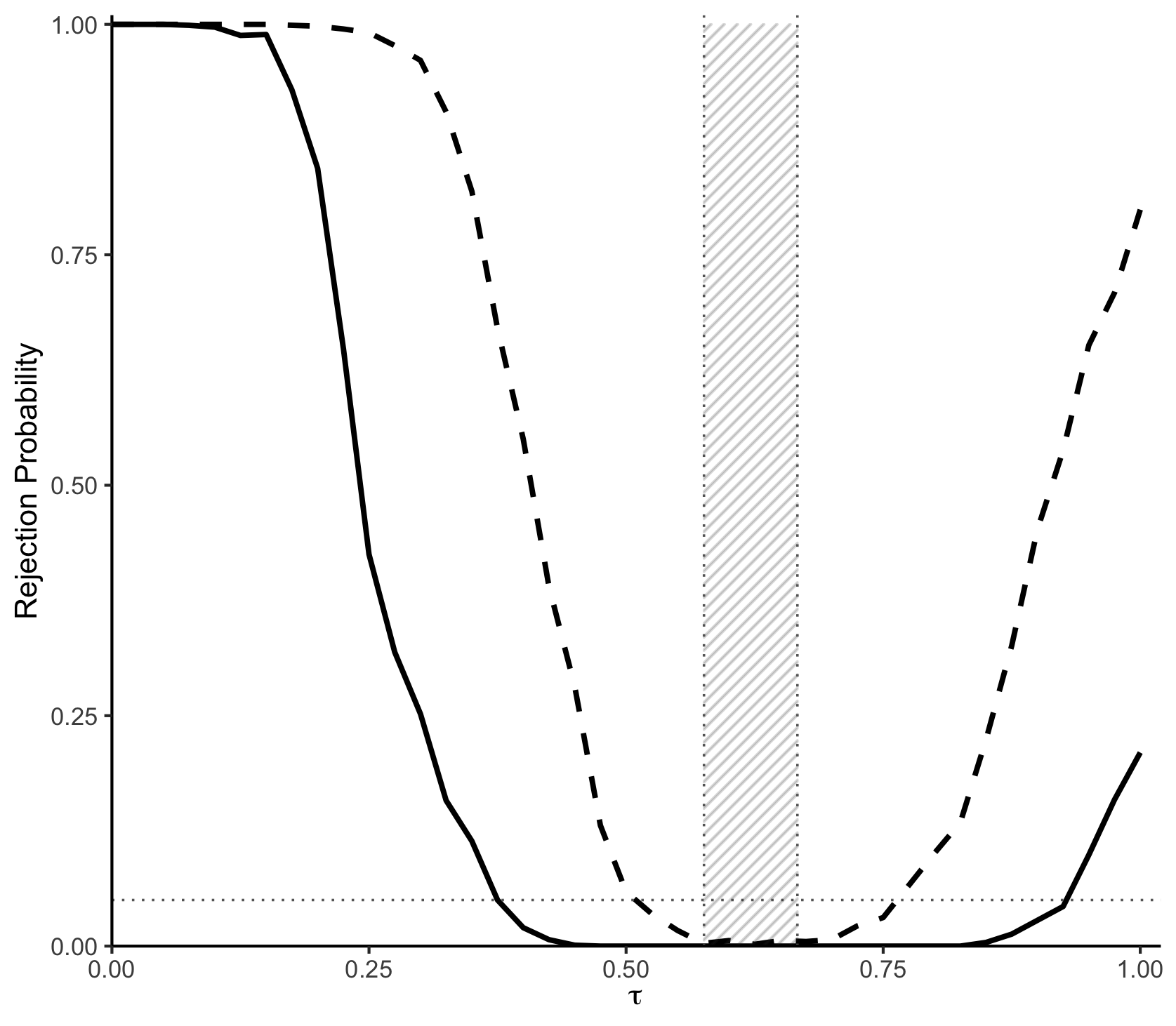}
  \caption{$n = 500$}
  \label{fig:<label>-a}
\end{subfigure}

\vspace{0.6em}

\begin{subfigure}[t]{0.45\textwidth}
  \centering
  \includegraphics[width=\linewidth]{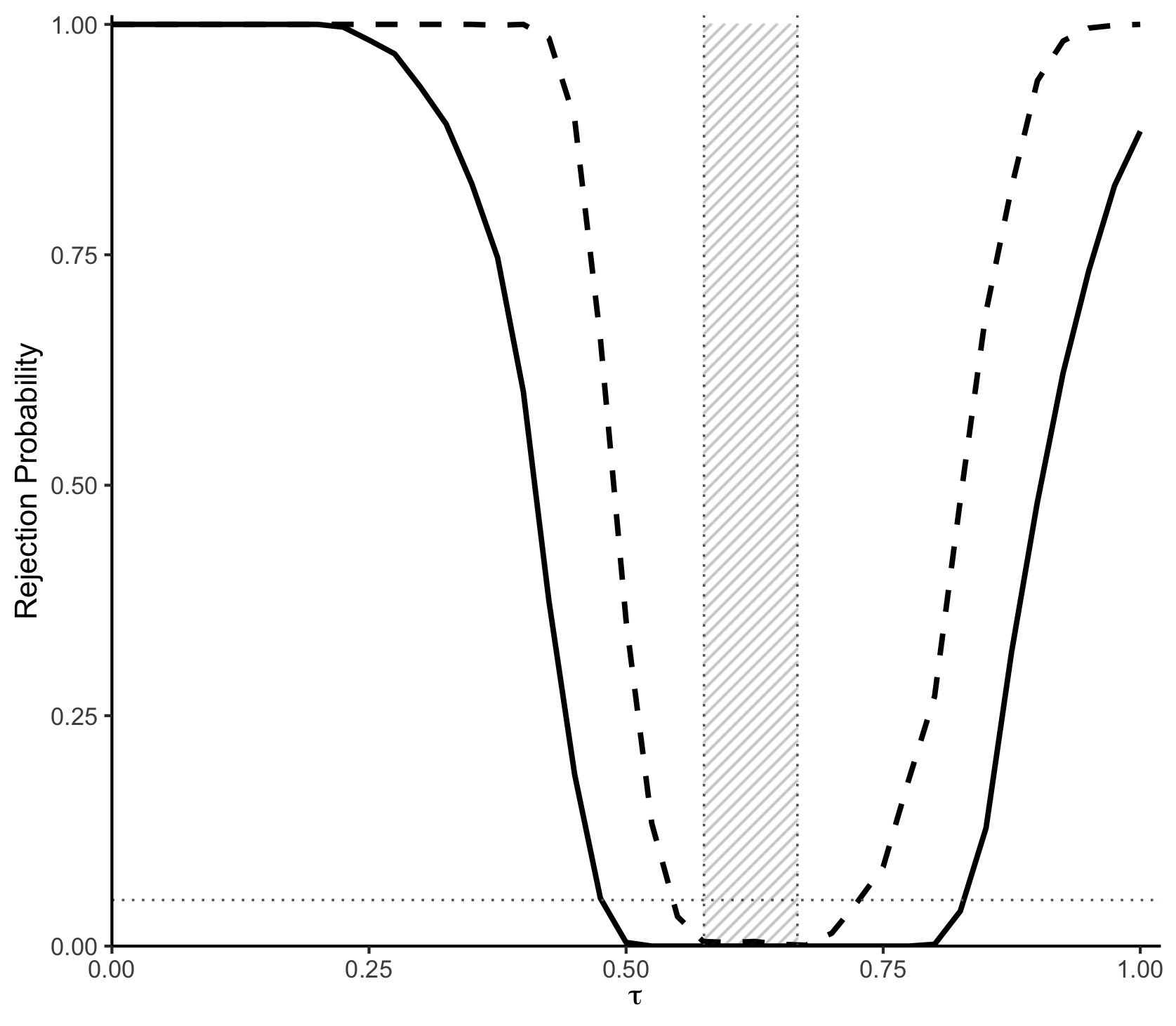}
  \caption{$n = 2000$}
  \label{fig:<label>-b}
\end{subfigure}

\vspace{0.6em}

\begin{subfigure}[t]{0.45\textwidth}
  \centering
  \includegraphics[width=\linewidth]{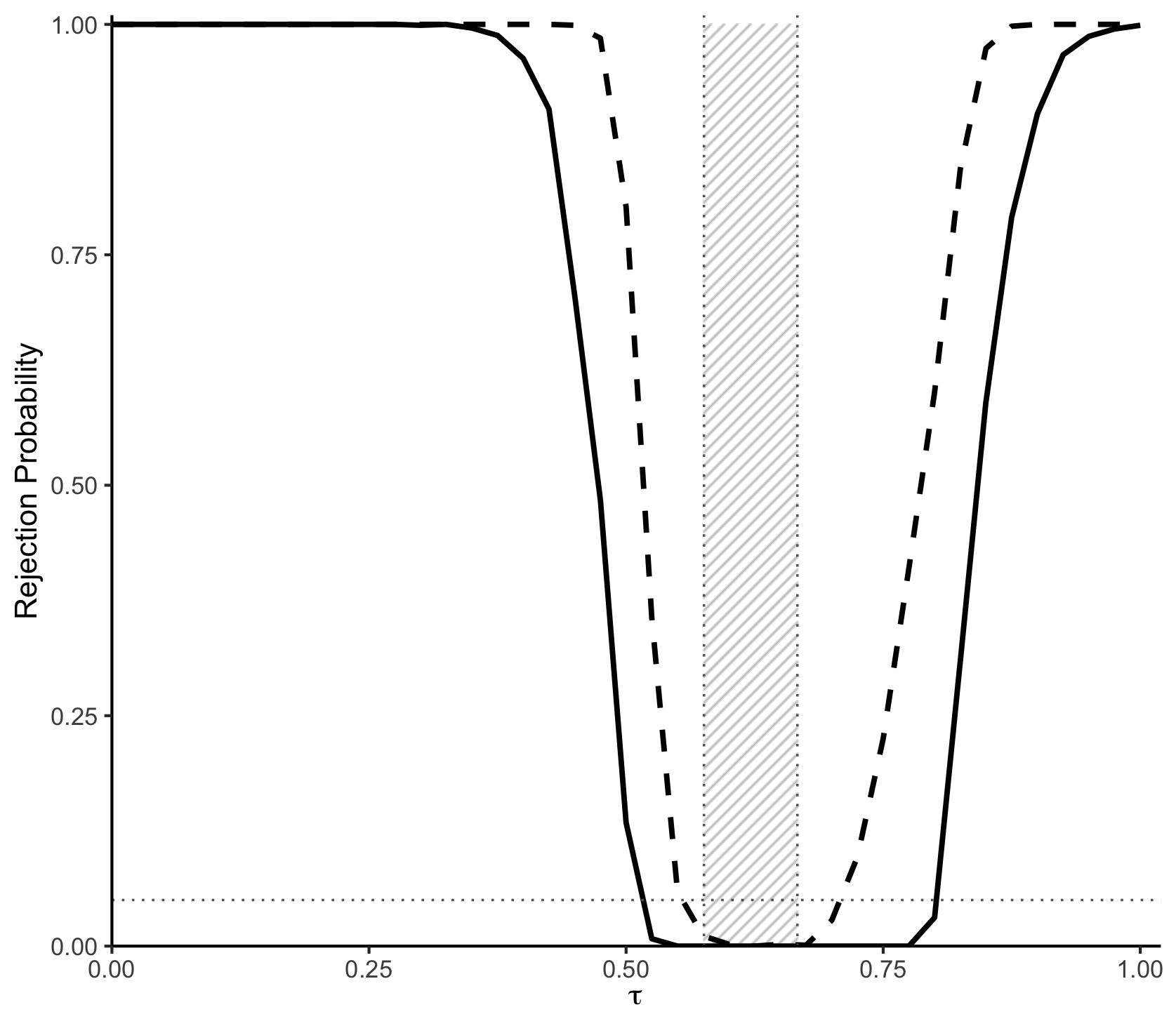}
  \caption{$n = 5000$}
  \label{fig:<label>-c}
\end{subfigure}

\caption{\cite{goff2025inference} simulation rejection curves, arranged by $n$. In each plot, the hypothesized value of $\tau_0$ is on the horizontal axis. The shaded region is the identified set. The dashed line represents the screening method and the solid line represents the direct method.}
\label{fig:GM}

\end{figure}

\subsection{\cite{freyberger2015identification}}
Finally, we consider the model presented in \cite{freyberger2015identification}, as described in Example \ref{eg:npiv}. The simulation design is specified as follows: $\mathcal{X} = \{2,3,4,5,6,7\}$ and $\mathcal{W} = \{0,1\}$, with $\pi_{h,k} = \Pr(X = x_h, W = w_k)$ given by
\[
\Pi =
\begin{pmatrix}
0.20 & 0.15\\
0.10 & 0.12\\
0.06 & 0.07\\
0.05 & 0.08\\
0.03 & 0.06\\
0.03 & 0.05
\end{pmatrix}~.
\]
The vector $g = (g(2), \ldots, g(7))'$ is given by $(23, 17, 13, 11, 9, 8)'$. The instrument is distributed as $Z_i \sim N(0,1)$, independently of $(X_i, W_i)$. Finally, the outcome equation is
\[Y_i = g(X_i) + U_i~,\]
where the error is given by
\[U_i = X_iZ_i^2 - E_P[X_i|W_i]~.\]
The shape constraint we maintain is that the structural function $g$ is decreasing, i.e., $Sg \leq 0$ for
\[S = 
\begin{pmatrix}
-1 &  1 &  0 &  0 &  0 &  0\\
 0 & -1 &  1 &  0 &  0 &  0\\
 0 &  0 & -1 &  1 &  0 &  0\\
 0 &  0 &  0 & -1 &  1 &  0\\
 0 &  0 &  0 &  0 & -1 &  1
\end{pmatrix}~,
\]
and our functional of interest is $L(g) = g(2)$. Given this design the identified set for $L(g)$ is $[20.21, 24.61]$. Using a random sample from $(Y, X, W, Z)$, $\hat{A}_{0,n}$ and $\hat{b}_{j,n}$ are computed using sample analogs, using the known value of $S$ and $c$. 

Figure \ref{fig:FH} presents the rejection probabilities from $1,000$ draws for tests of the null hypothesis $H_0: L(g) = L_0$ at a $5\%$ significance level, for a grid of values of $L_0$. This design seems particularly challenging, with the rejection probabilities outside of the identified set being highly asymmetric. One possible source for this asymmetry is that the shape constraints are violated to the left of the identified set, but are not violated to the right. Beyond this observation, our findings are similar to the previous two examples; both tests control size within the identified set for all sample sizes, and we find that the direct method is more conservative outside of the identified set than the screening method.

\begin{figure}[!htbp]
\centering

\begin{subfigure}[t]{0.45\textwidth}
  \centering
  \includegraphics[width=\linewidth]{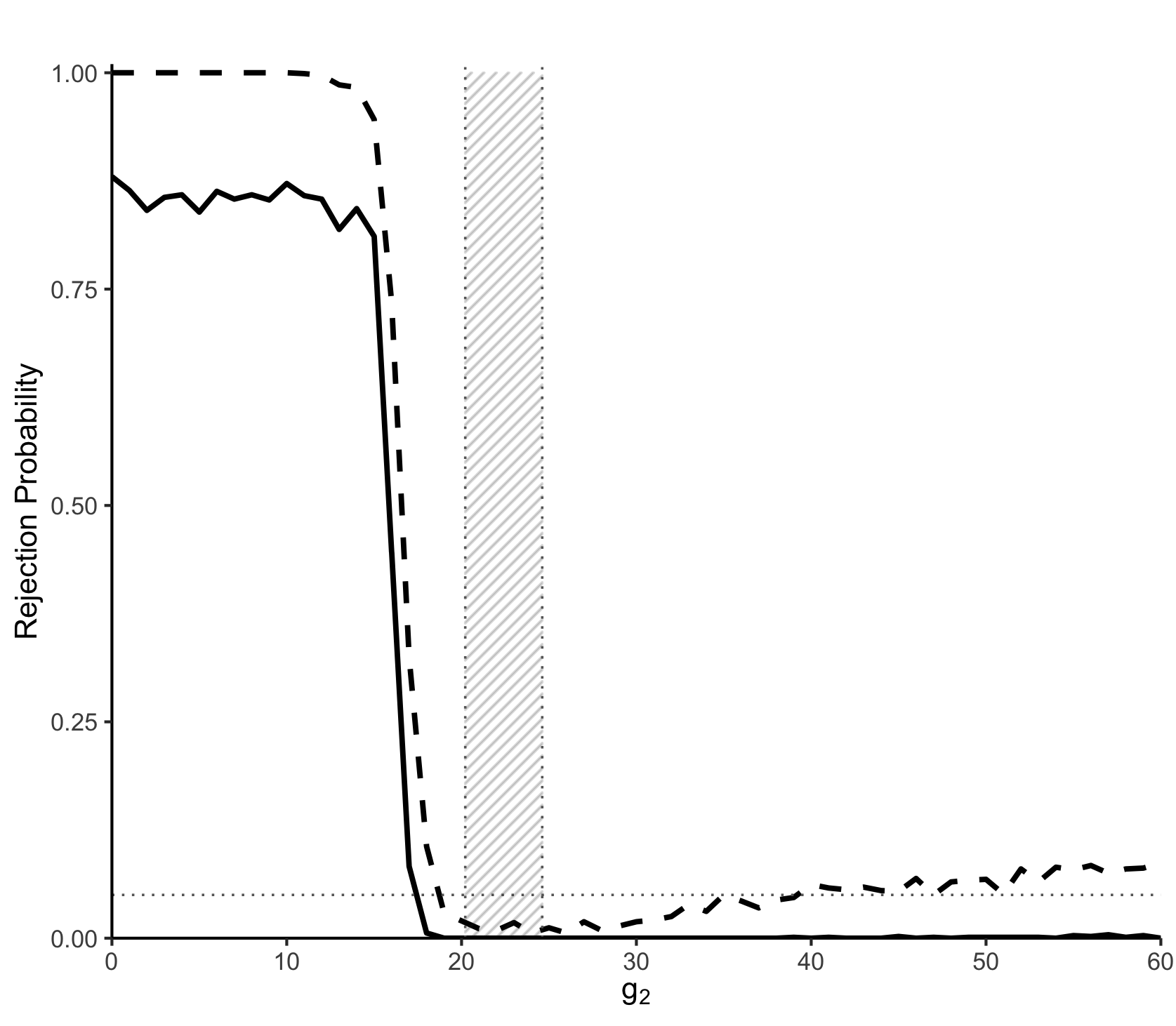}
  \caption{$n = 500$}
  \label{fig:<label>-a}
\end{subfigure}

\vspace{0.6em}

\begin{subfigure}[t]{0.45\textwidth}
  \centering
  \includegraphics[width=\linewidth]{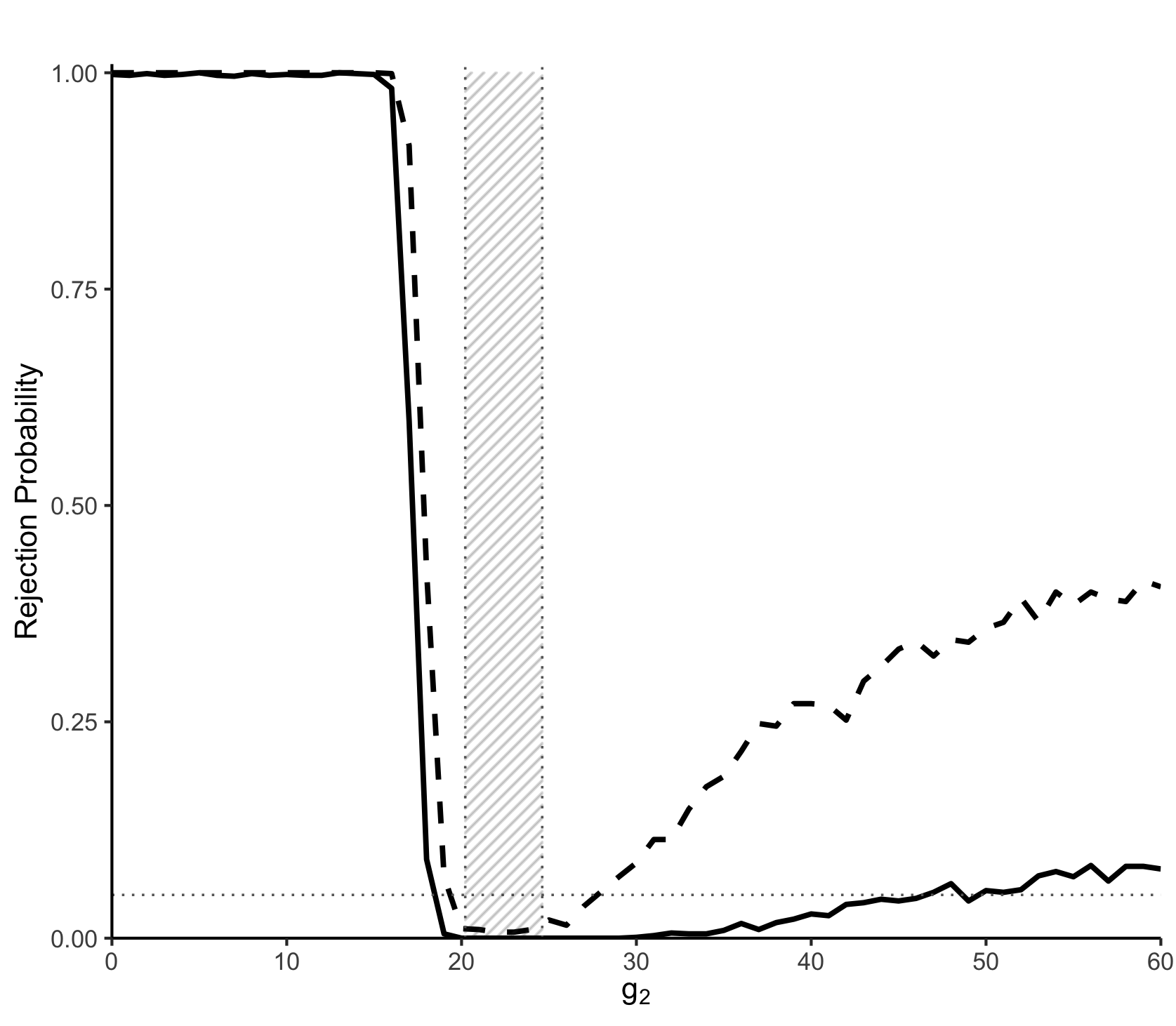}
  \caption{$n = 2000$}
  \label{fig:<label>-b}
\end{subfigure}

\vspace{0.6em}

\begin{subfigure}[t]{0.45\textwidth}
  \centering
  \includegraphics[width=\linewidth]{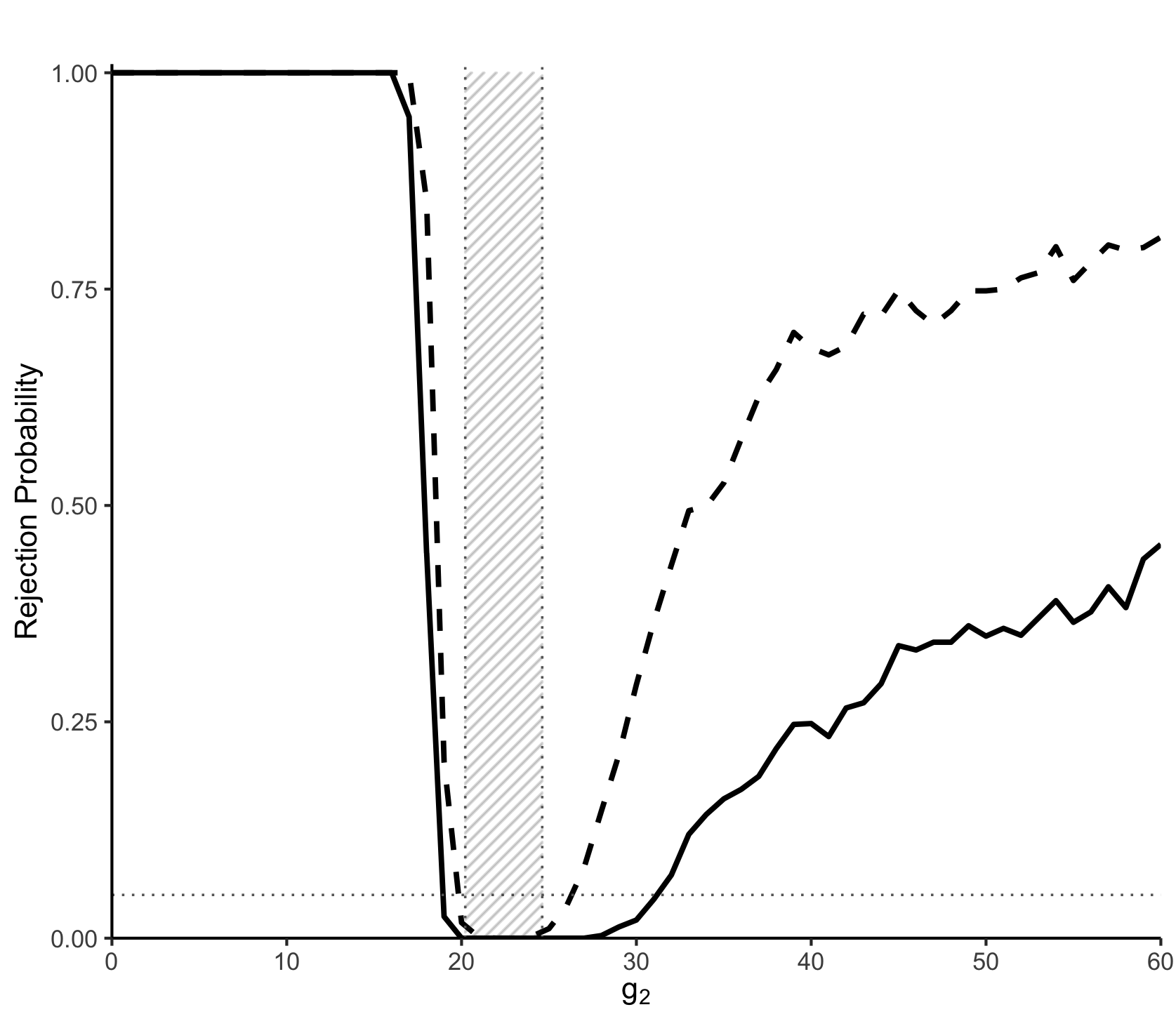}
  \caption{$n = 5000$}
  \label{fig:<label>-c}
\end{subfigure}

\caption{\cite{freyberger2015identification} simulation rejection curves, arranged by $n$. In each plot, the hypothesized value of $L_0$ is on the horizontal axis. The shaded region is the identified set for $L(g)$. The dashed line represents the screening method and the solid line represents the direct method.}
\label{fig:FH}

\end{figure}

\clearpage

\bibliography{multivalued}

\clearpage
\appendix

\section{Details for Section \ref{sec:yhat}} \label{sec:LP}
Here, we present the linear programming reformulation of \eqref{eq:y_star}:
\begin{align*}
\max_{t, y^+, y^-}\quad & t \\
\text{s.t.}\quad
& \sqrt{n_1}(\hat b_{j, n}^{(1)})' \hat M_{0,n}^{(1)}(y^+ - y^-) \geq t \hat\omega_{j, n}, &&\forall j \in J^* \\
& \sqrt{n_1}(\hat b_{j, n}^{(1)})' \hat M_{0,n}^{(1)}(y^+ - y^-) \geq \hat\omega_{j, n}, && \forall j \in (J^*)^c \\
& \mathbf 1_p' y^+ + \mathbf 1_p' y^- \leq 1 \\
& y^+ \ge 0, y^- \ge 0~,
\end{align*}
where $\mathbf 1_p$ is the $p \times 1$ vector of ones.

\section{Proofs of Main Results}

In the rest of the appendix, for $1 \leq p, q \leq \infty$, let $\|\cdot\|_{p, q}$ denote the $(p, q)$-operator norm of a matrix. We write $a_n \lesssim b_n$ to denote that there exists a universal constant $c$ such that $a_n \leq c b_n$ for all $n \geq 1$.

\subsection{Proof of Lemma \ref{lem:annihilator}}
First, note that if $(A_0, A_1, b) \in \mathbf C_0$, then there exists $x_0 \in \mathbb R^{d_0}$, $x_1 \in \mathbb R^{d_1}$ with $ x_1 \geq 0$ such that $A_0x_0 + A_1x_1 = b$. By the definition of $M_0$, $M_0A_0x_0 = 0$, so that $M_0 A_1 x_1 = M_0b$ with $x_1 \geq 0$, which implies
$$\mathbf C_0 \subseteq \{(A_0,A_1,b) : M_0 A_1 x_1 = M_0 b \text{ for some } x_1\in \mathbb R^{d_1}, x_1\geq 0\}.$$
For the reverse inclusion, suppose there exists a solution $\tilde x_1 \ge 0$ to the equation $M_0A_1\tilde x_1 = M_0b$. By definition of $M_0$, we may write $M_0A_1\tilde x_1 = A_1\tilde x_1 - \Pi_0A_1\tilde x_1$, where $\Pi_0$ is the projection onto the column space of $A_0$.
Next use that $\Pi_0A_1\tilde x_1 = A_0\tilde x_0$ for some $\tilde x_0 \in \mathbb R^{d_0}$ by definition of $\Pi_0$ and note that, similarly, we can write $M_0b =  b - \Pi_0b = b - A_0\tilde{b}$ for some $\tilde{b} \in \mathbb R^{d_0}$. Therefore we obtain that
\[A_0(\tilde{b} - \tilde x_0) + A_1\tilde x_1 = b~,\]
which implies $\{(A_0,A_1,b) : M_0 A_1 x_1 = M_0 b \text{ for some } x_1\in \mathbb R^{d_1}, x_1\geq 0\}\subseteq \mathbf C_0$, as desired. \qed

\subsection{Proof of Theorem \ref{thm:closure}}
%Since $\mathbb R^{p \times d_0}\times \mathbb R^{p \times d_1} \times \mathbb R^p$ is a finite-dimensional vector space, all norms on it are equivalent, and thus all norms induce the same topology. Correspondingly, in what follows, we will employ the norm $\|\cdot\|_{\sup}$ which denotes the largest entry in the triple $(A_0, A_1, b)$. 
First note that Lemma \ref{lem:closure_part1} establishes that $\mathrm{cl}(\mathbf C_0) \cap (\mathbf C^{\rm RD})^{c} \subseteq \bar{\mathbf C}_0$, which allows us to conclude that
\begin{equation}\label{eq:firstinc}
\mathrm{cl}(\mathbf C_0) \subseteq \bar{\mathbf{C}}_0 \cup \mathbf{C}^{\rm RD}.
\end{equation}
To establish the reverse inclusion, i.e.\ $\bar{\mathbf{C}}_0 \cup \mathbf{C}^{\rm RD}\subseteq \mathrm{cl}(\mathbf C_0)$, we begin by defining the set $\mathbf C^{\rm NP}$ to be given by
\begin{equation}\label{eq:CNP:def}
\mathbf{C}^{\rm NP} := \{(A_0, A_1, b) \in \mathbb R^{p \times d_0}\times \mathbb R^{p \times d_1} \times \mathbb R^p: \{M_0A_1x_1: x_1 \geq 0\} \text{ is not a pointed cone}\}~.
\end{equation}
Since $\bar {\mathbf C}_0\cap (\mathbf C^{\rm NP})^c \subseteq \mathbf C_0$ by Lemma  \ref{lem:closure_part3}, and $\mathbf C^{\rm NP}\cup \mathbf C^{\rm RD} \subseteq \mathrm{cl}(\mathbf C_0)$ by Lemmas \ref{lem:closure_part2} and \ref{lem:closure_part4}, we obtain
$$\bar {\mathbf C}_0\cup \mathbf C^{\rm RD} = (\bar {\mathbf C}_0 \cap (\mathbf C^{\rm NP})^c) \cup (\bar {\mathbf C}_0 \cap \mathbf C^{\rm NP}) \cup \mathbf C^{\rm RD} \subseteq  \mathrm{cl}(\mathbf C_0),$$
which together with \eqref{eq:firstinc} establishes the claim of the theorem. \qed

% \subsection{Proof of Lemma \ref{lem:tv} and Remark \ref{rem:tv}}
% The proof of the Lemma follows immediately from the assumptions. To prove the remark, suppose for a contradiction that there exists $P \in \mathrm{cl}(\mathbf P_0)$ such that $P \notin \widetilde{\mathbf P}_0$. Fix $\alpha' > \alpha$ and let $n$ be large enough so that $E_P[\phi_n] > \alpha'$ and $\sup_{P \in \widetilde{\mathbf P}_0} E_P[\phi_n] < \alpha'$. Because $\mathbf P_0 \subseteq \widetilde{\mathbf P}_0$, we have $\sup_{P \in \mathbf P_0} E_P[\phi_n] < \alpha'$. Since $P \in \mathrm{cl}(\mathbf P_0)$, it follows from Theorem 1 in \cite{romano2004non} that $E_P[\phi_n] \leq \sup_{P \in \mathbf P_0} E_P[\phi_n] \leq \alpha'$, a contradiction to $E_P[\phi_n] > \alpha'$. \qed

\subsection{Proof of Theorem \ref{thm:test}}
In what follows, it will be helpful to decompose the test statistic $T_n$ into the following two components:
\begin{equation}\label{thm:test1}
T_n = \min_{j \in J^*} \bigg ( \underbrace{ \frac{ \sqrt{n_2} \big ( (\hat b_{j, n}^{(2)})'\hat{M}_{0, n}^{(2)}  \hat y_n^{(1)}  - b_{j}(P)'M_{0}(P) \hat y_n^{(1)} \big )}{\hat{\sigma}_{j, n}^{(2)}(\hat y_n^{(1)})} }_{\displaystyle := \hat{\mathbb G}_{j, n}^{(2)}(\hat y_n^{(1)}) } + \frac{\sqrt{n_2} b_{j}(P)'M_{0}(P) \hat y_n^{(1)}}{\hat{\sigma}_{j, n}^{(2)}(\hat y_n^{(1)})} \bigg )~.
\end{equation}
Next, note Assumption \ref{ass:y_hat} and $\hat y_n^{(1)}=0$ implying that $T_n =0$ and hence $1\{T_n > z_{1-\alpha}\} =0$ yield that
\begin{align}\label{eq:totalprob}
\sup_{P\in \mathbf P_0} P\left\{T_n > z_{1-\alpha}\right\} & = \sup_{P\in \mathbf P_0} P\left\{T_n > z_{1-\alpha} \text{ and } \hat y_n^{(1)} \in \mathcal{Y}(P;J^*) \cup\{0\}\right\} + o(1)\notag \\
& = \sup_{P\in \mathbf P_0} P\left\{T_n > z_{1-\alpha} \text{ and } \hat y_n^{(1)} \in \mathcal{Y}(P;J^*) \right\} + o(1) ~.
\end{align}
Moreover, note that for any $P\in \mathbf P_0$, the event $\hat y_n^{(1)} \in \mathcal{Y}(P;J^*)$ implies that $b_{ \hat k_n}(P)^\prime M_0(P) \hat y_n^{(1)}\leq 0$ for some $\hat k_n\in J^*$ depending on $\hat y_n^{(1)}$ and $P$ (though we leave the dependence implicit to avoid notational clutter).
In particular, we obtain from the decomposition in \eqref{thm:test1} that we must have
\begin{equation}\label{thm:test2}
1\left\{T_n > z_{1-\alpha} \text{ and } \hat y_n^{(1)}\in \mathcal {Y}(P;J^*)\right\} \leq  1\left\{\hat{\mathbb G}^{(2)}_{\hat k_n,n}(\hat y_n^{(1)}) > z_{1-\alpha} \text{ and } \hat y_n^{(1)} \in \mathcal Y(P;J^*)\right\} ~.   
\end{equation}
Next, note that result \eqref{thm:test2} together with $\hat y_n^{(1)}$ and $\hat k_n$ being functions of only the first split $\{Z_i\}_{i\in I_{n,1}}$ and therefore being independent of $\hat {\mathbb G}_{k,n}^{(2)}(y)$ for any $k \in J^*$ and $y\in \mathcal Y(P;J^*)$ we can conclude that
\begin{align}\label{thm:test3}
\sup_{P\in \mathbf P_0} & P\left\{T_n > z_{1-\alpha} \text{ and } \hat y_n^{(1)} \in \mathcal{Y}(P;J^*) \right\} \notag \\
& \leq \sup_{P\in \mathbf P_0} E_P\left[P\left\{\hat {\mathbb G}^{(2)}_{\hat k_n,n}(\hat y_n^{(1)}) > z_{1-\alpha} \Big | \{Z_i\}_{i \in I_{n,1}}\right\} \cdot 1\{\hat y_n^{(1)} \in \mathcal Y(P;J^*)\}\right]\notag \\
& \leq \sup_{P\in \mathbf P_0} E_P\left[\max_{k \in J^*}\sup_{y\in \mathcal Y(P;J^*)} P\left\{\hat {\mathbb G}^{(2)}_{k,n}(y) > z_{1-\alpha} \Big | \{Z_i\}_{i \in I_{n,1}}\right\} \cdot1\{\hat y_n^{(1)} \in \mathcal Y(P;J^*)\}\right] \notag \\
& \leq \sup_{P\in \mathbf P_0} \max_{k\in J^*} \sup_{y\in \mathcal Y(P;J^*)} P\left\{\hat {\mathbb G}_{k,n}^{(2)}(y) > z_{1-\alpha} \right\}~,
\end{align}
where in the final inequality we used the bound $1\{\hat y_n^{(1)}\in \mathcal Y(P;J^*)\}\leq 1$ and again used that $\hat{\mathbb G}_{k,n}^{(2)}(y)$ is independent of the first split $\{Z_i\}_{i\in I_{n,1}}$.

We next aim to apply Lemma \ref{lem:studentized} to show $\hat{\mathbb G}^{(2)}_{k,n}(y) - \mathbb G_{k,n}^{(2)}(P;y)$ converges in probability to zero uniformly, where
$$\mathbb G_{k,n}^{(2)}(P;y) := \frac{1}{\sqrt{n_2}} \sum_{i \in I_{n,2}} \frac{y^\prime \xi_k(Z_i,P)}{(\sigma_j(P;y)\vee \underline{\sigma})}~.$$
To verify the conditions of Lemma \ref{lem:studentized} note Assumption \ref{ass:bdd-moments-IF} and Jensen's inequality imply $\text{Var}_P[y^\prime \xi_j(Z,P)]$ is bounded uniformly in $P\in \mathbf P$ and $1\leq j \leq d_1+1$.
Also note Assumptions \ref{ass:rates}(a)(b) imply $(K_{1,p}K_{2,p}^2 \vee \bar s_p^2)(K_{0,p}\vee K_{1,p})\log(1+p) = o(n)$ (for $p \geq 2$) and Assumptions \ref{ass:rates}(a)(b)(c) imply $(K_{1,p}^{1/2}K_{2,p}\vee \bar s_p)a_n = o(n)$.
By Assumptions \ref{ass:prelim-estimators}, \ref{ass:singular-value}, and \ref{ass:variance-estimator} we may therefore apply Lemma \ref{lem:studentized}, which together with Assumption \ref{ass:rates} implies
\begin{equation*}%\label{thm:test5}
\limsup_{n_2\to \infty} \sup_{P\in \mathbf P} \max_{1\leq j \leq d_1+1} \sup_{\|y\|_1 \leq 1}P\left\{ \Big| \hat{\mathbb G}^{(2)}_{k,n}(y) - \mathbb G_{k,n}^{(2)}(P;y)\Big| > \epsilon  \right\} = 0
\end{equation*}
for any $\epsilon > 0$.
Hence, since $\mathcal Y(P;J^*)\subseteq \{y\in\mathbb R^p : \|y\|_1\leq 1\}$ by definition, we obtain for any $\epsilon > 0$ that
\begin{equation}\label{thm:test6}
\limsup_{n_2\to \infty}\sup_{P\in \mathbf P_0} \max_{k\in J^*} \sup_{y\in \mathcal Y(P;J^*)} P \left\{ \hat{\mathbb G}^{(2)}_{k,n}(y) > z_{1 - \alpha}  \right\}  
\leq \limsup_{n_2\to \infty}\sup_{P\in \mathbf P_0} \max_{k\in J^*} \sup_{y\in \mathcal Y(P;J^*)}  P \left \{ \mathbb G^{(2)}_{k,n}(y) > z_{1 - \alpha} -\epsilon \right\} .
\end{equation}
Next set $\eta_{n_2}(\delta) := (K_\xi/\delta \vee 1)^3 \log(n_2)/\sqrt{n_2}$ and note that for any $\epsilon, \delta > 0$ such that $z_{1-\alpha}-\epsilon - 3\delta/\underline{\sigma} > 0$, Lemma \ref{lem:coupling} implies that there is a universal $C < \infty$ and standard normal $\mathbb Z$ such that
\begin{align}\label{thm:test7}
\limsup_{n_2\to \infty}&\sup_{P\in \mathbf P_0} \max_{k\in J^*} \sup_{y\in \mathcal Y(P;J^*)}  P \left\{ \mathbb G^{(2)}_{k,n}(y) > z_{1 - \alpha} -\epsilon \right\} \notag \\
& \leq \limsup_{n_2\to \infty} \sup_{P\in \mathbf P_0} \max_{1\leq j \leq d_1+1} \sup_{\|y\|_1\leq 1} \left\{\left(1 - P\left\{\sigma_j(P;y)\mathbb Z \leq (z_{1-\alpha} - \epsilon)(\sigma_j(P;y)\vee \underline{\sigma}) - 3\delta\right\}\right) + C\eta_{n_2}(\delta)\right\}\notag \\
& \leq P\Big\{\mathbb Z > z_{1-\alpha} - \epsilon - \frac{3\delta}{\underline{\sigma}}\Big\},
\end{align}
where in the final inequality we used that $z_{1-\alpha}-\epsilon - 3\delta/\underline{\sigma} > 0$ and $\sigma_j(P;y) \leq \sigma_j(P;y)\vee \underline{\sigma}$.
Since $\epsilon, \delta > 0$ are arbitrary and $P\{\mathbb Z > z_{1-\alpha}\} = \alpha$, the claim of the theorem follows from results \eqref{eq:totalprob}, \eqref{thm:test3}, \eqref{thm:test6}, and \eqref{thm:test7}. \qed

\subsection{Proof of Lemma \ref{lem:y_hat}}
\begin{proof}
Part (a) of the lemma is immediate from the definition of $\mathcal Y(P;J^*)$.

To establish parts (b) and (c) we first establish a number of preliminary steps that are common to both arguments. 
First define the event $\mathcal E_n$ on which the optimization problem in \eqref{eq:y_star} is feasible by setting
\begin{equation}\label{lem:y_hat1}
\mathcal E_n := \{\text{There is } y\in \mathbb R^p \text{ s.t. } \|y\|_1\leq 1 \text{ and } \sqrt{n_1}(\hat b_{j,n}^{(1)})^\prime \hat M_{0,n}^{(1)} y \geq \hat \omega_{j,n} \text{ for all } j \in (J^*)^c\} ~.
\end{equation}
Further define the positive scalar $t_n := \min_{j \in (J^*)^c} \hat{\omega}_{j, n}$ and set $\Delta_n(P)$ to equal the maximal deviation
\begin{equation}\label{lem:y_hat2}
    \Delta_n(P) := \max_{1\leq j \leq d_1+1} \sup_{\|y\|_1 \leq 1} |\sqrt {n_1}\big((\hat b_{j,n}^{(1)})^\prime \hat M_{0,n}^{(1)} y - b_j(P)^\prime M_0(P)y\big)| ~.
\end{equation}
Next, note that since $\hat y_n^{(1)} = 0$ whenever the event $\mathcal E_n^c$ occurs we obtain from the definition of $t_n$ that
\begin{multline}\label{lem:y_hat3}
\inf_{P\in \mathbf P_0 : \mathcal{Y}(P;J^*) = \emptyset} P\left\{\hat y_n^{(1)}=0\right\} 
\geq \inf_{P\in \mathbf P_0 : \mathcal{Y}(P;J^*) = \emptyset} P\left\{\{Z_i\}_{i\in I_{n,1}} \in \mathcal E_n^c\right\} \\
\geq \inf_{P\in \mathbf P_0 : \mathcal{Y}(P;J^*) = \emptyset} P\left\{ \min_{j\in (J^*)^c} \sqrt{n_1}(\hat b_{j,n}^{(1)})^\prime \hat M_{0,n}^{(1)} y < t_n \text{ for all } \|y\|_1 \leq 1\right\} \geq \inf_{P\in \mathbf P} P\left\{\Delta_n(P) < t_n\right\}
\end{multline}
where the final inequality follows from the definition of $\Delta_n(P)$ and the fact that $\mathcal {Y}(P;J^*)$ being empty implies that for every $y$ with $\|y\|_1\leq 1$ there is a $j\in (J^*)^c$ for which $b_j(P)^\prime M_0(P) y \leq 0$.
Moreover, since $0\notin \mathcal Y(P;J^*)$ and the event $\mathcal E_n^c$ implies that $\hat y_n^{(1)} = 0$ we obtain by definition of $\mathcal Y(P;J^*)$ that
\begin{align}\label{lem:y_hat4}
\inf_{P\in \mathbf P_0 : \mathcal Y(P;J^*) \neq \emptyset }  P\left\{\hat y_n^{(1)}\in \mathcal Y(P;J^*)\right\} 
& = \inf_{P\in \mathbf P_0} P\left\{\min_{j\in (J^*)^c} \sqrt{n_1}b_j(P)^\prime M_0(P)\hat y_n^{(1)} > 0 \text{ and }\{Z_i\}_{i\in I_{n,1}} \in \mathcal E_n\right\} \notag \\
& \geq \inf_{P\in \mathbf P} P\left\{t_n > \Delta_n(P) \text{ and } \{Z_i\}_{i\in I_{n,1}}\in \mathcal E_n\right\},
\end{align}
where the inequality follows from $\{Z_i\}_{i\in I_{n,1}}\in \mathcal E_n$ implying that $\sqrt{n_1}(\hat b_{j,n}^{(1)})^\prime \hat M_{0,n}^{(1)}\hat y_n \geq \hat \omega_{j,n}$ for all $j \in (J^*)^c$ and the definitions of $t_n$ and $\Delta_n(P)$.
Furthermore, also note that since the event $\mathcal E_n^c$ implies $\hat y_n^{(1)} = 0$ we have
\begin{equation}\label{lem:y_hat5}
  \inf_{P\in \mathbf P_0 : \mathcal Y(P;J^*) \neq \emptyset } P\left\{\hat y_n^{(1)} = 0\right\} \geq   
  \inf_{P\in \mathbf P_0 : \mathcal Y(P;J^*) \neq \emptyset } P\left\{t_n > \Delta_n(P) \text{ and } \{Z_i\}_{i\in I_{n,1}}\in \mathcal E_n^c\right\}.
\end{equation}
Results \eqref{lem:y_hat3}, \eqref{lem:y_hat4}, and \eqref{lem:y_hat5} and the events $\hat y_n^{(1)}\in \mathcal{Y}(P;J^*)$ and $\hat y_n^{(1)} =0$ being mutually exclusive yield
\begin{equation}\label{lem:y_hat6}
    \inf_{P\in \mathbf P_0} P\left\{\{\hat y_n^{(1)} \in \mathcal Y(P;J^*)\}\cup \{\hat y_n^{(1)} = 0\}\right\} \geq \inf_{P\in \mathbf P} P\left\{t_n > \Delta_n(P)\right\}.
\end{equation}

Parts (b) and (c) of the lemma therefore follow from \eqref{lem:y_hat6} provided that we can show that
\begin{equation}\label{lem:y_hat7}
\lim_{n\to\infty} \inf_{P\in \mathbf P} P\left\{t_n > \Delta_n(P)\right\} = 1.     
\end{equation}
In particular, part (b) follows by noting that Lemma \ref{lem:pd13}(a) implies $\Delta_n(P) = O_P((pd_1)^{1/3})$ uniformly in $P\in \mathbf P$ and therefore the assumption $\max_{j \in (J^*)^c}\{(pd_1)^{1/3}/\hat \omega_{j,n}\} = o_P(1)$ uniformly in $P\in \mathbf P$ implies \eqref{lem:y_hat7}.
Similarly, part (c) follows from Lemma \ref{lem:pd13}(b) implying $\Delta_n(P) = O_P(C_{\xi,p}\sqrt{\log(p+d_1)})$ uniformly in $P\in \mathbf P$, which together with $\max_{j \in (J^*)^c}\{C_{\xi,p}\sqrt{\log(p+d_1)}/\hat \omega_{j,n}\} = o_P(1)$ uniformly in $P\in \mathbf P$ yields \eqref{lem:y_hat7}.
\end{proof}

\section{Auxiliary Lemmas}

\subsection{Lemmas for Section \ref{sec:farkas}}

The following lemmas are employed in the proof of Theorem \ref{thm:closure}. In what follows we let $\|(A_0,A_1,b)\|_{\sup}$ denote the largest absolute value of the entries of the triple $(A_0, A_1, b)$, and $\mathbf C_0$, $\mathbf C_0^{\rm RD}$, $\bar {\mathbf C}_0$, and $\mathbf C^{\rm NP}$ be as defined in \eqref{eq:C0}, \eqref{eq:CRD:def}, \eqref{eq:C0bar}, and \eqref{eq:CNP:def} respectively. 

\begin{lemma}\label{lem:closure_part1}
$\mathrm{cl}(\mathbf C_0) \cap (\mathbf C^{\rm RD})^{c} \subseteq \bar{\mathbf C}_0$.
\end{lemma}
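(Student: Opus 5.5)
Take a point $(A_0,A_1,b)\in\mathrm{cl}(\mathbf C_0)$ with $\mathrm{rank}(A_0)=d_0$, and pick a sequence $(A_0^{(k)},A_1^{(k)},b^{(k)})\in\mathbf C_0$ converging to it in $\|\cdot\|_{\sup}$. The plan is to pass the Farkas-type inequality from the sequence to the limit. The only place continuity could fail is the map $A_0\mapsto M_0$, and the full column rank assumption is what makes it continuous.

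First I would show that $M_0^{(k)}\to M_0$. Since $A_0$ has full column rank $d_0$, the matrix $A_0'A_0$ is invertible and $M_0=I-A_0(A_0'A_0)^{-1}A_0'$. Full column rank is an open condition, because the smallest singular value is continuous. Hence for $k$ large $A_0^{(k)}$ also has full column rank, and $M_0^{(k)}=I-A_0^{(k)}(A_0^{(k)\prime}A_0^{(k)})^{-1}A_0^{(k)\prime}$. Continuity of matrix inversion on invertible matrices then gives $M_0^{(k)}\to M_0$. This is the step where excluding $\mathbf C^{\rm RD}$ is essential. Without it the projection can jump; for instance $A_0^{(k)}=1/k\to0$ in the scalar example.

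Second I would check the inequality along the sequence. For each $k$, membership in $\mathbf C_0$ together with Lemma \ref{lem:annihilator} gives some $x_1^{(k)}\ge0$ with $M_0^{(k)}A_1^{(k)}x_1^{(k)}=M_0^{(k)}b^{(k)}$. Fix any $y\in\mathbb R^p$. By symmetry of $M_0^{(k)}$,
\[
b^{(k)\prime}M_0^{(k)}y=\sum_{j}x_{1j}^{(k)}\,a_j^{(k)\prime}M_0^{(k)}y .
\]
Suppose $\min_j a_j^{(k)\prime}M_0^{(k)}y>0$. Then the right-hand side is nonnegative, so $-b^{(k)\prime}M_0^{(k)}y\le0$. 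In the other case the inner minimum is already $\le0$. Either way,
\[
f_k(y):=\min\Big\{\min_j a_j^{(k)\prime}M_0^{(k)}y,\,-b^{(k)\prime}M_0^{(k)}y\Big\}\le0 \quad\text{for all } y.
\]
This is just the easy direction of Farkas' lemma, so no separation argument is needed.

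Third I would pass to the limit pointwise in $y$. For fixed $y$, the quantity $f_k(y)$ is a minimum of finitely many continuous functions of $(A_1^{(k)},b^{(k)},M_0^{(k)})$. By the first step, $f_k(y)\to f(y)$, where $f$ is the same expression evaluated at the limit triple. Thus $f(y)\le0$ for every $y$, so $\sup_y f(y)\le0$, which is exactly $(A_0,A_1,b)\in\bar{\mathbf C}_0$. No uniformity in $y$ is required, because the supremum bound follows from the pointwise bound.

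The main obstacle is the continuity of $M_0$ in the first step. Once it is settled through openness of full column rank and the explicit projection formula, the rest is routine.
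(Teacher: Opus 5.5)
Your proof is correct and follows essentially the same route as the paper: full column rank of $A_0$ gives $M_0^{(k)}\to M_0$, the Farkas-type inequality holds along the approximating sequence, and it passes to the limit pointwise in $y$. The only differences are cosmetic. You derive the easy direction of Farkas directly from $M_0^{(k)}A_1^{(k)}x_1^{(k)}=M_0^{(k)}b^{(k)}$ and conclude via continuity of a finite minimum, whereas the paper invokes Farkas' lemma and uses a case split with a subsequence on which the violating index is constant.
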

\begin{proof}
Fix some triple $(A_0,A_1,b) \in \mathrm{cl}(\mathbf C_0) \cap (\mathbf C^{\rm RD})^{c}$. Since this triple lies in the closure of $\mathbf C_0$, there exists a sequence of triples $\{(A_{0n}, A_{1n}, b_{n})\}_{n \geq 1} \in \mathbf{C}_0$ which converges to it. Let $M_{0n}$ denote the annihilator matrix for $A_{0n}$. Because $(A_0, A_1, b) \in (\mathbf C^{\rm RD})^c$, we have $\mathrm{rank}(A_0) = d_0$. Since the rank function is lower-semicontinuous \citep[see equation (4) in][]{hiriart-urruty2013variational}, it follows that $\mathrm{rank}(A_{0n}) = d_0$ for $n$ large enough and thus $M_{0n} = I - A_{0n} (A_{0n}' A_{0n})^{-1} A_{0n}'$, which implies $M_{0n} \to M_0$.

Let $c_1, \dots, c_{d_1}$ denote the columns of $M_0A_1$ and similarly set $c_{n,1},\ldots, c_{d_1,n}$ to be the columns of $M_{0n}A_{1n}$.  Since $(A_{0n},A_{1n}, b_n)\in \mathbf C_0$, Lemma \ref{lem:annihilator} and Farkas' lemma imply that for each fixed $y \in \mathbb R^p$, either there exists a $1 \le j(n) \le d_1$ such that $c_{n,j(n)}'y < 0$ or $b_n' M_{0n} y \ge 0$. If for all $n$ large enough $b_n' M_{0n} y \ge 0$, then $(M_{0n},b_n)\to (M_0,b)$ implies that $b' M_0 y \geq 0$ by continuity. Otherwise, since $d_1$ is finite, there exists $1 \leq j^* \leq d_1$ and a subsequence (indexed by $n_k$) along which $j(n_k) \equiv j^*$; that is, $c_{n_k, j^*}' y < 0$ for all $k$. It then follows by continuity again that $c_{n_k, j^*}  \rightarrow c_{j^*}$, and thus $c_{n_k, j^*}' y \rightarrow c_{j^*}'y \le 0$. Since $y$ was arbitrary, we obtain by the definition of $\bar{\mathbf C}_0$ in \eqref{eq:C0bar} that $(A_0,A_1,b) \in \bar{\mathbf C}_0$, as desired.
\end{proof}

\begin{lemma}\label{lem:closure_part2}
$\mathbf C^{\rm RD} \subseteq \mathrm{cl}(\mathbf C_0)$.
\end{lemma}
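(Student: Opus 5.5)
We need to show that any triple $(A_0,A_1,b)$ with $\mathrm{rank}(A_0)<d_0$ is a limit of triples in $\mathbf C_0$. The plan is to leave $A_1$ and $b$ fixed and perturb only $A_0$. We will find small perturbations $A_{0n}\to A_0$ whose column space contains the vector $b$ itself. The triple $(A_{0n},A_1,b)$ then lies in $\mathbf C_0$, since we can take $x_1=0$ and pick $x_0$ with $A_{0n}x_0=b$.

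The construction goes as follows.
\begin{enumerate}
\item Since $\mathrm{rank}(A_0)<d_0$, the null space of $A_0$ is nontrivial. Choose a unit vector $v\in\mathbb R^{d_0}$ with $A_0v=0$.
\item For $n\geq 1$, define the rank-one perturbation
\[
A_{0n}:=A_0+\tfrac1n\, b\,v'.
\]
\item Then $A_{0n}v=A_0v+\tfrac1n b\,(v'v)=\tfrac1n b$. Setting $x_0=nv$ and $x_1=0\geq 0$ gives $A_{0n}x_0+A_1x_1=b$, so $(A_{0n},A_1,b)\in\mathbf C_0$.
\item Finally, $\|(A_{0n},A_1,b)-(A_0,A_1,b)\|_{\sup}\le \|b\|_\infty/n\to 0$. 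Hence $(A_0,A_1,b)\in\mathrm{cl}(\mathbf C_0)$.
\end{enumerate}

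Two edge cases need a brief check. If $b=0$, the triple already lies in $\mathbf C_0$ (take $x_0=0$, $x_1=0$), and the construction above also reduces to $A_{0n}=A_0$. If $d_1=0$, the same argument works because $x_1$ plays no role.

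There is no real obstacle here. The only point requiring care is that rank deficiency of $A_0$ must be used to guarantee the existence of the null vector $v$, and the argument should not rely on $p<d_0$.
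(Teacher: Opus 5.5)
Your proof is correct and takes essentially the same route as the paper. Both arguments fix $A_1$ and $b$, add a rank-one perturbation of size $O(\epsilon\|b\|_\infty)$ to $A_0$ so that a null vector of $A_0$ is mapped to a multiple of $b$, and then take $x_1=0$. The only difference is cosmetic: the paper adds $\epsilon b$ to a single column $v_1$ whose coefficient $\lambda_1$ in a linear dependence is nonzero, whereas your perturbation $\frac1n b v'$ acts along the whole null vector and so avoids the ``without loss of generality'' step.
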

\begin{proof}
Fix some triple $(A_0, A_1, b) \in \mathbf C^{\rm RD}$. To establish the result, we construct a new triple $(A_0^{\epsilon}, A_1, b) \in \mathbf{C}_0$ such that $\|(A_0^\epsilon, A_1, b) - (A_0, A_1, b)\|_{\rm sup} < \|b\|_{\infty}\epsilon$ for any $\epsilon > 0$. Let $v_j$ for $1 \le j \le d_0$ denote the columns of $A_0$. Since $(A_0, A_1, b) \in \mathbf C^{\rm RD}$, there exists scalars $\lambda_j$ for $1 \le j \le d_0$, not all zero, such that $\sum_{1 \le j \le d_0}\lambda_jv_j = 0$. Assume $\lambda_1 \neq 0$ without loss of generality and define $A_0^\epsilon$ as the matrix with columns given by $v_1 + b\epsilon, v_2, \ldots, v_{d_0}$. Letting $x_0 = (1/\epsilon, \lambda_2/(\lambda_1 \epsilon), \ldots, \lambda_d/(\lambda_1 \epsilon))^\prime$ 
%\begin{pmatrix}
%  \frac{1}{\epsilon} &
%  \frac{\lambda_2}{\lambda_1 \epsilon} &
%  \cdots &
%  \frac{\lambda_d}{\lambda_1 \epsilon}
%\end{pmatrix}'$ 
and
$x_1 = 0$, it is immediate that
\[A^\epsilon_0x_0 + A_1x_1 = b~,\]
which implies $(A_0^{\epsilon}, A_1, b) \in \mathbf{C}_0$. Moreover, by construction, $\|(A_0, A_1, b) - (A_0^{\epsilon}, A_1, b)\|_{\rm sup} < \|b\|_{\infty}\epsilon$. Since $\epsilon > 0$ is arbitrary, we conclude that $\mathbf C^{\rm RD} \subseteq \mathrm{cl}(\mathbf C_0)$.
\end{proof}

\begin{lemma} \label{lem:closure_part3}
$\bar{\mathbf{C}}_0 \cap (\mathbf C^{\rm NP})^c \subseteq \mathbf C_0$.  
\end{lemma}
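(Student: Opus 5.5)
Fix a triple $(A_0,A_1,b)\in \bar{\mathbf C}_0\cap(\mathbf C^{\rm NP})^c$ and write $c_j := M_0 a_j$ for $1\le j\le d_1$, so that $K:=\{M_0A_1x_1:x_1\ge 0\}$ is the closed convex cone generated by $c_1,\dots,c_{d_1}$. By Lemma \ref{lem:annihilator}, it suffices to show $M_0 b\in K$. I argue by contradiction. Suppose $M_0 b\notin K$. Since $K$ is finitely generated, it is closed, and Farkas' lemma gives $y\in\mathbb R^p$ with $c_j'y\ge 0$ for all $j$ and $b'M_0y<0$. This separating functional has only weak inequalities on the generators, whereas membership in $\bar{\mathbf C}_0$ requires that \emph{no} $y$ makes every term in $\min\{\min_j c_j'y,\,-b'M_0y\}$ strictly positive. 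So the goal is to perturb $y$ until all the inequalities $c_j'y\ge0$ become strict, while keeping $b'M_0y<0$.

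This is where pointedness is used. Because $K$ is pointed, i.e.\ $K\cap(-K)=\{0\}$, the dual cone $K^*=\{z: k'z\ge 0 \ \forall k\in K\}$ has nonempty interior. I will prove this by the standard duality argument: if $\mathrm{int}(K^*)=\emptyset$, then $K^*$ lies in a hyperplane $\{z: w'z=0\}$ with $w\neq0$. Then both $w$ and $-w$ lie in $(K^*)^*=K$, since $K$ is closed. This contradicts pointedness. Take $z$ in the interior of $K^*$; then $k'z>0$ for all nonzero $k\in K$, and in particular $c_j'z>0$ for every $j$ with $c_j\ne0$.

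Next, set $y_\epsilon := y+\epsilon z$. For each $j$ with $c_j\neq 0$ we get $c_j'y_\epsilon\ge \epsilon c_j'z>0$. By continuity, $-b'M_0y_\epsilon>0$ for $\epsilon$ small. The case $c_j=0$ needs care: then $c_j'y_\epsilon=0$ for every $\epsilon$, so $\min_j c_j'y_\epsilon$ is not strictly positive. I would handle this by noting that when some $c_j=0$, the minimum over $j$ in \eqref{eq:C0bar} is at most $0$ for every $y$, so the defining inequality of $\bar{\mathbf C}_0$ carries no information. The lemma therefore needs either that the zero columns are excluded, or that $\mathbf C^{\rm NP}$ is interpreted so that such degenerate generators are handled separately.

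The main obstacle is this degenerate case. My first attempt will be to drop the zero columns, since they contribute nothing to $K$, and check whether the intended reading of \eqref{eq:C0bar} or of $\mathbf C^{\rm NP}$ covers it. Failing that, I will show that the paper's overall argument in Theorem \ref{thm:closure} only needs the inclusion for triples with all $c_j\neq0$. The remaining triples, where some $c_j=0$, should then be shown to lie in $\mathrm{cl}(\mathbf C_0)$ directly: perturb $a_j$ by $\epsilon M_0 b$, which makes $M_0b=\epsilon^{-1}(M_0a_j^\epsilon)$ attainable with $x_1\ge0$. Outside this degenerate case, the resulting $y_\epsilon$ gives $\min\{\min_j c_j'y_\epsilon,-b'M_0y_\epsilon\}>0$, contradicting $(A_0,A_1,b)\in\bar{\mathbf C}_0$, so $M_0b\in K$ and the triple lies in $\mathbf C_0$.
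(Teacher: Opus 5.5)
Your argument for the nondegenerate case is essentially the paper's. Pointedness of $K$ gives a point $z$ in the interior of $K^*$; your hyperplane/bidual argument is the proof of Lemma~\ref{lem:boyd}(c). You then shift a Farkas certificate $y$ to $y+\epsilon z$, which is how the paper obtains $\mathrm{cl}(K_o^*)=K^*$ in Lemma~\ref{lem:pointed}; the only difference is that you apply Farkas by contradiction first rather than at the end. That part is correct whenever every column $c_j=M_0a_j$ is nonzero.

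Your worry about zero columns is not a weakness of your proof. It is a flaw in the statement, which is false as written. Take $A_0$ absent (so $M_0=I$), $p=1$, $A_1=(1\ \ 0)$, $b=-1$:
\begin{itemize}
\item $K=[0,\infty)$ is pointed, so the triple is in $(\mathbf C^{\rm NP})^c$.
\item The triple lies in $\bar{\mathbf C}_0$, because $\min_j c_j'y\le c_2'y=0$ for every $y$.
\item The equation $x_{1,1}+0\cdot x_{1,2}=-1$ has no solution with $x_1\ge 0$, so the triple is not in $\mathbf C_0$.
\end{itemize}
The paper's proof fails at exactly this spot. Lemma~\ref{lem:pointed} identifies $\mathrm{int}(K^*)$ with $\{y: a_j'y>0 \text{ for all } j\}$. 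That set is empty as soon as some $a_j=0$, even though $\mathrm{int}(K^*)$ need not be. Lemma~\ref{lem:boyd} also silently excludes $K=\{0\}$, e.g.\ $A_1=0$, $b=1$.

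So your first attempt, dropping the zero columns, cannot rescue the inclusion. Membership in $\bar{\mathbf C}_0$ is computed with those columns present, and there the defining condition is vacuous. You should state plainly that the lemma needs an extra hypothesis and commit to your fallback.

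The clean repair is to enlarge $\mathbf C^{\rm NP}$ to the triples for which $M_0A_1\lambda=0$ for some $\lambda\ge 0$, $\lambda\neq 0$.
\begin{itemize}
\item By Gordan's alternative, the complement of this set is exactly where $\{y: c_j'y>0 \text{ for all } j\}\neq\emptyset$. That nonemptiness is all your argument, and the paper's, actually uses.
\item The proof of Lemma~\ref{lem:closure_part4} only uses the existence of such a $\lambda$, so it goes through verbatim for the enlarged set.
\item With $\lambda=e_j$ for a zero column, that construction is precisely your perturbation $a_j+\epsilon M_0 b$.
\end{itemize}
Theorem~\ref{thm:closure} is therefore unaffected, but Lemma~\ref{lem:closure_part3} and Lemma~\ref{lem:pointed} are only true under the additional nondegeneracy condition.
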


\begin{proof}
Suppose $(A_0, A_1, b) \in \bar{\mathbf{C}}_0 \cap (\mathbf C^{\rm NP})^c$ and let $c_1,\ldots, c_{d_1}$ denote the columns of $M_0A_1$.
Since the triple $(A_0, A_1, b) \in \bar{\mathbf{C}}_0$, it follows that for all $y \in \mathbb R^p$, if $c_j' y > 0$ for $1 \leq j \leq d_1$, then $b' M_0 y \geq 0$.
Moreover, since $(A_0,A_1,b)\in (\mathbf C^{\rm NP})^c$ implies the cone $K := \{M_0A_1x_1 : x_1 \geq 0\}$ is pointed, we can apply Lemma \ref{lem:pointed} to conclude that for all $y\in \mathbb R^p$, if $c_j^\prime y \geq 0$ for all $1\leq j \leq d_1$, then $b^\prime M_0 y \geq 0$.
It then follows from Farkas' lemma and Lemma \ref{lem:annihilator} that $(A_0, A_1, b) \in \mathbf C_0$ as desired.
\end{proof}

\begin{lemma}\label{lem:closure_part4}
$\mathbf C^{\rm NP} \subseteq \mathrm{cl}(\mathbf C_0)$.
\end{lemma}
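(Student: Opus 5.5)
The plan is to perturb only $A_1$, keeping $A_0$ and $b$ fixed, so that the perturbed triple lies in $\mathbf C_0$. Fix $(A_0,A_1,b)\in\mathbf C^{\rm NP}$ and write $K:=\{M_0A_1x_1: x_1\ge 0\}$, a finitely generated convex cone. Since $K$ is not pointed, there is a nonzero $v\in K$ with $-v\in K$. Hence $K$ contains a line, and in particular $0=v+(-v)$ can be written as a conic combination of the generators with positive weight on some nonzero direction.

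Concretely, there is $\lambda\ge 0$, $\lambda\neq 0$, with $M_0A_1\lambda=0$ and $A_1\lambda\notin\mathrm{col}(A_0)$, or at least $M_0 A_1\lambda = 0$ with some $\lambda_{j}>0$ for a column $c_{j}=M_0a_{j}\neq 0$. To see this, take $v=M_0A_1x$ and $-v=M_0A_1x'$ with $x,x'\ge 0$, and set $\lambda=x+x'$. Then $M_0A_1\lambda=0$. Some $j$ with $\lambda_j>0$ has $c_j\ne 0$, since otherwise $v=0$.

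Next I use $\lambda$ to absorb $b$. Fix $\epsilon>0$ and an index $j_0$ with $\lambda_{j_0}>0$. Define $A_1^\epsilon$ by replacing column $a_{j_0}$ with $a_{j_0}+\epsilon M_0 b/\lambda_{j_0}$, leaving the other columns unchanged. Set $x_1:=\lambda/\epsilon\ge 0$. Then
\[
M_0A_1^\epsilon x_1=\tfrac{1}{\epsilon}M_0A_1\lambda+\tfrac{\lambda_{j_0}}{\epsilon}\cdot\tfrac{\epsilon}{\lambda_{j_0}}M_0M_0b=M_0b,
\]
using $M_0^2=M_0$. By Lemma \ref{lem:annihilator}, the triple $(A_0,A_1^\epsilon,b)$ lies in $\mathbf C_0$. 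Its sup-distance to $(A_0,A_1,b)$ is at most $\epsilon\|M_0b\|_\infty/\lambda_{j_0}$, which tends to $0$ as $\epsilon\to 0$. Therefore $(A_0,A_1,b)\in\mathrm{cl}(\mathbf C_0)$.

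This construction does not need the index with $c_{j_0}\ne 0$; any $j_0$ with $\lambda_{j_0}>0$ works. So the only genuine step is extracting a nonzero $\lambda\ge0$ with $M_0A_1\lambda=0$ from non-pointedness. I expect this to be the main (though mild) obstacle. It holds because a nonzero $v$ with $\pm v\in K$ forces $x+x'\neq 0$. If instead one reads ``not pointed'' as including the degenerate case $K=\{0\}$, then every column satisfies $c_j=0$. In that case any $\lambda=e_1$ works, provided $d_1\ge1$. The construction is identical to the one in Lemma \ref{lem:closure_part2}, with $A_1$ in place of $A_0$.
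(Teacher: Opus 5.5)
Your proof is correct and is essentially the paper's argument. Both extract $\lambda\ge0$, $\lambda\neq0$ with $M_0A_1\lambda=0$ from non-pointedness, add a multiple of $\epsilon M_0b$ to one column with $\lambda_{j_0}>0$, and take $x_1\propto\lambda/\epsilon$; you certify membership via Lemma~\ref{lem:annihilator}, whereas the paper exhibits $\hat x_0$ explicitly. Your aside ``$A_1\lambda\notin\mathrm{col}(A_0)$'' contradicts $M_0A_1\lambda=0$ and should be deleted, but you never use it.

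Your observation that any nonzero $\lambda\ge0$ in $\ker(M_0A_1)$ suffices is worth keeping. It also covers zero columns of $M_0A_1$, including $K=\{0\}$. In that case Lemma~\ref{lem:pointed}, and hence Lemma~\ref{lem:closure_part3}, fail as stated: take $p=1$, no $A_0$, $A_1=(1\;\,0)$, $b=-1$. So enlarging $\mathbf C^{\rm NP}$ to all triples admitting such a $\lambda$ is what actually closes the proof of Theorem~\ref{thm:closure}.
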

\begin{proof}
Fix some triple $(A_0, A_1, b) \in \mathbf C^{\rm NP}$. We will construct a new triple $(A_0, A_1^\epsilon, b) \in \mathbf{C}_0$ such that $\|(A_0, A_1^\epsilon, b) - (A_0, A_1, b)\|_{\rm sup} < \|M_0b\|_{\infty}\epsilon$ for any $\epsilon > 0$. 
To this end, note that since $(A_0,A_1,b)\in \mathbf C^{\rm NP}$ implies $\{M_0 A_1x_1:x_1 \geq 0\}$ is not pointed, it follows that there is a $z\neq 0$ such that $z = M_0 A_1 x_1 = -M_0 A_1 \tilde x_1$ for some $x_1,\tilde x_1 \geq 0$.
In particular, letting $c_1,\ldots, c_{d_1}$ denote the columns of $M_0A_1$, we conclude that there are scalars $\lambda_j \geq 0$ for $1 \le j \le d_1$, not all zero, such that $\sum_{1 \le j \le d_1}\lambda_jc_j = 0$. 
Assume $\lambda_1 > 0$ without loss of generality and define $A_1^\epsilon = \Pi_0 A_1 + (c_1 + \epsilon M_0b, c_2, \ldots, c_{d_1})$, where recall $\Pi_0$ denotes the projection matrix onto the column space of $A_0$.
Letting $\hat x_1 = (1/\epsilon, \lambda_2/(\lambda_1 \epsilon),\ldots, \lambda_d/(\lambda_1\epsilon))^\prime$ we then obtain by direct calculation
%\begin{pmatrix}
%  \frac{1}{\epsilon} &
%  \frac{\lambda_2}{\lambda_1\epsilon} &
%  \cdots &
%  \frac{\lambda_d}{\lambda_1\epsilon}
%\end{pmatrix}'~,$
\[ - \Pi_0 A_1 \hat x_1 + \Pi_0 b + A_1^\epsilon \hat x_1 = b \]
and $\|(A_0, A_1^\epsilon, b) - (A_0, A_1, b)\|_{\rm sup} \le \|M_0b\|_{\infty}\epsilon$. 
Further note $- \Pi_0 A_1 \hat x_1 + \Pi_0 b = A_0 \hat x_0$ for some $\hat x_0$. Therefore, since $\hat x_1 \geq 0$ by construction, it follows that $(A_0, A_1^\epsilon, b) \in \mathbf C_0$. Because $\epsilon$ is arbitrary, the result follows.
\end{proof}

\begin{lemma} \label{lem:pointed}
Let $A$ be a $p \times d$ matrix, $b$ be a $p \times 1$ vector, and $a_1, \dots, a_d$ denote the columns of $A$. Suppose the cone $K = \{Ax: x \geq 0\}$ is pointed (meaning $K\cap (-K) = \{0\}$). Then, the following are equivalent:
\vspace{-0.1 in}
\begin{packed_enum}
    \item[(a)] For all $y \in \mathbb R^p$, if $a_j' y \geq 0$ for $1 \leq j \leq d$, then $b'y \geq 0$.
    \item[(b)] For all $y \in \mathbb R^p$, if $a_j' y > 0$ for $1 \leq j \leq d$, then $b'y \geq 0$.
\end{packed_enum}
\end{lemma}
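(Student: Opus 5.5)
The direction (a) $\Rightarrow$ (b) is immediate, since strict positivity of every $a_j'y$ implies weak positivity. All the work is in (b) $\Rightarrow$ (a), which I would prove by a perturbation argument using pointedness.

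The plan is to show that pointedness of $K$ gives a vector $y_0$ with $a_j'y_0>0$ for every $j$ with $a_j\neq 0$. Then, given any $y$ with $a_j'y\ge 0$ for all $j$, I apply (b) to $y+\epsilon y_0$ and let $\epsilon\downarrow 0$, using continuity of $b'(\cdot)$ to conclude $b'y\ge 0$.

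\textbf{Step 1: existence of $y_0$ from pointedness.} First discard zero columns. If every $a_j=0$, the hypothesis of (b) is never satisfied, so (b) holds vacuously, while (a) says $b'y\ge0$ for all $y$, i.e.\ $b=0$. So this degenerate case needs separate treatment. In it, (b) $\Rightarrow$ (a) fails unless $b=0$, so I would record that the intended reading requires the nonzero columns to be handled. More precisely, I would show the equivalence when all $a_j\neq 0$, and observe that zero columns impose no constraint in (a) but make (b) vacuous. In the application (Lemma \ref{lem:closure_part3}), this case needs a check or a convention: when some $c_j=0$ one can argue directly via Farkas, or interpret the strict inequality as being required only over nonzero columns.

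Assume now all $a_j\neq 0$. I want $y_0$ with $Ay_0>0$ componentwise. By Gordan's alternative (a Farkas-type theorem of the alternative), either such a $y_0$ exists or there is $\lambda\ge0$, $\lambda\neq0$, with $A\lambda=0$. In the latter case take $j$ with $\lambda_j>0$. Then $a_j=-\sum_{k\neq j}(\lambda_k/\lambda_j)a_k\in K\cap(-K)$ and $a_j\neq 0$, contradicting pointedness. Hence $y_0$ exists.

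\textbf{Step 2: perturbation.} Fix $y$ with $a_j'y\ge0$ for all $j$. For every $\epsilon>0$, $a_j'(y+\epsilon y_0)\ge \epsilon a_j'y_0>0$, so (b) gives $b'y+\epsilon b'y_0\ge 0$. Letting $\epsilon\to0$ yields $b'y\ge0$, which is (a).

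\textbf{Main obstacle.} The only substantive step is Step 1, namely converting pointedness into a strictly separating $y_0$ via Gordan's theorem. The delicate bookkeeping point is the zero-column case, where I would need to check how Lemma \ref{lem:closure_part3} uses this lemma so that the argument remains valid.
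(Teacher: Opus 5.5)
Your argument is correct once you add the hypothesis that every $a_j\neq 0$. Your worry about zero columns is justified: that hypothesis is genuinely needed, and the lemma as written is false without it. For instance, take $p=1$, $d=2$, $a_1=0$, $a_2=1$, $b=-1$. The cone $K=[0,\infty)$ is pointed and (b) holds vacuously, but (a) fails at $y=1$.

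The paper's proof breaks exactly at its assertion that the interior of $K^*$ equals $\{y: a_j'y>0 \text{ for } 1\le j\le d\}$. By Lemma \ref{lem:boyd}(a) the interior is $K^*_o$, which equals $\{y: a_j'y>0 \text{ for all } j \text{ with } a_j\neq 0\}$. That set is nonempty for a pointed $K$ even when the former set is empty.

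Two small points on your write-up:
\begin{itemize}
\item You cannot literally ``discard zero columns'' as Step 1 first says, since that changes (b). Your subsequent formulation is the right one: all $a_j\neq 0$, equivalently no $\lambda\ge 0$, $\lambda\neq 0$ with $A\lambda=0$.
\item $Ay_0>0$ should read $A'y_0>0$.
\end{itemize}

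Under that hypothesis, your route differs from the paper's in how the strictly feasible point is produced. The paper shows that $K$ is closed and identifies $K^*$ and its interior. It then deduces nonemptiness of the interior from pointedness via the bipolar theorem (Lemma \ref{lem:boyd}(b)--(c)). Finally it invokes Rockafellar's result that $\mathrm{cl}(\mathrm{int}\,C)=\mathrm{cl}(C)$ for convex $C$ with nonempty interior. You obtain $y_0$ in one step from Gordan's alternative and carry out the perturbation $y+\epsilon y_0$ by hand, which is all Rockafellar's lemma amounts to here. Your version is more elementary and avoids closedness and duality of cones. It also makes transparent exactly what is needed, namely a $y_0$ with $a_j'y_0>0$ for every $j$.

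Neither remedy you suggest for Lemma \ref{lem:closure_part3} works, however. If $M_0A_1$ has a zero column, the defining condition of $\bar{\mathbf C}_0$ is vacuous. So a triple with a pointed cone and a zero column lies in $\bar{\mathbf C}_0\cap(\mathbf C^{\mathrm{NP}})^c$ without having to lie in $\mathbf C_0$; take $d_0=0$, $p=1$, $A_1=(0\;\;1)$, $b=-1$. That inclusion is itself false, and no reading of the present lemma can rescue it.

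The repair is to enlarge $\mathbf C^{\mathrm{NP}}$ to all triples admitting $\lambda\ge 0$, $\lambda\neq 0$ with $M_0A_1\lambda=0$. The construction in Lemma \ref{lem:closure_part4} uses only such a $\lambda$, so these triples still lie in $\mathrm{cl}(\mathbf C_0)$. On the complement, Gordan's alternative supplies your $y_0$. Theorem \ref{thm:closure} therefore survives unchanged.
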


\begin{proof}
It is clear that if statement (a) is true, then (b) must be true as well. 
To show (b) implies (a), note that $K$ is a nonempty closed convex cone, where closedness follows from Corollary 2.5 and Theorem 4.9 in \cite{bertsimas1997introduction}. Because $K$ is finitely-generated, its dual cone $K^*$ satisfies $K^* = \{y \in \mathbb R^p: a_j'y \geq 0 \text{ for } 1 \leq j \leq d\}$. 
Similarly, Lemma \ref{lem:boyd}(a) implies the interior of $K^*$ equals $K_o^* = \{y \in \mathbb R^p: a_j'y > 0 \text{ for } 1 \leq j \leq d\}$. 
Moreover, since $K$ is pointed, Lemma \ref{lem:boyd}(c) further implies that $K_o^* \neq \emptyset$, which together with Lemma 6.3 in \cite{rockafellar1970convex} allows us to conclude that $\mathrm{cl}(K_o^*) = K^*$. 
Hence, since part (a) states $b^\prime y \geq 0$ for all $y \in K^*$ and part (b) states $b^\prime y \geq 0$ for all $y \in K^*_o$, the fact that part (b) implies part (a) follows from continuity of $y \mapsto b'y$ and $\mathrm{cl}(K_o^*) = K^*$.
%is continuous, statement (b) implies statement (a) as long as $\mathrm{cl}(K_o^*) = K^*$. Obviously, $\mathrm{cl}(K_o^*) \subseteq K^*$. On the other hand, pick any $y_0 \in K_o^*$. For any $y \in K^*$, $y + \frac{1}{n} y_0 \in K_o^*$ for every $n \geq 1$, and $y + \frac{1}{n} y_0 \to y$ as $n \to \infty$, so $K^* \subseteq \mathrm{cl}(K_o^*)$, and hence $\mathrm{cl}(K_o^*) = K^*$.
\end{proof}

\begin{lemma} \label{lem:boyd}
Let $K\neq \{0\}$ be a nonempty closed convex cone in $\mathbb R^k$ and recall that its dual cone $K^*$ is defined to be $K^*:=  \{y\in \mathbb R^k: y'x \geq 0 \text{ for all } x \in K\}$. Then, it follows that:
\vspace{-0.1 in}
\begin{packed_enum}
    % \item $K^*$ is closed.
    \item[(a)] The interior of $K^*$ relative to $\mathbb R^k$ equals $K_o^* = \{y\in \mathbb R^k: y'x > 0 \text{ for all } x \in K \setminus \{0\}\}$.
    \item[(b)] $K^{**} = K$ for $K^{**}$ the dual cone of $K^*$.
    \item[(c)] $K$ is pointed if and only if $K_o^* \neq \emptyset$.
\end{packed_enum}
\end{lemma}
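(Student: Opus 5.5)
We prove the three parts in the order (b), (a), (c), since (c) will use both of the others. The tools are elementary convex analysis: a separating hyperplane argument, compactness of the unit sphere, and the fact that a convex set with empty interior lies in a proper affine subspace.

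\textbf{Part (b).} The inclusion $K \subseteq K^{**}$ is immediate from the definitions. For the reverse inclusion, suppose $x_0 \notin K$. Since $K$ is a nonempty closed convex cone, strict separation of the point $x_0$ from $K$ gives a vector $y$ and a scalar $c$ with $y'x > c > y'x_0$ for all $x \in K$. Because $0 \in K$, we get $c < 0$. Because $K$ is a cone, $y'x \geq 0$ for all $x \in K$: if $y'x < 0$ for some $x\in K$, then scaling $x$ up would push $y'x$ below $c$. Hence $y \in K^*$ while $y'x_0 < 0$, so $x_0 \notin K^{**}$.

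\textbf{Part (a).} First take $y \in K_o^*$. The set $S := K \cap \{x : \|x\|_2 = 1\}$ is nonempty because $K \neq \{0\}$ and $K$ is a cone, and it is compact because $K$ is closed. Since $x \mapsto y'x$ is continuous and strictly positive on $S$, it attains a minimum $m > 0$ there. For any $z$ with $\|z\|_2 < m$ and any $x \in S$ we have $(y+z)'x \geq m - \|z\|_2 > 0$. By homogeneity this gives $(y+z)'x \geq 0$ for every $x \in K$, so the whole ball of radius $m$ around $y$ lies in $K^*$, and $y$ is interior.

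Conversely, suppose $y$ is interior to $K^*$, so $B(y,\delta) \subseteq K^*$ for some $\delta > 0$. For any $x \in K \setminus \{0\}$, the point $y - \tfrac{\delta}{2} x/\|x\|_2$ lies in this ball and hence in $K^*$. Testing it against $x$ gives $y'x \geq \tfrac{\delta}{2}\|x\|_2 > 0$, so $y \in K_o^*$.

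\textbf{Part (c).} Suppose first that some $y \in K_o^*$ exists and that $x \neq 0$ lies in $K \cap (-K)$. Then both $x$ and $-x$ are nonzero elements of $K$, so $y'x > 0$ and $y'(-x) > 0$, a contradiction. Hence $K$ is pointed.

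For the converse we argue by contradiction, and this is the step I expect to be the main obstacle. Suppose $K$ is pointed but $K_o^* = \emptyset$; by part (a), $K^*$ then has empty interior. A convex set with empty interior has affine hull that is a proper affine subspace. Since $0 \in K^*$, that affine hull is a proper linear subspace, so $K^* \subseteq \{y : v'y = 0\}$ for some $v \neq 0$. It follows that $v'y \geq 0$ and $(-v)'y \geq 0$ for every $y \in K^*$, i.e., $v, -v \in K^{**}$. By part (b), $K^{**} = K$, so $v \in K \cap (-K)$ with $v \neq 0$, contradicting pointedness.

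The only nontrivial input here is the fact about convex sets with empty interior, which is standard; for instance, Theorem 6.2 in \cite{rockafellar1970convex} guarantees that a nonempty convex set has nonempty relative interior in its affine hull.
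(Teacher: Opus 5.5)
Your proof is correct and follows essentially the same route as the paper. For (a) you use the compactness/minimum argument and a perturbation argument, and for (c) you note that an empty-interior $K^*$ lies in a hyperplane $\{y : v'y = 0\}$, which gives $\pm v \in K^{**} = K$. The only differences are cosmetic: you prove (b) directly by strict separation where the paper cites Theorem 14.1 in \cite{rockafellar1970convex}, and you run the converse of (a) directly (testing $y - \tfrac{\delta}{2}x/\|x\|_2$ against $x$) rather than by contraposition.
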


\begin{proof}
The claims of the lemma are contained in Exercise 2.31 in \cite{boyd2004convex} but we include a proof for completeness and correct a mistake in the statement.

To show (a), fix a $y \in K_o^*$, so that $y'x > 0$ for all $x \in K \setminus \{0\}$. Then, $y'x > 0$ for all $x \in K \cap \{x\in \mathbb R^k: \|x\| = 1\}$. The function $x \mapsto y'x$ is continuous and attains its minimum in $K \cap \{x\in \mathbb R^k: \|x\| = 1\}$ because the set is compact. Denote the minimum by $m_y$. Then, for all $\|u\| < m_y/2$, $(y + u)' x > 0$ for all $x \in K \cap \{x\in \mathbb R^k: \|x\| = 1\}$, and therefore $(y + u)'x \geq 0$ for all $x \in K$. Therefore, the ball about $y$ of radius $m_y/2$ is contained in $K^*$, which implies $K_o^*$ is contained in the interior of $K^*$. For the converse direction, suppose $y \in K^*$ and $y'x = 0$ for some $x \in K \setminus \{0\}$. Pick $u$ such that $u'x < 0$, which is possible because $x\neq 0$. For such a $u$, $(y + \epsilon u)' x < 0$, which implies $y+\epsilon u \notin K^*$ for any $\epsilon >0$. It follows that $y$ is not in the interior of $K^*$ and therefore  that the interior of $K^*$ is contained in $K_o^*$, which establishes part (a).

Part (b) follows from Theorem 14.1 in \cite{rockafellar1970convex}.

%To show (b), first note by definition of $K^{**}$, $K \subseteq K^{**}$. Suppose by contradiction that $K \subsetneqq K^{**}$. Then, there exists $x_0 \in K^{**}$ such that $x_0 \notin K$. Because $K$ is nonempty, closed, and convex and $\{x_0\}$ is nonempty, closed, and compact, it follows from the separating hyperplane theorem \citep[see Theorem 7.8.6 in][]{narici2011topological} that there exists $y$ such that $y' x_0 < \inf_{x \in K} y'x$. Because $0 \in K$, $y' x_0 < 0$. On the other hand, because $K$ is a cone, $\inf_{x \in K} y'x \geq 0$ because it has to be $-\infty$ if it is strictly negative, a contradiction to the fact that it is lower bounded by $y' x_0$. We have found a $y$ such that $y \in K^*$ but $x_0' y < 0$, a contradiction to the definition of $K^{**}$. It therefore has to be the case that $K = K^{**}$.

To show (c), we prove its contrapositive: $K$ is not pointed if and only if $K_o^* = \emptyset$. First suppose $K$ is not pointed. Then, there exists $0\neq x \in K \cap (-K)$, which implies $x \in K$ and $-x \in K$. Since there cannot exist a $y$ such that $y' x > 0$ and $y' (-x) > 0$, it follows that $K_o^* = \emptyset$. 
For the converse direction, suppose $K_o^* = \emptyset$. 
By part (a), it then follows that $K^*$ is a convex cone with empty interior relative to $\mathbb R^k$.
Therefore, $K^*$ must be contained in a proper subspace of $\mathbb R^k$, which implies there is a $0\neq z \in \mathbb R^k$ such that $z'y = 0$ for all $y \in K^*$. 
In particular, we must then have $(-z)' y = 0$ for all $y \in K^*$, so both $z$ and $-z$ lie in $K^{**}$, which implies $K^{**}$ is not pointed and, by part (b), that $K$ is not pointed either.
%such that $z' y = b$ for all $y \in K^*$. Because $0 \in K^*$, $b = 0$, so $z'y = 0$ for all $y \in K^*$. But then $(-z)' y = 0$ for all $y \in K^*$, so both $z$ and $-z$ lie in $K^{**} = K$, and $K$ is not pointed.
\end{proof}

\begin{lemma} \label{lem:hellinger}
Let $\Sigma$ be a $k \times k$ symmetric positive semi-definite matrix and $\mu_n \to \mu \in \mathbb R^{k}$ as $n \to \infty$. Further suppose $\mu_n - \mu \in \mathrm{range}(\Sigma)$ for all $n$. Then, $\mathrm{TV}(N(\mu_n, \Sigma), N(\mu, \Sigma)) \to 0$ as $n \to \infty$.
\end{lemma}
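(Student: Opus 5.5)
The plan is to reduce to the case of a nondegenerate Gaussian on a subspace and then bound total variation through Hellinger distance or Kullback--Leibler divergence, which have closed forms for Gaussians. Let $r = \mathrm{rank}(\Sigma)$ and write the spectral decomposition $\Sigma = U \Lambda U'$, where $U$ is a $k \times r$ matrix with orthonormal columns spanning $\mathrm{range}(\Sigma)$ and $\Lambda = \mathrm{diag}(\lambda_1,\ldots,\lambda_r)$ with all $\lambda_i > 0$. If $r = 0$, then $\mu_n = \mu$ for every $n$ and there is nothing to prove. If $r \geq 1$, a random vector $X \sim N(\mu, \Sigma)$ can be written as $X = \mu + U \Lambda^{1/2} W$ with $W \sim N(0, I_r)$. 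Likewise $N(\mu_n, \Sigma)$ is the law of $\mu_n + U\Lambda^{1/2}W$. Since $\mu_n - \mu \in \mathrm{range}(\Sigma) = \mathrm{range}(U)$, we can write $\mu_n - \mu = U\delta_n$ with $\delta_n := U'(\mu_n - \mu) \in \mathbb R^r$, so that $\|\delta_n\|_2 = \|\mu_n - \mu\|_2 \to 0$.

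Next we note that both distributions are images, under the same measurable map $g(w) := \mu + U\Lambda^{1/2} w$, of $N(0, I_r)$ and $N(\Lambda^{-1/2}\delta_n, I_r)$ respectively. Indeed,
\[ \mu_n + U\Lambda^{1/2}W = \mu + U\Lambda^{1/2}(W + \Lambda^{-1/2}\delta_n). \]
Total variation is non-increasing under pushforward by a measurable map, since $\sup_B |P(g^{-1}B) - Q(g^{-1}B)| \leq \sup_{B'} |P(B') - Q(B')|$. Hence
\[ \mathrm{TV}(N(\mu_n,\Sigma), N(\mu,\Sigma)) \leq \mathrm{TV}(N(\Lambda^{-1/2}\delta_n, I_r), N(0, I_r)). \]

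It remains to bound the right-hand side. The closed form gives $\mathrm{KL}(N(m, I_r)\,\|\,N(0,I_r)) = \|m\|_2^2/2$. Pinsker's inequality then yields
\[ \mathrm{TV} \leq \tfrac12 \|\Lambda^{-1/2}\delta_n\|_2 \leq \tfrac{1}{2\sqrt{\lambda_{\min}}}\,\|\mu_n - \mu\|_2 \to 0, \]
where $\lambda_{\min} > 0$ is the smallest positive eigenvalue of $\Sigma$. Alternatively, one can use the Hellinger affinity $\exp(-\|m\|^2/8)$ together with $\mathrm{TV} \leq \sqrt{2}\,H$, which is presumably the route the lemma's name suggests.

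The only real subtlety is the degeneracy of $\Sigma$: the two measures have no Lebesgue densities on $\mathbb R^k$. They are, however, mutually absolutely continuous only because of the range condition; without it, they would live on disjoint parallel affine subspaces and their total variation distance would equal one. The reduction via the common map $g$ handles this cleanly, so I expect no step to be a genuine obstacle beyond stating it carefully.
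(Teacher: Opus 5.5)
Your argument is correct, but it takes a more elementary route than the paper. The paper relies on Cameron--Martin theory. It cites Bogachev (Lemma 2.4.1) to identify the Cameron--Martin space of $N(\mu,\Sigma)$ as $\mathrm{range}(\Sigma)$, with norm $\|h\|_H^2 = h'\Sigma^\dagger h$. It then cites Bogachev's Lemma 2.4.4, which bounds the total variation distance between a Gaussian measure and its shift by $h$ from below and above by explicit functions of $\exp\{-\|h\|_H^2/8\}$ and $\exp\{-\|h\|_H^2/4\}$. Convergence follows because $\Sigma^\dagger$ is fixed and $\mu_n\to\mu$.

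You do the reduction by hand instead. The spectral decomposition and the common map $g$ transfer the problem to standard Gaussians on $\mathbb R^r$, so the degeneracy of $\Sigma$ is removed by data processing. The closed-form KL divergence plus Pinsker then finishes the argument.

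The two proofs control the same object: your quantity $\|\Lambda^{-1/2}\delta_n\|_2^2 = (\mu_n-\mu)'U\Lambda^{-1}U'(\mu_n-\mu) = (\mu_n-\mu)'\Sigma^\dagger(\mu_n-\mu)$ is exactly the squared Cameron--Martin norm. Your version is self-contained and gives an explicit bound linear in $\|\mu_n-\mu\|_2$. The paper's citation is shorter and also gives a matching lower bound, so total variation convergence is equivalent to the Cameron--Martin norm vanishing; that lower bound is not needed for this lemma.

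The normalizations of $\mathrm{TV}$ differ: Bogachev's is the variation norm of the signed measure, which is twice your $\sup_B|P(B)-Q(B)|$. This does not affect the convergence claim.
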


\begin{proof}
Let $H$ denote the Cameron-Martin space of $\gamma := N(\mu,\Sigma)$, and note that by Lemma 2.4.1 in \cite{bogachev1998gaussian}, $H = \mathrm{range}(\Sigma)$ and the Cameron-Martin norm $\|\cdot\|_H$ satisfies $\|h\|_H^2 = h^\prime \Sigma^\dagger h$ for any $h\in H$ and $\Sigma^\dagger$ the Moore-Penrose pseudoinverse of $\Sigma$.
By Lemma 2.4.4 in \cite{bogachev1998gaussian} and $\mu_n -\mu \in H$ we then obtain
$$2 - 2 \exp\{-\frac{1}{8}(\mu_n-\mu)^\prime \Sigma^\dagger (\mu_n - \mu)\} \leq {\rm TV}(N(\mu_n,\Sigma),N(\mu,\Sigma)) \leq 2(1 - \exp\{-\frac{1}{4}(\mu_n-\mu)^\prime \Sigma^\dagger (\mu_n - \mu)\})^{1/2},$$
and therefore the claim of the lemma follows from $\mu_n\to \mu$.
%Let $H = \mathrm{range}(\Sigma)$ and $j = \mathrm{rank}(\Sigma)$. It follows from Theorem 9.5.7 in \cite{dudley2002real} that $P \{N(\mu_n, \Sigma) \in \mu_n + H\} = 1$ and because $\mu_n - \mu \in H$ for every $n \geq 1$, $P \{N(\mu_n, \Sigma) \in \mu + H\} = 1$. Furthermore, because $\Sigma$ is positive definite, the density of $N(\mu, \Sigma)$ with respect to the Lebesgue measure on $\mu + H$ is given by $(2 \pi)^{-j/2} |\Sigma|^{-1/2} \exp(- (x - \mu)' \Sigma^{-1} (x - \mu) / 2)$ and the density of $N(\mu_n, \Sigma)$ is $(2 \pi)^{-j/2} |\Sigma|^{-1/2} \exp(- (x - \mu_n)' \Sigma^{-1} (x - \mu_n) / 2)$. The desired result then follows because $\mathrm{TV}(N(\mu_n, \Sigma), N(\mu, \Sigma))$ is bounded by the Hellinger distance between $N(\mu_n, \Sigma)$ and $N(\mu, \Sigma)$, which is a continuous function of $\mu_n - \mu$.
\end{proof}

\subsection{Lemmas for Section \ref{sec:test}}
For notational simplicity, in this subsection we suppress the superscript $(k)$ indicating sample split for all sample objects.
Further recall that for any $u = (u_1,\ldots, u_r)^\prime\in \mathbb R^r$, we set $\|u\|_q = (\sum_{i=1}^r |u_i|^q)^{1/q}$ for any $1\leq q < \infty$ and $\|u\|_\infty = \max_{1\leq i \leq r} |u_i|$.
In addition, for any $1 \le r_1, r_2 \le \infty$ and matrix $V$, set $\|V\|_{r_1,r_2} =\sup_{\|u\|_{r_1}\leq 1} \|Vu\|_{r_2}$ and $\underline{s}(V)$ and $\bar{s}(V)$ to denote the smallest and largest singular values of $V$.

\begin{lemma} \label{lem:studentized}
Suppose Assumptions \ref{ass:prelim-estimators} and \ref{ass:singular-value} hold, $\sup_{\|y\|_1\leq 1} {\rm Var}_P[y^\prime \xi_j(Z,P)] < B < \infty$ for all $P\in \mathbf P$, $1\leq j \leq d_1+1$ and $p\geq 1$, and that $\hat V_{j,n}$ is such that uniformly in $P\in \mathbf P$ and $1 \leq j \leq 1+d_1$ we have
\begin{equation*}
\|\hat V_{j,n} - V_j(P)\|_{2,2} = O_P(\delta_n) ~.   
\end{equation*}
Further suppose that $(K_{1,p}K_{2,p}^2\vee \bar s_p^2)(K_{0,p}\vee K_{1,p})\log(1+p) = o(n)$ and $(K^{1/2}_{1,p}K_{2,p}\vee \bar s_p)a_n = o(n)$, and define
\begin{align*}
r_n & := \frac{K_{2,p}(\log(1+p)(K_{0,p}\vee K_{1,p}) + a_n)}{\sqrt n} \\
q_n & := \sqrt{K_{1,p}}K_{2,p}\left(\sqrt{\frac{(K_{0,p}\vee K_{1,p})\log(1+p)}{n}} + \frac{a_n}{n}\right)  + K_{2,p}^2 \delta_n ~ .
\end{align*}
It then follows that uniformly in $P\in \mathbf P$, $1\leq j \leq d_1+ 1$, and $y\in \mathbb R^p$ with $\|y\|_1 \leq 1$ we have that
$$\Big|\frac{\sqrt n(\hat b_{j,n}^\prime \hat M_{0,n} y - b_j(P)^\prime M_0(P)y)}{\hat \sigma_{j,n}(y)} - \frac{1}{\sqrt n} \sum_{i=1}^n \frac{y^\prime \xi_j(Z_i,P)}{(\sigma_j(P;y)\vee \underline{\sigma})}\Big| = O_P\Big(\frac{r_n}{\underline{\sigma}} + \frac{q_n}{\underline{\sigma}^3}\Big) ~ .$$
\end{lemma}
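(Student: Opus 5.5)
The plan is to split the studentized difference into a numerator linearization error and a studentization error, and to bound each uniformly over $\|y\|_1\le 1$. Write $N_n(y):=\sqrt n(\hat b_{j,n}'\hat M_{0,n}y - b_j(P)'M_0(P)y)$ and $\mathbb S_n(y):= n^{-1/2}\sum_i y'\xi_j(Z_i,P)$. Then
\[
\frac{N_n(y)}{\hat\sigma_{j,n}(y)} - \frac{\mathbb S_n(y)}{\sigma_j(P;y)\vee\underline\sigma}
= \frac{N_n(y)-\mathbb S_n(y)}{\hat\sigma_{j,n}(y)} + \mathbb S_n(y)\Big(\frac{1}{\hat\sigma_{j,n}(y)}-\frac{1}{\sigma_j(P;y)\vee\underline\sigma}\Big).
\]
Since $\hat\sigma_{j,n}\ge\underline\sigma$, the first term is at most $\sup_y|N_n-\mathbb S_n|/\underline\sigma$. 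We will show this is $O_P(r_n)$.

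\textbf{Step 1: the linearization.} Uniform control over the $\ell_1$ ball reduces to sup-norm bounds, since $\sup_{\|y\|_1\le1}|v'y|=\|v\|_\infty\le\|v\|_2$. Write $\hat b'\hat M - b'M = (\hat b-b)'M + b'(\hat M - M) + (\hat b - b)'(\hat M-M)$ and expand $\hat M_{0,n}-M_0$. Because $\underline s(A_0)\ge\underline s>0$ and $\|\hat A_{0,n}-A_0\|_{2,2}=o_P(1)$, $\hat A_{0,n}$ has full column rank with probability tending to one, so $M_0(\cdot)=I-A_0(A_0'A_0)^{-1}A_0'$ is smooth there. Its first-order perturbation in direction $\Delta$ is $-M_0\Delta A_0^\dagger - A_0^{\dagger\prime}\Delta' M_0$, which produces exactly $\xi_j$ in \eqref{eq:xi}. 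The remainder is quadratic, of order $\|\hat A_{0,n}-A_0\|_{2,2}^2/\underline s^2$ times $\|b_j\|_2\le K_{2,p}$, plus the cross term $\|\hat b-b\|_2\|\hat M-M\|_{2,2}$. To bound $\|n^{-1}\sum\Psi(Z_i,P)\|_{2,2}$ we use the matrix Bernstein inequality (Tropp), with almost sure bound $K_{0,p}$ and variance proxy $K_{1,p}$. This gives $O_P(\sqrt{(K_{0,p}\vee K_{1,p})\log(1+p)/n}+K_{0,p}\log(1+p)/n)$. The $\varphi_j$ averages are handled by Chebyshev using $K_{1,p}$, and the linearization errors from Assumption \ref{ass:prelim-estimators}(a) contribute $a_n/\sqrt n$. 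Multiplying by $\sqrt n$ and by $K_{2,p}$ gives the rate $r_n$. Uniformity in $j$ comes for free because all bounds except the $\varphi_j$ term are $j$-free, and the $\varphi_j$ term enters only through products with $\|\hat M-M\|$.

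\textbf{Step 2: the studentization.} The second term is controlled by $|\mathbb S_n(y)|\cdot|\hat\sigma^2-\tilde\sigma^2|/\underline\sigma^3$, where $\tilde\sigma^2=\sigma_j^2\vee\underline\sigma^2$. This uses $|1/a-1/b|\le|a^2-b^2|/(ab(a+b))$ and the fact that truncation is $1$-Lipschitz. We need $\sup_y|\mathbb S_n(y)|$ to be $O_P(1)$ pointwise in $y$; for uniformity over $y$ we instead absorb it using the variance bound $B$. A cleaner route may be to note that the statement only needs an $O_P$ bound for each fixed $y$ uniformly, which Chebyshev gives from $B$. Then
\[
|\hat D'\hat V\hat D - D'VD| \le \|\hat D-D\|_2(\|\hat D\|_2+\|D\|_2)\|V\|_{2,2} + \|\hat D\|_2^2\|\hat V-V\|_{2,2}.
\]
Here $\|D\|_2\lesssim K_{2,p}/\underline s + 1$ and $\|V\|_{2,2}\lesssim K_{1,p}$ by Assumption \ref{ass:prelim-estimators}(b). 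The perturbation $\|\hat D-D\|_2$ is bounded by $\|\hat A_0^\dagger-A_0^\dagger\|$, $\|\hat M-M\|$ and $\|\hat b-b\|$, all of which are available from Step 1. This yields the $\sqrt{K_{1,p}}K_{2,p}(\cdots)+K_{2,p}^2\delta_n$ form of $q_n$; matching the exact powers to $q_n$ is bookkeeping.

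\textbf{Main obstacle.} The hardest part is obtaining Step 1 uniformly in $y$ and $j$ with only a $\log(1+p)$ cost. The key is to route everything through operator norms via matrix Bernstein rather than union bounds over coordinates, and to verify that the stated rate conditions make the quadratic remainder negligible and keep $\hat A_{0,n}$ well conditioned. The latter requires $\bar s_p^2(K_{0,p}\vee K_{1,p})\log(1+p)=o(n)$.
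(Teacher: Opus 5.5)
Your decomposition, the bound $|\hat\sigma^{-1}-\tilde\sigma^{-1}|\le|\hat\sigma^2-\tilde\sigma^2|/(2\underline\sigma^3)$, and Step~1 match the paper (Lemmas~\ref{lem:rate_final}, \ref{lem:lin_error}, \ref{lem:rates}, \ref{lem:lipschitz}). You also treat $\mathbb S_n(y)$ the right way: Chebyshev bounds $\sup_{P,j,y}P\{|\mathbb S_n(y)|>M\}$. You cannot bound $\sup_{\|y\|_1\le1}|\mathbb S_n(y)|=\|n^{-1/2}\sum_i\xi_j(Z_i,P)\|_\infty$ itself, since it is not $O_P(1)$ in general.

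\textbf{The gap is in Step~2, and it is not bookkeeping.} Write $D=D_j(P;y)$, $\hat D=\hat D_{j,n}(y)$, $V=V_j(P)$ and $\rho_n:=\sqrt{(K_{0,p}\vee K_{1,p})\log(1+p)/n}+a_n/n$. You bound the leading term by $\|\hat D-D\|_2(\|\hat D\|_2+\|D\|_2)\|V\|_{2,2}$ with $\|V\|_{2,2}\lesssim K_{1,p}$. There are two problems.
\begin{enumerate}
\item[(i)] That operator-norm bound does not follow from Assumption~\ref{ass:prelim-estimators}(b). Take $\Psi(Z,P)=\epsilon G$ with $\epsilon$ Rademacher and $G=(\mathbf I_{d_0},\,0)'\in\mathbb R^{p\times d_0}$. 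Then $E_P[\Psi\Psi']$ and $E_P[\Psi'\Psi]$ both have operator norm one, but $\var_P[\mvec\Psi]=\mvec(G)\mvec(G)'$ has operator norm $d_0$, and $d_0$ may grow.
\item[(ii)] Even granting it, $\|D\|_2\lesssim K_{2,p}$ and $\|\hat D-D\|_2=O_P(K_{2,p}\rho_n+\sqrt{K_{1,p}/n})$. Your term is then of order $K_{1,p}K_{2,p}^2\rho_n+K_{1,p}^{3/2}K_{2,p}/\sqrt n$. This exceeds the leading part $\sqrt{K_{1,p}}K_{2,p}\rho_n$ of $q_n$ by a factor $\sqrt{K_{1,p}}K_{2,p}$.
\end{enumerate}
The loss matters. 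Take the scalings of Remark~\ref{rem:rates}: $K_{0,p}\vee K_{1,p}\asymp p$, $K_{2,p}\asymp\sqrt p$, $\bar s_p\lesssim\sqrt p$, $a_n=0$, and $n\asymp p^4$. The lemma's rate hypotheses and Assumption~\ref{ass:rates} hold, and $q_n\to0$ when $p\delta_n\to0$. Yet $K_{1,p}K_{2,p}^2\rho_n\asymp\sqrt{p\log p}\to\infty$.

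\textbf{The missing idea} is to do the perturbation in the geometry of $V$, and to use the hypothesis $\sup_{\|y\|_1\le1}\var_P[y'\xi_j(Z,P)]<B$ here, not only for $\mathbb S_n$. Expanding $\hat D'V\hat D-D'VD=2D'V(\hat D-D)+(\hat D-D)'V(\hat D-D)$ gives
\[
|\hat D'\hat V\hat D-D'VD|\lesssim\|V^{1/2}(\hat D-D)\|_2\bigl(\|V^{1/2}D\|_2+\|V^{1/2}(\hat D-D)\|_2\bigr)+\|\hat V-V\|_{2,2}\bigl(\|D\|_2^2+\|\hat D-D\|_2^2\bigr).
\]
By Lemma~\ref{lem:variance-xi}, $\|V^{1/2}D\|_2^2=\sigma_j^2(P;y)=\var_P[y'\xi_j(Z,P)]<B$. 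So the cross term carries an $O(1)$ factor rather than $K_{2,p}\|V\|_{2,2}$.

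Next, for fixed $\hat D$, $\|V^{1/2}(\hat D-D)\|_2^2$ is the variance of $(\hat D-D)'(\mvec\Psi(Z,P)',\varphi_j(Z,P)')'$. Using $\mvec(ABC)=(C'\otimes A)\mvec B$, this is a sum of bilinear forms $v'\Psi(Z,P)u$ and $v'\varphi_j(Z,P)$. Here $u,v$ are built from $\hat A_{0,n}^\dagger-A_0^\dagger(P)$, $\hat M_{0,n}-M_0(P)$ and $\hat b_{j,n}-b_j(P)$. Each form has second moment at most $K_{1,p}\|u\|_2^2\|v\|_2^2$ (resp.\ $K_{1,p}\|v\|_2^2$). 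This uses only $\|E_P[\Psi\Psi']\|_{2,2}\le K_{1,p}$ and $E_P[\varphi_j'\varphi_j]\le K_{1,p}$, never $\|V\|_{2,2}$.

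The result is $\|V^{1/2}(\hat D-D)\|_2=O_P(\sqrt{K_{1,p}}K_{2,p}\rho_n)$, while the $\hat V$ term contributes $K_{2,p}^2\delta_n$. Together these give exactly $q_n$; this is the argument of Lemma~\ref{lem:var_rate}.
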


\begin{proof}
To begin, note that Lemma \ref{lem:rate_final} implies that uniformly in $P\in \mathbf P$ and $1\leq j \leq d_1+1$ we have
\begin{equation} \label{eq:studentized1}
\sup_{\|y\|_2 \leq 1}|\sqrt{n}(\hat b_{j, n}'\hat M_{0, n} y - b_j(P)'M_0(P)y) - \frac{1}{\sqrt n}\sum_{i=1}^n \xi_j(Z_i,P)^\prime y| = O_P(r_n)~    .
\end{equation}
Next note that $\hat \sigma_{j,n}(y) \geq \underline{\sigma} > 0$ by construction together with Lemma \ref{lem:lipschitz} allow us to conclude that 
\begin{multline}\label{eq:studentized2}
\sup_{\|y\|_1 \leq 1} \Big|\frac{1}{\hat \sigma_{j,n}(y)} - \frac{1}{\sigma_j(P;y) \vee \underline{\sigma}}\Big| \leq \sup_{\|y\|_1 \leq 1} \frac{1}{2\underline{\sigma}^3}|\hat \sigma^2_{j,n}(y) - (\sigma_j^2(P;y)\vee \underline{\sigma}^2)| \\ \leq \sup_{\|y\|_1 \leq 1} \frac{1}{2\underline{\sigma}^3}| \hat D_{j,n}(y)^\prime \hat V_{j,n} \hat D_{j,n}(y) - \sigma_j^2(P;y)| = O_P\Big(\frac{q_n}{\underline{\sigma}^3}\Big), 
\end{multline}
where the second inequality follows from $|(x_1 \vee \underline{\sigma}^2) - (x_2\vee \underline{\sigma}^2)| \leq |x_1 - x_2|$ for any $x_1,x_2\in \mathbb R$, and the final result holds uniformly in $P\in \mathbf P$ and $1\leq j \leq d_1+1$ by Lemma \ref{lem:var_rate}.
Moreover, note that we have
\begin{equation}\label{eq:studentized3}
\frac{1}{\sqrt n}\sum_{i=1}^n y^\prime \xi_j(Z_i,P) = O_P(1)    
\end{equation}
uniformly in $P\in \mathbf P$, $1\leq j \leq d_1+1$, and $y\in \mathbb R^p$ with $\|y\|_1 \leq 1$ due to Markov's inequality and the condition $\sup_{\|y\|_1 \leq 1} \mathrm{Var}_P[y^\prime \xi_j(Z,P)] < B <\infty$.
Since $\hat \sigma_{j,n}(y) \geq \underline{\sigma}$, results \eqref{eq:studentized1}, \eqref{eq:studentized2}, and \eqref{eq:studentized3} then allow us to conclude uniformly in $P\in \mathbf P$, $1\leq j \leq d_1+1$, and $y\in \mathbb R^p$ with $\|y\|_1 \leq 1$ that
\begin{multline}
\frac{\sqrt n(\hat b_{j,n}^\prime \hat M_{0,n} y - b_j(P)^\prime M_0(P)y)}{\hat \sigma_{j,n}(y)} - \frac{1}{\sqrt n} \sum_{i=1}^n \frac{y^\prime \xi_j(Z_i,P)}{(\sigma_j(P;y)\vee \underline{\sigma})}\\
= \Big(\frac{1}{\hat \sigma_{j,n}(y)} - \frac{1}{\sigma_j(P;y)\vee \underline{\sigma}}\Big)\frac{1}{\sqrt n}\sum_{i=1}^n \xi_j(Z_i,P)^\prime y + O_P\Big(\frac{r_n}{\underline{\sigma}}\Big) = O_P\Big(\frac{r_n}{\underline{\sigma}} + \frac{q_n}{\underline{\sigma}^3}\Big),
\end{multline}
which establishes the claim of the lemma. 
\end{proof}

\begin{lemma}\label{lem:rate_final}
Let Assumptions \ref{ass:prelim-estimators} and \ref{ass:singular-value} hold, and suppose $\bar s_p^2(K_{0,p}\vee K_{1,p})\log(1+p)/n = o(1)$ and $\bar s_p a_n/n = o(1)$. Then, it follows that uniformly in $P\in \mathbf P$ and $1\leq j \leq d_1+ 1$ we have
\[
\sup_{ \|y\|_2 \leq 1} \Big | \sqrt{n}(\hat b_{j, n}'\hat M_{0, n} y - b_j(P)'M_0(P)y) - \frac{1}{\sqrt{n}}\sum_{1 \le i \le n}\xi_j(Z_i,P) 'y \Big | = O_P\Big( \frac{K_{2, p}(\log(1+p)(K_{0, p}\vee K_{1, p}) + a_n)}{\sqrt n}\Big)~.
\] 
\end{lemma}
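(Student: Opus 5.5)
We will establish Lemma \ref{lem:rate_final} by a perturbation expansion of the map $(A_0,b_j)\mapsto b_j'M_0$ around $(A_0(P),b_j(P))$. Write $\hat\Delta_A := \hat A_{0,n}-A_0(P)$ and $\hat\Delta_j := \hat b_{j,n}-b_j(P)$. The expansion is
\[
\hat b_{j,n}'\hat M_{0,n} - b_j'M_0 = \hat\Delta_j' M_0 + b_j'(\hat M_{0,n}-M_0) + \hat\Delta_j'(\hat M_{0,n}-M_0).
\]
For the annihilator we use the first-order expansion of the projection $\Pi_0 = A_0(A_0'A_0)^{-1}A_0'$, valid once $\underline s(\hat A_{0,n})$ is bounded away from zero:
\[
\hat M_{0,n}-M_0 = -\big(M_0\hat\Delta_A A_0^\dagger + (A_0^\dagger)'\hat\Delta_A' M_0\big) + R_n, \qquad \|R_n\|_{2,2}\lesssim \|\hat\Delta_A\|_{2,2}^2/\underline s^2 .
\]
Substituting this and replacing $\hat\Delta_A,\hat\Delta_j$ by their linear representations from Assumption \ref{ass:prelim-estimators}(a) produces exactly the influence function $\xi_j$ in \eqref{eq:xi}.

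\textbf{Step 1 (rates).} We control $\|n^{-1}\sum_i\Psi(Z_i,P)\|_{2,2}$ with a matrix Bernstein inequality, using the almost-sure bound $K_{0,p}$ and the variance bound $K_{1,p}$ of Assumption \ref{ass:prelim-estimators}(b) together with a dimension factor $\log(1+p)$. This gives
\[
\|\hat\Delta_A\|_{2,2} = O_P\Big(\sqrt{(K_{0,p}\vee K_{1,p})\log(1+p)/n} + a_n/n\Big).
\]
For the vector term, Chebyshev gives $\|\hat\Delta_j\|_2 = O_P(\sqrt{K_{1,p}/n} + a_n/n)$. A Bernstein-type bound for a maximum over $j$ is not needed for what follows, since the claim is stated uniformly in $j$ in the $O_P$ sense.

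Combining Step 1 with the stated growth conditions, $\bar s_p\|\hat\Delta_A\|_{2,2}=o_P(1)$. By Weyl's inequality and Assumption \ref{ass:singular-value}(b), this gives $\underline s(\hat A_{0,n})\ge \underline s/2$ with probability approaching one. On that event $\hat A_{0,n}^\dagger$ is bounded by $2/\underline s$, and the projection perturbation identity with its quadratic remainder $R_n$ holds.

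\textbf{Step 2 (remainders).} We bound each remainder uniformly over $\|y\|_2\le1$ by operator norms:
\begin{itemize}
\item $\sqrt n\,\|b_j\|_2\|R_n\|_{2,2} \lesssim \sqrt n K_{2,p}\|\hat\Delta_A\|^2$, which is $O_P\big(K_{2,p}(K_{0,p}\vee K_{1,p})\log(1+p)/\sqrt n\big)$ plus $a_n$ terms;
\item $\sqrt n\,\|\hat\Delta_j\|_2\|\hat M_{0,n}-M_0\|_{2,2}$, which is of the same or smaller order;
\item the linearization errors from Assumption \ref{ass:prelim-estimators}(a), multiplied by $\|b_j\|_2\|A_0^\dagger\|_{2,2}\le K_{2,p}/\underline s$, which give the $K_{2,p}a_n/\sqrt n$ term.
\end{itemize}
Summing these yields the displayed rate $r_n$.

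The main obstacle is Step 1: obtaining the matrix concentration bound with the $\log(1+p)$ factor uniformly in $P$. We would apply the matrix Bernstein inequality of Tropp to the $p\times d_0$ rectangular sums via Hermitian dilation. For this we need the variance proxy $\max\{\|E\Psi\Psi'\|,\|E\Psi'\Psi\|\}\le K_{1,p}$ and the uniform bound $K_{0,p}$, which is precisely why Assumption \ref{ass:prelim-estimators}(b) is stated in that form.
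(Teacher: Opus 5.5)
Your proposal is correct and follows essentially the same route as the paper: the three-term decomposition of $\hat b_{j,n}'\hat M_{0,n}-b_j'M_0$, a first-order expansion of the annihilator with a remainder of order $\|\hat\Delta_A\|_{2,2}^2/\underline s^2$ (the paper's Lemma~\ref{lem:lin_error}, where it is obtained via the mean-value theorem plus the Wedin and Chen et al.\ perturbation bounds), Tropp's matrix Bernstein inequality and Chebyshev for the rates (Lemma~\ref{lem:rates}), and then replacing $\sqrt n\hat\Delta_A$ and $\sqrt n\hat\Delta_j$ by their influence-function sums at cost $O_P(K_{2,p}a_n/\sqrt n)$. You assert the quadratic remainder bound without proof, but it does hold: it follows, for example, from the exact identity $\hat M_{0,n}-M_0=-(A_0^\dagger)'\hat\Delta_A'\hat M_{0,n}-M_0\hat\Delta_A\hat A_{0,n}^\dagger$ combined with Wedin's bound on $\|\hat A_{0,n}^\dagger-A_0^\dagger\|_{2,2}$ on your event $\underline s(\hat A_{0,n})\ge\underline s/2$.
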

\begin{proof}
Let $D(A_0(P))[H] := -(M_0(P)HA_0^\dagger(P) + (A_0^\dagger(P))^\prime H^\prime M_0(P))$ for any $p\times d_0$ matrix $H$ and define
\begin{align*}
\hat S_{j, n}(P) & := M_0(P) \sqrt{n}(\hat b_{j, n} - b_j(P)) + D(A_0(P))[\sqrt n(\hat A_{0,n}-A_{0}(P))] b_j(P)\\
S_{j,n}^*(P) & := \frac{1}{\sqrt n} \sum_{i=1}^n \xi_j(Z_i,P)~.
\end{align*}
Next note that Lemma \ref{lem:rates}, Assumption \ref{ass:singular-value}(a), $\bar s_p \geq 1$, and $(K_{0,p}\vee K_{1,p})\log(1+p)/n = o(1)$ by hypothesis allow us to conclude that uniformly in $P\in \mathbf P$ we have
\begin{equation}\label{lem:rate-final1}
\|A_0(P)\|_{2,2}\|\hat A_{0,n} - A_0(P)\|_{2,2} + \|\hat A_{0,n} - A_0(P)\|_{2,2}^2= O_P\Big(\bar s_p \Big(\sqrt{\frac{(K_{0,p}\vee K_{1,p})\log(1+p)}{n}} + \frac{a_n}{n}\Big)\Big)~.
\end{equation}
Together with Assumption \ref{ass:singular-value}(b), $\bar s_p^2(K_{0,p}\vee K_{1,p})\log(1+p)/n = o(1)$ and $\bar s_p a_n/n = o(1)$, result \eqref{lem:rate-final1} implies that $\|A_0(P)\|_{2,2}\|\hat A_{0,n} - A_0(P)\|_{2,2} + \|\hat A_{0,n}-A_0(P)\|_{2,2}^2 < \underline{s}(A_0(P))^2/2$ with probability tending to one uniformly in $P\in \mathbf P$.
We may therefore apply Lemma \ref{lem:lin_error}(a) together with Lemma \ref{lem:rates} and Assumptions \ref{ass:singular-value}(b)(c) to conclude
\begin{equation}\label{lem:rate-final2}
\sup_{y\in \mathbb R^p : \|y\|_2\leq 1} |\sqrt n(\hat b_{j,n}^\prime \hat M_{0,n} y - b_j(P)^\prime M_0(P) y)- \hat S_{j,n}(P)^\prime y| = O_P\Big(\frac{K_{2,p}}{\sqrt n}( (K_{0,p} \vee K_{1,p})\log(1+p) + \frac{a_n^2}{n^{3/2}})\Big) 
\end{equation}
uniformly in $P\in \mathbf P$ and $1\leq j \leq d_1+1$.
Furthermore, by definition of $\xi_j(Z_i,P)$ we also have that
\begin{align}
\|\hat S_{j, n}(P) - S_{j, n}^*(P)\|_2 & \leq \Big \|M_0(P)(\sqrt{n}(\hat b_{j, n} - b_j(P)) - \frac{1}{\sqrt{n}}\sum_{1 \leq i \leq n} \varphi_j(Z_i, P)) \Big \|_2 \notag \\
& \hspace{2em} + \Big \|M_0(P) \Big ( \sqrt{n}(\hat A_{0, n} - A_0(P)) - \frac{1}{\sqrt{n}} \sum_{1 \leq i \leq n} \Psi(Z_i, P) \Big ) A_0^\dagger(P) b_j(P) \Big \|_2 \notag \\
& \hspace{2em} + \Big \| A_0^\dagger(P)' \Big ( \sqrt{n}(\hat A_{0, n} - A_0(P)) - \frac{1}{\sqrt{n}} \sum_{1 \leq i \leq n} \Psi(Z_i, P) \Big )'M_0(P) b_j(P) \Big \|_2~.\label{lem:rate-final3}
\end{align}
Next, note that using Lemma \ref{lem-auxAdagger}, $\|M_0(P)\|_{2,2}\leq 1$ because $M_0(P)$ is a projection matrix, result \eqref{lem:rate-final3}, and Assumptions \ref{ass:prelim-estimators}(a) and \ref{ass:singular-value}(b)(c) we can conclude that uniformly in $P\in \mathbf P$ and $1\leq j \leq d_1+1$
\begin{equation}\label{lem:rate-final4}
\|\hat S_{j, n}(P) - S_{j, n}^*(P)\|_2 = O_P\Big(\frac{a_n K_{2,p}}{\sqrt{n}}\Big) ~.
\end{equation}
The claim of the lemma therefore follows from results \eqref{lem:rate-final2}, \eqref{lem:rate-final4}, the Cauchy-Schwarz inequality and $a_n/\sqrt n =o(1)$ by Assumption \ref{ass:prelim-estimators}. 
\end{proof}

\begin{lemma} \label{lem:variance-xi}
Suppose $\mathrm{rank}(A_0(P)) = d_0$ and $V_j(P) < \infty$. Then, $\var_P[\xi_j(Z_i, P)' y] = \sigma_j^2(P; y)$.
\end{lemma}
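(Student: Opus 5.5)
The plan is to write $\xi_j(Z_i,P)'y$ as a linear functional of the stacked vector $W_i := ((\mvec \Psi(Z_i,P))', \varphi_j(Z_i,P)')'$ and check that the coefficient vector is exactly $D_j(P;y)$. Once that holds, $\var_P[\xi_j(Z_i,P)'y] = \var_P[D_j(P;y)'W_i] = D_j(P;y)'V_j(P)D_j(P;y) = \sigma_j^2(P;y)$. Finiteness of $V_j(P)$ guarantees this variance exists.

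The rank condition $\mathrm{rank}(A_0(P))=d_0$ is used only to fix notation: $A_0^\dagger(P) = (A_0'A_0)^{-1}A_0'$ is well defined, and $M_0(P) = I - A_0A_0^\dagger$ is a symmetric projection. I will abbreviate $M_0 = M_0(P)$, $A_0^\dagger = A_0^\dagger(P)$, $b_j = b_j(P)$, $\Psi = \Psi(Z_i,P)$ and $\varphi_j = \varphi_j(Z_i,P)$.

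By \eqref{eq:xi} and symmetry of $M_0$,
\[ \xi_j'y = \varphi_j'M_0y - b_j'(A_0^\dagger)'\Psi'M_0y - b_j'M_0\Psi A_0^\dagger y. \]
I then handle each term with the identity $u'Hv = (v\otimes u)'\mvec H$, valid for $H\in\mathbb R^{p\times d_0}$, $u\in\mathbb R^p$ and $v\in\mathbb R^{d_0}$.

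\begin{itemize}
\item The first term is $(M_0y)'\varphi_j$, which matches the last block of $D_j(P;y)$.
\item The second term equals $-(M_0y)'\Psi(A_0^\dagger b_j) = -(A_0^\dagger b_j\otimes M_0y)'\mvec\Psi$.
\item The third term equals $-(M_0b_j)'\Psi(A_0^\dagger y) = -(A_0^\dagger y\otimes M_0 b_j)'\mvec\Psi$.
\end{itemize}

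Adding these gives $\xi_j'y = D_j(P;y)'W_i$, with $D_j(P;y)$ exactly as defined in Section \ref{sec:properties}. Since $E_P[W_i]=0$, the variance of $D_j(P;y)'W_i$ is the quadratic form $D_j(P;y)'V_j(P)D_j(P;y)$, which is the claim.

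The only delicate point is keeping the Kronecker ordering consistent with the column-stacking convention for $\mvec$, so that the two $\Psi$ terms land in the stated order. I would verify this on a single entry $H=e_ke_l'$: both $u'Hv$ and $(v\otimes u)'\mvec H$ equal $u_kv_l$. Everything else is bookkeeping.
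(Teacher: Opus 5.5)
Your proposal is correct and follows essentially the same route as the paper: both rewrite $\xi_j(Z_i,P)'y$ term by term using $\mvec(ABC) = (C'\otimes A)\mvec(B)$ (your identity $u'Hv = (v\otimes u)'\mvec H$) to get $\xi_j(Z_i,P)'y = D_j(P;y)'W_i$, after which the variance formula is immediate. Your entrywise check of the Kronecker ordering against the column-stacking convention is a useful verification that the paper leaves implicit.
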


\begin{proof}
Using $\mvec(ABC) = (C' \otimes A) \mvec(B)$, $(A \otimes B)' = A' \otimes B'$, and the definition of $D_j(P;y)$ we obtain
\begin{align*}
\xi_j& (Z_i, P)' y \\ & = y' M_0(P) \varphi_j(Z_i, P) - y' M_0(P) \Psi(Z_i, P) A_0^\dagger(P) b_j(P) - y' A_0^\dagger(P)' \Psi(Z_i, P)' M_0(P) b_j(P) \\
& = y' M_0(P) \varphi_j(Z_i, P) - \mvec \big ( y' M_0(P) \Psi(Z_i, P) A_0^\dagger(P) b_j(P) \big ) - \mvec \big ( b_j(P)' M_0(P) \Psi(Z_i, P) A_0^\dagger(P) y \big ) \\
& = y' M_0(P) \varphi_j(Z_i, P) - \{(A_0^\dagger(P) b_j(P))' \otimes (y' M_0(P))+ (A_0^\dagger(P)y)' \otimes (b_j(P)' M_0(P))\} \mvec ( \Psi(Z_i, P) ) \\
& = D_j(P; y)' \begin{pmatrix} \mvec \Psi(Z_i, P) \\ \varphi_j(Z_i, P) \end{pmatrix}~,
\end{align*}
which establishes the claim of the lemma.
\end{proof}

\begin{lemma}\label{lem:var_rate}
Suppose Assumptions \ref{ass:prelim-estimators} and \ref{ass:singular-value} hold, and suppose $\sup_{\|y\|_1\leq 1} {\rm Var}_P[y^\prime \xi_j(Z,P)] < B < \infty$ for all $P\in \mathbf P$, $1\leq j \leq d_1+1$ and $p\geq 1$. If $\hat V_{j,n}$ is such that uniformly in $P\in \mathbf P$ and $1 \leq j \leq 1+d_1$ we have
\begin{equation*}
\|\hat V_{j,n} - V_j(P)\|_{2,2} = O_P(\delta_n),    
\end{equation*}
and in addition $(K_{1,p}K_{2,p}^2\vee \bar s_p^2)(K_{0,p}\vee K_{1,p})\log(1+p) = o(n)$ and $(K^{1/2}_{1,p}K_{2,p}\vee \bar s_p)a_n = o(n)$, then it follows
\[
\sup_{y \in \mathbb R^p: \|y\|_1 \leq 1} |\hat D_{j, n}(y)'\hat V_{j, n}\hat D_{j, n}(y) - \sigma_j^2(P; y)| = O_P\Big(\sqrt{K_{1,p}}K_{2,p}(\sqrt{\frac{(K_{0,p}\vee K_{1,p})\log(1+p)}{n}} + \frac{a_n}{n})  + K_{2,p}^2 \delta_n\Big)
\]
uniformly in $P \in \mathbf P$ and $1 \leq j \leq d_1+1$.
\end{lemma}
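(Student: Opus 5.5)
The plan is to write the plug-in variance as a quadratic form in $(\hat D_{j,n}(y),\hat V_{j,n})$ and compare it with the population quadratic form $D_j(P;y)'V_j(P)D_j(P;y)$. By Lemma \ref{lem:variance-xi}, this population form equals $\sigma_j^2(P;y)$. I will use the three-term decomposition
\begin{align*}
\hat D'\hat V\hat D - D'VD = \hat D'(\hat V - V)\hat D + (\hat D - D)'V(\hat D - D) + 2(\hat D - D)'VD~,
\end{align*}
where I suppress the arguments $(P;y)$ and the index $j$. The sup over $\|y\|_1\leq 1$ is then handled by bounding each term in terms of $\|\hat D_{j,n}(y)\|_2$, $\|D_j(P;y)\|_2$ and $\|\hat D_{j,n}(y)-D_j(P;y)\|_2$, uniformly over $\|y\|_2\leq 1$, which is a larger set.

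\emph{Step 1: bound $D_j(P;y)$.} Use $\|M_0(P)\|_{2,2}\leq 1$ and $\|A_0^\dagger(P)\|_{2,2}\leq 1/\underline s$ from Assumption \ref{ass:singular-value}(b). Also use $\|u\otimes v\|_2=\|u\|_2\|v\|_2$ and $\|b_j(P)\|_2\leq K_{2,p}$. Together these give $\sup_{\|y\|_2\le1}\|D_j(P;y)\|_2\lesssim K_{2,p}$ uniformly in $P$ and $j$.

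\emph{Step 2: bound $\hat D_{j,n}-D_j$.} I will bound the differences $\hat M_{0,n}-M_0(P)$, $\hat A_{0,n}^\dagger-A_0^\dagger(P)$ and $\hat b_{j,n}-b_j(P)$. For the first two I plan to use the earlier perturbation bounds, Lemma \ref{lem:rates} and Lemma \ref{lem-auxAdagger}, on the event where $\|\hat A_{0,n}-A_0\|_{2,2}$ is small relative to $\underline s$. These give $\|\hat A_{0,n}^\dagger - A_0^\dagger\|_{2,2}+\|\hat M_{0,n}-M_0\|_{2,2}\lesssim \|\hat A_{0,n}-A_0\|_{2,2}$, up to the factor $\bar s_p$ appearing in \eqref{lem:rate-final1}. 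I expect Lemma \ref{lem:rates} to yield
\[
\|\hat A_{0,n}-A_0\|_{2,2}=O_P\big(\sqrt{(K_{0,p}\vee K_{1,p})\log(1+p)/n}+a_n/n\big),
\]
a matrix Bernstein bound plus the linearization error. It should also give $\|\hat b_{j,n}-b_j\|_2=O_P(\sqrt{K_{1,p}/n}+a_n/n)$.

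Telescoping the Kronecker products then gives
\[
\sup_y\|\hat D_{j,n}(y)-D_j(P;y)\|_2=O_P\big(\epsilon_n K_{2,p}+\sqrt{K_{1,p}/n}\big),
\qquad \epsilon_n:=\sqrt{(K_{0,p}\vee K_{1,p})\log(1+p)/n}+a_n/n .
\]
The rate hypotheses make this $o_P(K_{2,p})$, so $\|\hat D_{j,n}\|_2=O_P(K_{2,p})$ as well.

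\emph{Step 3: bound $\|V_j(P)\|_{2,2}$ and combine.} Assumption \ref{ass:prelim-estimators}(b) bounds $\|V_j(P)\|_{2,2}\lesssim K_{1,p}$, since the covariance of $\mvec\Psi$ is controlled by $E\|\Psi\|_F^2$-type quantities. A crude version of the cross-term bound is $\|D\|\,\|V\|\,\|\hat D-D\|$, but that would lose a factor of $\sqrt{K_{1,p}}$. The sharper route writes the cross term as a covariance:
\[
(\hat D-D)'VD=\mathrm{Cov}\big((\hat D-D)'W,\,D'W\big),\qquad W:=\big((\mvec\Psi)',\varphi_j'\big)'.
\]
By Cauchy--Schwarz it is bounded by $\sqrt{(\hat D-D)'V(\hat D-D)}\,\sigma_j(P;y)$. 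Since $\sigma_j^2(P;y)=\mathrm{Var}_P[y'\xi_j]<B$, the cross term is at most $\sqrt{K_{1,p}}\,\|\hat D-D\|_2\sqrt B$. This yields the target rate $\sqrt{K_{1,p}}K_{2,p}\epsilon_n$, after checking that the $\sqrt{K_{1,p}/n}$ piece is dominated by it since $K_{2,p}\geq 1$.

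The remaining terms are handled as follows. The quadratic term $(\hat D-D)'V(\hat D-D)\lesssim K_{1,p}\|\hat D-D\|^2$ is of smaller order under the rate conditions. The term $\hat D'(\hat V-V)\hat D$ is bounded by $\|\hat D\|_2^2\,\delta_n=O_P(K_{2,p}^2\delta_n)$.

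The main obstacle is Step 2, namely keeping the perturbation bounds for the pseudoinverse and projection uniform in $P$ with the exact powers of $\bar s_p$ and $K_{2,p}$. In particular, the $\bar s_p$ factor must be absorbed via the hypothesis $\bar s_p^2(K_{0,p}\vee K_{1,p})\log(1+p)=o(n)$ rather than appearing in the final rate.
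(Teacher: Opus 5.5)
Your architecture matches the paper's. The same ingredients appear in \eqref{eq:sigma_diff}--\eqref{lem:var_rate9}: the quadratic-form decomposition, $\|D_j(P;y)\|_2\lesssim K_{2,p}$, the perturbation rate for $\|\hat D_{j,n}(y)-D_j(P;y)\|_2$, and the Cauchy--Schwarz bound on the cross term using $\|V_j^{1/2}(P)D_j(P;y)\|_2^2=\mathrm{Var}_P[y'\xi_j(Z,P)]<B$.

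The gap is in Step 3. The claim $\|V_j(P)\|_{2,2}\lesssim K_{1,p}$ does not follow from Assumption \ref{ass:prelim-estimators}(b). For the $\varphi_j$ block that assumption bounds the trace-type quantity $E_P[\varphi_j'\varphi_j]$. For $\Psi$, however, it only bounds the operator norms $\|E_P[\Psi\Psi']\|_{2,2}$ and $\|E_P[\Psi'\Psi]\|_{2,2}$. It does not bound $E_P\|\Psi\|_F^2=\mathrm{trace}(E_P[\Psi'\Psi])$, which can be $d_0$ times larger.

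Here is a counterexample. Take $\Psi(Z,P)=\rho\,G$ with $\rho$ a Rademacher sign and $G\in\mathbb R^{p\times d_0}$ having orthonormal columns. Then $\|\Psi\|_{2,2}=1$, $E[\Psi\Psi']=GG'$ and $E[\Psi'\Psi]=\mathbf I_{d_0}$, so $K_{0,p}=K_{1,p}=1$ is admissible. Yet $\mathrm{Var}(\mvec\Psi)=\mvec(G)\mvec(G)'$ has operator norm $\|G\|_F^2=d_0$.

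With the honest bound $\|V_j(P)\|_{2,2}\lesssim d_0K_{1,p}$, your argument picks up an extra factor $\sqrt{d_0}$ in the leading term and $d_0$ in the quadratic term. That is not the stated rate when $d_0$ grows with $p$. So the delicate step is not the perturbation bounds of Step 2, which go through as you describe, but this $V$-weighted bound.

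The missing idea is to never bound $\|V_j(P)\|_{2,2}$. Instead, exploit the Kronecker structure of $\hat D_{j,n}(y)-D_j(P;y)$. After telescoping the two differences (such as $\hat A_{0,n}^\dagger y\otimes\hat M_{0,n}\hat b_{j,n}-A_0^\dagger(P)y\otimes M_0(P)b_j(P)$), its $\mvec\Psi$-block is a sum of a few terms of the form $u\otimes v$. Hence $(\hat D_{j,n}(y)-D_j(P;y))'(\mvec\Psi',\varphi_j')'$ is a sum of scalars $v'\Psi u$ plus $y'(M_0(P)-\hat M_{0,n})\varphi_j$, as in \eqref{lem:var_rate10}.

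For such rank-one directions, $E_P[(v'\Psi u)^2]=v'E_P[\Psi uu'\Psi']v\le \|u\|_2^2\, v'E_P[\Psi\Psi']v\le K_{1,p}\|u\|_2^2\|v\|_2^2$, which is \eqref{lem:var_rate11}. For the last term, $\|E_P[\varphi_j\varphi_j']\|_{2,2}\le E_P[\varphi_j'\varphi_j]\le K_{1,p}$ by Lemma \ref{lem:eigen-cs}. Together these give $\|V_j^{1/2}(P)(\hat D_{j,n}(y)-D_j(P;y))\|_2=O_P(\sqrt{K_{1,p}}K_{2,p}\epsilon_n)$ directly, which is \eqref{lem:var_rate14}. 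Plugging this into your cross-term and quadratic-term bounds then yields the stated rate.
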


\begin{proof}
To begin the proof, first note that direct calculation yields the following decomposition
\begin{align}\label{lem:var_rate1}
	& \hat D_{j, n}(y)'\hat V_{j, n}\hat D_{j, n}(y) - D_j(P; y)'V_j(P) D_j(P; y) \notag \\
    &= 2D_j(P; y)'V_j(P)(\hat D_{j, n}(y) - D_j(P; y)) + D_j(P; y)'(\hat V_{j, n} - V_j(P))D_j(P; y) \notag\\
	& \hspace{2em} + (\hat D_{j, n}(y) - D_j(P; y))'V_j(P) (\hat D_{j, n}(y) - D_j(P; y)) + 2D_j(P; y)'(\hat V_{j, n} - V_j(P))(\hat D_{j, n}(y) - D_j(P; y)) \notag \\
	& \hspace{2em} + (\hat D_{j, n}(y) - D_j(P; y))'(\hat V_{j, n} - V_j(P)) (\hat D_{j, n}(y) - D_j(P; y))~. 
\end{align}
Next, let $V_j^{1/2}(P)$ denote the unique positive semi-definite square root of $V_j(P)$.
Then note that \eqref{lem:var_rate1} and $V_j(P) = V_j^{1/2}(P)V_j^{1/2}(P)$ imply that uniformly in $P$ and $1\leq j \leq d_1+1$ we have
\begin{align}\label{eq:sigma_diff}
 |\hat D_{j, n}(y)' &\hat V_{j, n}\hat D_{j, n}(y) - D_j(P; y)'V_j(P) D_j(P; y)| \notag \\
 \lesssim ~ &  \|V_j^{1/2}(P)(\hat D_{j,n}(y) - D_j(P;y))\|_{2}\cdot (\|V_j^{1/2}(P) D_j(P; y)\|_2 + \|V_{j}^{1/2}(P)(\hat D_{j,n}(y) - D_j(P;y))\|_2)\notag \\
 & + \|\hat V_{j,n} - V_j(P)\|_{2,2}\cdot (\|D_j(P;y)\|^2_2 +   \|\hat D_{j,n}(y) - D_j(P;y)\|^2_2) ~.
\end{align}

To control the terms on the right hand side \eqref{eq:sigma_diff}, first use that $\|a \otimes b\|_2 = \|a\|_2 \cdot \|b\|_2$ for any vectors $a$ and $b$, the triangle inequality, and the definition of $D_j(P;y)$ to obtain for any $y$ with $\|y\|_1 \leq 1$ that
\begin{align} \label{eq:D_bound}
	\|D_j(P; y)\|_2   & \leq \|M_0(P)y\|_2 + \|A_0^\dagger(P) y\|_2 \cdot \|M_0(P) b_j(P)\|_2 + \|A_0^\dagger(P) b_j(P)\|_2 \cdot\|M_0(P) y\|_2 \notag \\
	& \leq 1 + 2 \frac{\|b_j(P)\|_2}{\underline s(A_0(P))}~,
\end{align}
where the final inequality follows from Lemma \ref{lem-auxAdagger}, $\|M_0(P)\|_{2,2}\leq 1$ because $M_0(P)$ is a projection matrix, and $\|y\|_2\leq \|y\|_1 \leq 1$.
Furthermore, by similar arguments we also obtain the following inequality
\begin{multline}\label{lem:var_rate2}
\|\hat D_{j,n}(y) - D_j(P;y)\|_2 \leq \|(\hat M_{0,n} - M_0(P))y\|_{2} + \|(\hat A_{0,n}^\dagger y)\otimes (\hat M_{0,n}\hat b_{j,n}) - (A_0(P)^\dagger y) \otimes (M_0(P) b_j(P))\|_2 \\ +  \|(\hat A_{0, n}^{\dagger}\hat b_{j,n}) \otimes (\hat M_{0, n}y) - (A_0^\dagger(P)b_j(P)) \otimes (M_0(P)y)\|_2~.
\end{multline}
Next note that by Weyl's perturbation inequality (see, e.g., Corollary III.2.6 in \cite{bhatia2013matrix}) we have
\begin{multline}\label{lem:var_rate3}
\underline s^2(\hat A_{0,n}) \geq \underline s^2(A_0(P)) - \|\hat A_{0,n}^\prime \hat A_{0,n} - A_0(P)^\prime A_0(P)\|_{2,2} \\
\geq \underline s^2(A_0(P)) - 2\|A_0(P)\|_{2, 2}\|\hat A_{0,n} - A_0(P)\|_{2, 2} - \|\hat A_{0,n} - A_0(P)\|_{2, 2}^2~. 
\end{multline}
In particular, since $\|\hat A_{0,n} - A_0(P)\|_{2,2} (1\vee \|A_0(P)\|_{2,2}) = o_P(1)$ uniformly in $P\in \mathbf P$ by Lemma \ref{lem:rates}, $\bar s_p^2(K_{0,p}\vee K_{1,p})\log(1+p) = o(n)$, and $\bar s_p a_n = o(n)$, it follows from \eqref{lem:var_rate3} and Assumption \ref{ass:singular-value}(a)(b) that
\begin{equation}\label{lem:var_rate4}
\frac{1}{\underline{s}(\hat A_{0,n})} = O_P(1)    
\end{equation}
uniformly in $P\in \mathbf P$.
Therefore, applying Theorem 2.5 in \citet{chen2016perturbation} and Lemma \ref{lem:rates} yields that
\begin{equation}\label{lem:var_rate5}
\|\hat M_{0, n} - M_0(P)\|_{2,2} = O_P\Big(\sqrt{\frac{(K_{0,p}\vee K_{1,p})\log(1+p)}{n}} + \frac{a_n}{n}\Big)
\end{equation}
uniformly in $P\in \mathbf P$.
Similarly, result \eqref{lem:var_rate4}, Assumption \ref{ass:singular-value}(b), Lemma \ref{lem:rates}, and Theorem 4.1 in \citet{wedin1973perturbation} allow us to conclude that uniformly in $P\in \mathbf P$ we have
\begin{equation}\label{lem:var_rate6}
\|\hat A^\dagger_{0, n} - A_0^\dagger(P)\|_{2,2} = O_P\Big(\sqrt{\frac{(K_{0,p}\vee K_{1,p})\log(1+p)}{n}} + \frac{a_n}{n}\Big)~.
\end{equation}
Next, again use that $\|a \otimes b\|_2 = \|a\|_2 \cdot \|b\|_2$ for any vectors $a$ and $b$, Lemma \ref{lem-auxAdagger}, Assumption \ref{ass:singular-value}(b), and that $\hat M_{0,n}$ is a projection matrix to obtain that for any $y$ with $\|y\|_1 \leq 1$ we have
\begin{align}\label{lem:var_rate7}
	 \|\hat A_{0, n}^{\dagger}\hat b_{j, n} &\otimes  \hat M_{0, n}y - A_0^\dagger(P) b_j(P) \otimes M_0(P)y\|_2 \notag \\
    & \leq \|\hat A_{0, n}^{\dagger}\hat b_{j, n}  - A_0^\dagger(P) b_j(P)\|_2 \cdot \|\hat M_{0,n} y\|_2 + \|A_0^\dagger(P) b_j(P)\|_2 \cdot \|(\hat M_{0, n} - M_0(P))y\|_2 \notag \\
	&\lesssim  \|\hat A_{0, n}^\dagger - A_0^\dagger(P)\|_{2,2} \cdot \|b_j(P)\|_2 + \frac{1}{\underline{s}(\hat A_{0,n})} \|\hat b_{j, n} - b_j(P)\|_{2}  +  \|b_j(P)\|_2\cdot \|\hat M_{0, n} - M_0(P)\|_{2,2} ~.
\end{align}
Furthermore, by similar arguments it also follows that for any $y$ with $\|y\|_1 \leq 1$ we have the upper bound
\begin{align}\label{lem:var_rate8}
	 \|(\hat A_{0, n}^{\dagger}y) & \otimes  (\hat M_{0, n}\hat b_{j, n}) - (A_0^\dagger(P) y) \otimes (M_0(P)b_j(P))\|_2 \notag \\
    & \leq \|(\hat A_{0, n}^{\dagger}  - A_0^\dagger(P)) y\|_2 \cdot \|M_0(P) b_j(P)\|_2 + \|\hat A_{0,n}^\dagger y\|_2 \cdot \|\hat M_{0, n} \hat b_{j, n} - M_0(P)b_j(P)\|_2 \notag \\
	&\leq \|\hat A_{0, n}^{\dagger} - A_0^\dagger(P)\|_{2,2}  \cdot\|b_j(P)\|_2 + \frac{1}{\underline{s}(\hat A_{0,n})} (\|\hat M_{0, n} - M_0(P)\|_{2,2} \cdot \|b_j(P)\|_2 + \|\hat b_{j, n} - b_j(P)\|_2).
\end{align}
Therefore, combining result \eqref{lem:var_rate2} with the bounds in \eqref{lem:var_rate7} and \eqref{lem:var_rate8}, and using results \eqref{lem:var_rate4}, \eqref{lem:var_rate5}, \eqref{lem:var_rate6}, Assumption \ref{ass:singular-value}(c), and Lemma \ref{lem:rates} we obtain uniformly in $P\in \mathbf P$ and $1\leq j \leq d_1+1$ that
\begin{equation}\label{lem:var_rate9}
\sup_{\|y\|_2\leq 1} \|\hat D_{j,n}(y) - D_j(P;y)\|_2 = O_P\Big(K_{2,p}(\sqrt{\frac{(K_{0,p} \vee K_{1,p})\log(1+p)}{n}} + \frac{a_n}{n}) + \sqrt{\frac{K_{1,p}}{n}}\Big)    .
\end{equation}

Next, note that $\mvec(ABC) = (C' \otimes A) \mvec(B)$, $(A \otimes B)' = A' \otimes B'$ for any comformable matrices $A$, $B$, and $C$, the definitions of $\hat D_{j,n}(y)$ and $D_j(P;y)$, and direct calculation allow us to conclude that
\begin{align}\label{lem:var_rate10}
(D_{j}(P;y)-& \hat D_{j,n}(y))^\prime  \begin{pmatrix} \mvec \Psi(Z_i, P) \\ \varphi_j(Z_i, P) \end{pmatrix} \notag \\
= ~& y^\prime (\hat A^\dagger_{0,n}-A_{0}^\dagger(P))^\prime \Psi(Z_i,P)^\prime M_{0}(P) b_{j}(P) + y^\prime (\hat A_{0,n}^\dagger)^\prime \Psi(Z_i,P)^\prime (\hat M_{0,n}\hat b_{j,n} - M_0(P)b_j(P)) \notag \\
& + y^\prime (\hat M_{0,n} - M_0(P))\Psi(Z_i,P) A^\dagger_0(P) b_j(P) + y^\prime \hat M_{0,n} \Psi(Z_i,P)(\hat A_{0,n}^\dagger \hat b_{j,n} - A_0^\dagger(P)b_j(P)) \notag \\
& + y^\prime (M_0(P)-\hat M_{0,n})\varphi_j(Z_i,P) ~.
\end{align}
Also note that for any $v\in \mathbb R^p$ and $u\in \mathbb R^{d_0}$, it follows from $\|uu^\prime\|_{2,2} \leq \|u\|_2^2$ and Assumption \ref{ass:prelim-estimators}(b) that
\begin{equation}\label{lem:var_rate11}
    E_P[(v^\prime \Psi(Z,P) u)^2] = E_P[v^\prime \Psi(Z,P) u u^\prime \Psi(Z,P)^\prime v] \leq \|u\|_2^2 (v^\prime E_P[\Psi(Z,P)\Psi(Z,P)^\prime ] v) \leq K_{1,p} \|u\|_2^2 \|v\|_2^2.
\end{equation}
Therefore, using \eqref{lem:var_rate10}, \eqref{lem:var_rate11}, the definition of $V_j(P)$, Lemma \ref{lem:eigen-cs}, Assumption \ref{ass:prelim-estimators}(b),  and that $\|\hat M_{0,n}\|_{2,2} \leq 1$ due to $\hat M_{0,n}$ being a projection matrix implies for any $y\in \mathbb R^p$ satisfying $\|y\|_1 \leq 1$ that
\begin{align}\label{lem:var_rate12}
 \|V_j^{1/2}(P)& (\hat D_{j,n}(y) - D_j(P;y))\|_2^2 \notag \\
  \lesssim ~ & K_{1,p}\|(\hat A_{0,n}^\dagger - A_0^\dagger(P))^\prime\|_{2,2}^2 \|b_j(P)\|_2^2 + K_{1,p}\|(\hat A_{0,n}^\dagger)^\prime\|_{2,2}^2\|\hat M_{0,n}\hat b_{j,n} - M_0(P) b_j(P)\|_{2}^2  \notag \\ & + K_{1,p}\|\hat M_{0,n} - M_0(P)\|_{2,2}^2 (1+\|A_0^\dagger(P)b_j(P)\|_2^2) + K_{1,p} \|\hat A_{0,n}^\dagger \hat b_{j,n} - A_0^\dagger(P) b_j(P)\|_2^2  ~ .
\end{align}
Further note that the arguments employed in \eqref{lem:var_rate7} and \eqref{lem:var_rate8} imply that uniformly in $P\in \mathbf P$ and $1\leq j \leq d_1+1$
\begin{align}\label{lem:var_rate13}
\|\hat M_{0,n} \hat b_{j,n} - M_0(P)b_j(P)\|_2 & = O_P\Big( \sqrt{\frac{K_{1,p}}{n}} + K_{2,p}(\sqrt{\frac{(K_{0,p}\vee K_{1,p})\log(1+p)}{n}} + \frac{a_n}{n})\Big) \notag \\
\|\hat A_{0,n}^\dagger \hat b_{j,n} - A_0^\dagger(P) b_j(P)\|_2 & = O_P\Big( \sqrt{\frac{K_{1,p}}{n}} + K_{2,p}(\sqrt{\frac{(K_{0,p}\vee K_{1,p})\log(1+p)}{n}} + \frac{a_n}{n}) \Big) ~.
\end{align}
Therefore, combining results \eqref{lem:var_rate12} and \eqref{lem:var_rate13} and using Lemma \ref{lem-auxAdagger} and Assumptions \ref{ass:singular-value}(b)(c) imply 
\begin{equation}\label{lem:var_rate14}
 \|V_j^{1/2}(P) (\hat D_{j,n}(y) - D_j(P;y))\|_2 = O_P\Big(\sqrt{K_{1,p}}K_{2,p}(\sqrt{\frac{(K_{0,p}\vee K_{1,p})\log(1+p)}{n}} + \frac{a_n}{n}) \Big)     
\end{equation}
uniformly in $P\in \mathbf P$ and $1\leq j \leq d_1+1$.
Finally, note that Lemma \ref{lem:variance-xi} implies $\|V_j^{1/2}(P)D_j(P;y)\|_2^2 = \mathrm{Var}_P[y^\prime \xi_j(Z,P)]$ for any $y$. 
Since $\mathrm{Var}_P[y^\prime \xi_j(Z,P)]$ is uniformly bounded in $P\in \mathbf P$, $1\leq j \leq d_1+1$, and $y$ satisfying $\|y\|_1 \leq 1$, the lemma then follows from \eqref{eq:sigma_diff}, \eqref{eq:D_bound},  Assumptions \ref{ass:singular-value}(b)(c), and results \eqref{lem:var_rate9}, \eqref{lem:var_rate14}, and our rate conditions implying $\|\hat D_{j,n}(y) - D_j(P;y)\|_2 \vee \|V_j^{1/2}(P)(\hat D_{j,n}(y) - D_j(P;y))\|_2 = o_P(1)$ uniformly in $P\in \mathbf P$, $1\leq j \leq d_1+1$ and $y$ with $\|y\|_1 \leq 1$.  
\end{proof}

\begin{lemma} \label{lem:rates}
Suppose Assumptions \ref{ass:prelim-estimators} and \ref{ass:singular-value}(b) hold. Then, 
\vspace{-0.1 in}
\begin{packed_enum}
    \item[(a)] If $(K_{0,p}\vee K_{1,p}) \log(p + d_0)/n = o(1)$, then it follows that uniformly in $P \in \mathbf P$ we have
    \[ \|\hat A_{0, n} - A_0(P)\|_{2,2} = O_P\left(\sqrt{\frac{(K_{0, p} \vee K_{1, p}) \log (1 + p)}{n}} + \frac{a_n}{n}\right)~. \]
    \item[(b)] Uniformly in $P \in \mathbf P$,
    % \[ \|\hat b_{j, n} - b_j(P)\|_{2} = O_P\left(\sqrt{\frac{K_{1, p}}{ n}} + \frac{a_n}{\sqrt{n}}\right)~. \]
    \[ \max_{1 \leq j \leq d_1 + 1} \|\hat b_{j, n} - b_j(P)\|_{2} = O_P\left(\sqrt{\frac{K_{1, p} (d_1 + 1)}{n}} + \frac{a_n}{n}\right)~, \]
    and uniformly in $P \in \mathbf P$ and $1 \leq j \leq d_1 + 1$,
    \[ \|\hat b_{j, n} - b_j(P)\|_{2} = O_P\left(\sqrt{\frac{K_{1, p}}{n}} + \frac{a_n}{n}\right)~. \]
\end{packed_enum}
\end{lemma}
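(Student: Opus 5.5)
Both parts reduce to writing the estimation errors as centered sums of i.i.d.\ influence functions plus a linearization remainder that is controlled by Assumption \ref{ass:prelim-estimators}(a).

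\textbf{Part (a).} I will use the decomposition
\[
\hat A_{0,n} - A_0(P) = \frac{1}{n}\sum_{i=1}^n \Psi(Z_i,P) + R_n ,
\qquad
\|R_n\|_{2,2} = O_P(a_n/n),
\]
where the bound on $R_n$ follows by dividing \eqref{eq:A0_IF} by $\sqrt n$. This remainder produces the $a_n/n$ term. For the leading sum I will apply the matrix Bernstein inequality for rectangular matrices (Tropp's version, via the Hermitian dilation). The summands $\Psi(Z_i,P)/n$ are independent and mean zero, and they satisfy $\|\Psi(Z_i,P)\|_{2,2}\le K_{0,p}$ almost surely. Their variance proxy is
\[
\frac{1}{n}\Bigl(\|E_P[\Psi\Psi']\|_{2,2}\vee\|E_P[\Psi'\Psi]\|_{2,2}\Bigr) \le \frac{K_{1,p}}{n},
\]
with constants uniform in $P$ by Assumption \ref{ass:prelim-estimators}(b). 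Bernstein then gives, for each $t$,
\[
\Bigl\|\frac{1}{n}\sum_{i=1}^n \Psi(Z_i,P)\Bigr\|_{2,2} \lesssim \sqrt{\frac{K_{1,p}\,t}{n}} + \frac{K_{0,p}\,t}{n}
\]
with probability at least $1-(p+d_0)e^{-t}$. I will choose $t = C\log(p+d_0)$ with $C$ large. Under the hypothesis $(K_{0,p}\vee K_{1,p})\log(p+d_0)/n = o(1)$, the linear term $K_{0,p}t/n$ is dominated by the square-root term. Since the paper maintains $p\ge d_0$, we have $\log(p+d_0)\lesssim\log(1+p)$, which gives the stated rate.

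\textbf{Part (b).} I will use the same decomposition for the vectors,
\[
\hat b_{j,n} - b_j(P) = \frac{1}{n}\sum_{i=1}^n \varphi_j(Z_i,P) + r_{j,n},
\qquad
\max_j \|r_{j,n}\|_2 = O_P(a_n/n),
\]
where the bound on $r_{j,n}$ follows from \eqref{eq:B_IF}. For each fixed $j$, the second moment of the sum is
\[
E_P\Bigl\|\frac{1}{n}\sum_{i=1}^n \varphi_j(Z_i,P)\Bigr\|_2^2 = \frac{E_P\|\varphi_j\|_2^2}{n} \le \frac{K_{1,p}}{n}.
\]
Chebyshev's inequality then yields the pointwise rate $\sqrt{K_{1,p}/n}$, uniformly in $P$ and $j$. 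For the maximum over $j$, I will bound the maximum of the squared norms by their sum:
\[
E_P\Bigl[\max_j \Bigl\|\frac{1}{n}\sum_{i=1}^n \varphi_j\Bigr\|_2^2\Bigr] \le \sum_{j=1}^{d_1+1} E_P\Bigl\|\frac{1}{n}\sum_{i=1}^n \varphi_j\Bigr\|_2^2 \le \frac{(d_1+1)K_{1,p}}{n}.
\]
Markov's inequality then gives the $\sqrt{K_{1,p}(d_1+1)/n}$ rate.

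The only genuinely nontrivial step is part (a). There the main obstacle is obtaining the logarithmic dimension factor rather than a polynomial one, and Bernstein's inequality is exactly what supplies it. The remaining care point is converting $\log(p+d_0)$ into $\log(1+p)$ using $d_0\le p$; I expect that step to be routine.
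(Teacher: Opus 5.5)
Your proposal is correct and follows the paper's proof essentially step for step. For part (a), both arguments use the linearization remainder of order $a_n/n$, the rectangular matrix Bernstein inequality (Theorem 1.6 in \cite{tropp2012user}) with a threshold of order $\log(p+d_0)$, and $d_0\le p$. For part (b), both use a second-moment bound followed by Markov/Chebyshev, and your max-by-sum step is exactly what the paper obtains from Lemma 2.2.2 in \cite{van_der_vaart1996weak} with $\psi(x)=x^2$.

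Two small precision points. The linear Bernstein term $K_{0,p}t/n$ is negligible relative to $\sqrt{(K_{0,p}\vee K_{1,p})t/n}$, which is why the rate carries $K_{0,p}\vee K_{1,p}$; it is not necessarily negligible relative to $\sqrt{K_{1,p}t/n}$. Also, $d_0\le p$ is supplied directly by Assumption \ref{ass:singular-value}(b) in the lemma's hypotheses.
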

\begin{proof}
To establish part (a), note that the triangle inequality and Assumption \ref{ass:prelim-estimators}(a) imply that
\[
\|\hat{A}_{0, n} - A_0(P)\|_{2, 2} \leq \Big \| \frac{1}{n} \sum_{1 \leq i \leq n} \Psi(Z_i, P) \Big \|_{2, 2} + O_P\Big(\frac{a_n}{n}\Big)
\]
uniformly in $P\in \mathbf P$.
Moreover, Assumption \ref{ass:prelim-estimators}(b) and Theorem 1.6 in \citet{tropp2012user} yield for any $t \geq 0$,
\begin{equation}\label{eq:tropp0}
\sup_{P\in \mathbf P} P \Big \{ \Big \|\frac{1}{n}\sum_{1 \leq i \leq n} \Psi(Z_i, P) \Big \|_{2, 2} \geq t \Big \} \leq (p + d_0)  \exp \bigg ( \frac{-t^2/2}{K_{1, p} / n + K_{0, p} t/(3n)} \bigg )~.
\end{equation}
Next set $t = C((K_{0, p} \vee K_{1, p}) \log(p + d_0) / n)^{1/2}$ for any $0 < C< \infty$ and note that $t = o(1)$ by hypothesis.
Therefore, for $n$ large enough we have that $K_{1,p}/n + K_{0,p}t/(3n) \leq 2 (K_{1,p}\vee K_{0,p})/n$, which yields
\begin{multline}\label{eq:tropp1}
\sup_{P\in \mathbf P} P  \Big \{ \Big  \|\frac{1}{n}\sum_{1 \leq i \leq n} \Psi(Z_i, P) \Big \|_{2, 2} \geq C\frac{((K_{0,p}\vee K_{1,p})\log(p+d_0))^{1/2}}{\sqrt n} \Big \} \\ 
\leq (p+d_0) \exp\left(-\frac{C^2}{4}\log(p+d_0)\right) = \exp\left(-\Big(\frac{C^2}{4}-1\Big) \log(p + d_0)\right)
\end{multline}
for $n$ sufficiently large.
%\begin{align}
%\nonumber \sup_{P\in \mathbf P} P & \Big \{ \Big  \|\frac{1}{n}\sum_{1 \leq i \leq n} \Psi(Z_i, P) \Big \|_{2, 2} \geq C\frac{(K_{0,p}\vee K_{1,p}\log(p+d_0))^{1/2}}{\sqrt n} \Big \} \\
%\nonumber & \leq (p + d_0) \exp \left ( - \frac{K^2}{2} \frac{(K_{0, p} \vee K_{1, p}) \log(p + d_0)}{K_{1, p} + \frac{K_{0, p}}{3} K ((K_{0, p} \vee K_{1, p}) \log(p + d_0) / n)^{1/2}} \right ) \\
%\nonumber & = \exp \left ( \log(p + d_0) \times \left ( 1 - \frac{K^2}{2} \frac{K_{0, p} \vee K_{1, p}}{K_{1, p} + \frac{K_{0, p}}{3} K ((K_{0, p} \vee K_{1, p}) \log(p + d_0) / n)^{1/2}} \right ) \right ) \\
%\label{eq:tropp1} & \leq \exp \left ( \log(p + d_0) \times \left ( 1 - \frac{K^2}{2} \frac{1}{1 + \frac{1}{3} K ((K_{0, p} \vee K_{1, p}) \log(p + d_0) / n)^{1/2}} \right ) \right )~.
%\end{align}
%For $n$ large enough, $(K_{0, p} \vee K_{1, p}) \log(p + d_0) / n < 1$, so
%\[ 1 - \frac{K^2}{2} \frac{1}{1 + \frac{1}{3} K ((K_{0, p} \vee K_{1, p}) \log(p + d_0) / n)^{1/2}} < 1 - \frac{K^2}{2} \frac{1}{1 + \frac{K}{3}}  < 0 \]
%for $K$ large enough. In that case, it follows from \eqref{eq:tropp1} and $p + d_0 \geq 2$ (because otherwise $A_0$ doesn't exist) that
%\[ P \Big \{ \Big \|n^{-1}\sum_{1 \leq i \leq n} \Psi(Z_i, P) \Big \|_{2, 2} \geq t \Big \} \leq \exp \left ( \log2 \left ( 1 - \frac{K^2}{2} \frac{1}{1 + \frac{K}{3}} \right ) \right )~, \]
%which goes to zero as $K \to \infty$. 
Finally, note that $d_0 \leq p$ because of Assumption \ref{ass:singular-value}(b), which implies $\log(p + d_0) \leq 2 \log(1 + p)$.
Part (a) then follows from \eqref{eq:tropp0} and letting $C\to \infty$ in \eqref{eq:tropp1}.

To establish part (b), we once again apply the triangle inequality and Assumption \ref{ass:prelim-estimators} to obtain
\[\max_{1 \leq j \leq d_1 + 1} \|\hat b_{j, n} - b_j(P)\|_{2} \le \max_{1 \leq j \leq d_1 + 1} \Big\|\frac{1}{n}\sum_{1 \leq i \leq n} \varphi_j(Z_i, P)\Big\|_{2} + O_P\Big(\frac{a_n}{n}\Big)~.\]
Applying Lemma 2.2.2 in \cite{van_der_vaart1996weak} with $\psi(x) = x^2$ we can then conclude
\begin{equation} \label{eq:orlicz-varphi}
\begin{aligned}
E_P \Big [ \max_{1 \leq j \leq d_1 + 1} \Big \| \frac{1}{ n} \sum_{1 \leq i \leq n} \varphi_j(Z_i, P) \Big \|_2^2 \Big ]^{1/2} & \leq \sqrt{d_1 + 1} \max_{1 \leq j \leq d_1 + 1} E_P \Big [ \Big \| \frac{1}{ n} \sum_{1 \leq i \leq n} \varphi_j(Z_i, P) \Big \|_2^2 \Big ]^{1/2} \\
& = \frac{\sqrt{d_1 + 1}}{\sqrt n} \max_{1 \leq j \leq d_1 + 1} E_P[\varphi_j(Z_i, P)' \varphi_j(Z_i, P)]^{1/2}~.
\end{aligned}
\end{equation}
Therefore, Markov's inequality, result \eqref{eq:orlicz-varphi}, and Assumption \ref{ass:prelim-estimators}(b), imply the bound
\[P \Big \{ \max_{1 \leq j \leq d_1 + 1} \Big \| \frac{1}{\sqrt n} \sum_{1 \leq i \leq n} \varphi_j(Z_i, P) \Big \|_2 \ge t \Big \} \le \frac{(d_1 + 1)K_{1, p}}{t^2}~.\]
The first claim of part (b) then follows by setting $t = C\sqrt{(d_1+1)K_{1, p}}$ for some large $C$ large.
The second claim of part (b) follows by similar arguments.
\end{proof}

\begin{lemma} \label{lem:lin_error}
Define $D(A_0(P))[H] = - \big ( M_0(P) H A_0^{\dagger}(P) + (A_0^{\dagger}(P))'H'M_0(P) \big )$ and for $1\leq j \leq d_1+ 1$ set
\begin{align*}
\hat \varepsilon_{j, n}(P; y) & := \sqrt{n}(\hat b_{j, n}^\prime\hat M_{0, n} y - b_j(P)'M_0(P)y) 
 - b_j(P)'D(A_0(P))[\sqrt{n}(\hat{A}_{0, n} - A_0(P))]y \\
& \hspace{5em} - \sqrt{n}(\hat b_{j, n} - b_j(P))'M_0(P) y~.
\end{align*}
Suppose that $\inf_{P \in \mathbf P} \underline s(A_0(P))^2 > 0$ and $2\|A_0(P)\|_{2, 2}\|\hat{A}_{0, n} - A_0(P)\|_{2, 2} + \|\hat A_{0,n} -A_0(P)\|_{2,2}^2 < \underline s(A_0(P))^2/2$ with probability approaching one uniformly in $P \in \mathbf P$.
Then, it follows that
\vspace{-0.1 in}
\begin{packed_enum}
    \item[(a)] With probability approaching one uniformly in $P \in \mathbf P$ and $1 \leq j \leq d_1 + 1$, it follows 
    $$\sup_{y \in \mathbb R^p: \|y\|_2 \leq 1} |\hat \varepsilon_{j, n}(P; y)| \lesssim \frac{\sqrt{n}\|\hat{A}_{0, n} - A_0(P)\|_{2, 2}^2 \|b_j(P)\|_{2}}{\underline s(A_0(P))^2} 
     + \frac{\sqrt{n}\|\hat{A}_{0, n} - A_0(P)\|_{2, 2} \|\hat b_{j, n} - b_j(P)\|_2}{\underline s(A_0(P))}~.  $$
     
    \item[(b)] With probability approaching one uniformly in $P \in \mathbf P$,
    \begin{align*}
    \sup_{y \in \mathbb R^p: \|y\|_2 \leq 1} \max_{1 \leq j \leq d_1 + 1} |\hat \varepsilon_{j, n}(P; y)| & \lesssim \frac{\sqrt{n}\|\hat{A}_{0, n} - A_0(P)\|_{2, 2}^2 \max_{1 \leq j \leq d_1 + 1} \|b_j(P)\|_{2}}{\underline s(A_0(P))^2} \\
    & \hspace{2em} + \frac{\sqrt{n}\|\hat{A}_{0, n} - A_0(P)\|_{2, 2} \max_{1 \leq j \leq d_1 + 1} \|\hat b_{j, n} - b_j(P)\|_2}{\underline s(A_0(P))}~.  
    \end{align*}
\end{packed_enum}
\end{lemma}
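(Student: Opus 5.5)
The plan is a second-order perturbation expansion of the map $(A_0,b)\mapsto b'M_0(A_0)$ around $(A_0(P),b_j(P))$, carried out on the event of the hypothesis. On that event, Weyl's inequality (as in \eqref{lem:var_rate3}) gives $\underline s(\hat A_{0,n})^2\geq \underline s(A_0(P))^2/2$. Hence $\hat A_{0,n}$ has full column rank, and we may write $\hat M_{0,n}=I-\hat A_{0,n}(\hat A_{0,n}'\hat A_{0,n})^{-1}\hat A_{0,n}'$. We also obtain $\|\hat A_{0,n}^\dagger\|_{2,2}\leq \sqrt2/\underline s(A_0(P))$.

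Write $H:=\hat A_{0,n}-A_0(P)$, $\Delta_b:=\hat b_{j,n}-b_j(P)$, $M:=M_0(P)$, $\hat M:=\hat M_{0,n}$. The algebraic identity to use is
\[
\hat b_{j,n}'\hat M y-b_j(P)'My=b_j(P)'(\hat M-M)y+\Delta_b'My+\Delta_b'(\hat M-M)y .
\]
Multiplying by $\sqrt n$ and subtracting the two linear terms leaves
\[
\hat\varepsilon_{j,n}(P;y)=\sqrt n\, b_j(P)'\bigl(\hat M-M-D(A_0(P))[H]\bigr)y+\sqrt n\,\Delta_b'(\hat M-M)y .
\]
So it suffices to prove two bounds, uniformly in the data on the event:
\[
\|\hat M-M\|_{2,2}\lesssim \|H\|_{2,2}/\underline s(A_0(P)), \qquad \|\hat M-M-D(A_0(P))[H]\|_{2,2}\lesssim \|H\|_{2,2}^2/\underline s(A_0(P))^2 .
\]
Taking the supremum over $\|y\|_2\le1$ with Cauchy--Schwarz then yields (a). Part (b) follows from the same display by maximizing over $j$, because the bounds on $\hat M-M$ do not depend on $j$.

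For the first-order bound, use the standard projection-perturbation identity. Since $\hat M\hat A_{0,n}=0$ and $MA_0=0$, we get $\hat M-M=\hat M(I-M)-(I-\hat M)M$. Moreover $(I-M)=A_0A_0^\dagger$ gives $\hat M A_0 A_0^\dagger=-\hat M H A_0^\dagger$, and $(I-\hat M)M=(\hat A_{0,n}^\dagger)'\hat A_{0,n}'M=(\hat A_{0,n}^\dagger)'H'M$. Therefore
\[
\hat M-M=-\hat M H A_0^\dagger-(\hat A_{0,n}^\dagger)'H'M ,
\]
which together with Lemma \ref{lem-auxAdagger} gives the first bound.

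For the second-order bound, compare this exact identity with $D(A_0(P))[H]=-MHA_0^\dagger-(A_0^\dagger)'H'M$. The remainder is
\[
-(\hat M-M)HA_0^\dagger-(\hat A_{0,n}^\dagger-A_0^\dagger)'H'M .
\]
The first piece is bounded by $\|\hat M-M\|\,\|H\|/\underline s$, which is of order $\|H\|^2/\underline s^2$ by the first-order bound. For the second piece we need $\|\hat A_{0,n}^\dagger-A_0^\dagger\|_{2,2}\lesssim\|H\|_{2,2}/\underline s(A_0(P))^2$. This holds by Wedin's theorem (Theorem 4.1 in \citet{wedin1973perturbation}) for equal-rank matrices, with constant $\|\hat A_{0,n}^\dagger\|\|A_0^\dagger\|$, together with $\|\hat A_{0,n}^\dagger\|\lesssim 1/\underline s$.

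The main obstacle is getting the second-order remainder to depend only on $\underline s$ and $\|H\|$, and not on $\bar s_p$. The exact identity above handles this: it avoids expanding $(\hat A'\hat A)^{-1}$, which would introduce $\|A_0\|_{2,2}$ factors. The hypothesis on $\|A_0\|\|H\|$ is used only to secure the rank and the lower bound on $\underline s(\hat A_{0,n})$.
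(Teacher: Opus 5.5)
Your proof is correct, but it takes a different route from the paper's.

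\textbf{How the paper argues.} The paper starts from the same three-term decomposition. It then handles $b_j(P)'(\hat M-M)y$ by the mean value theorem, writing it as $b_j(P)'D(\tilde A_0)[H]y$ for some $\tilde A_0$ on the segment between $A_0(P)$ and $\hat A_{0,n}$ (via Lemma \ref{lem:derivative}). This forces several further steps:
\begin{itemize}
\item Weyl's inequality is applied at $\tilde A_0$ to keep it full rank.
\item $D(\tilde A_0)[H]-D(A_0(P))[H]$ is split into six terms.
\item Wedin's theorem bounds $\tilde A_0^\dagger-A_0^\dagger$, and Theorem 2.5 of \citet{chen2016perturbation} bounds $\tilde M_0-M_0$.
\item A third-order piece is absorbed using $\|H\|_{2,2}\le \underline s(A_0(P))/\sqrt2$.
\end{itemize}
The cross term $\Delta_b'(\hat M-M)y$ is then bounded with the same projection-perturbation result.

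\textbf{How your route differs.} You use the exact identity $\hat M-M=-\hat M HA_0^\dagger-(\hat A_{0,n}^\dagger)'H'M$. It holds with no rank condition at all, since $AA^\dagger$ is always the symmetric orthogonal projector onto the range of $A$. From it you get:
\begin{itemize}
\item the first-order bound on $\hat M-M$ directly, without invoking \citet{chen2016perturbation};
\item a remainder $\hat M-M-D(A_0(P))[H]$ that collapses to just two terms;
\item a need for Wedin's bound only at the endpoint $\hat A_{0,n}$.
\end{itemize}
As a result you avoid the intermediate point (which in the paper depends on $j$ and $y$), any use of differentiability of $A\mapsto M(A)$, and any higher-order terms, and your constants are explicit.

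\textbf{What each approach buys.} The paper's derivative-based argument is more generic: it mirrors the delta-method structure behind $D_j(P;y)$ and would carry over to other smooth maps. Your identity is shorter and more transparent for this particular map. In both proofs the hypothesis involving $\|A_0(P)\|_{2,2}$ serves only to keep the smallest singular value of the perturbed matrix bounded below. Neither final bound depends on $\bar s_p$.
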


\begin{proof}
For notational convenience, in the proof we suppress the dependence of $A_0(P)$, $M_0(P)$, and $b_j(P)$ on $P$ and the subscript $n$ in $\hat A_{0, n}$ and $\hat M_{0, n}$. 
To show part (a), we apply the decomposition
\begin{equation} \label{eq:lin_first}
	\hat b_{j, n}'\hat M_0 y - b_j' M_0 y =  b_j'(\hat M_0 - M_0)y + (\hat b_{j, n} - b_j)'M_0 y + (\hat b_{j, n} - b_j)'(\hat{M_0} - M_0)y~. 
\end{equation}

Consider the first summand in \eqref{eq:lin_first}. According to Lemma \ref{lem:derivative}, the first differential or directional derivative of a function $f: \mathbb R^{p \times d_0} \to \mathbb R$ defined as $f(A_0) = b_j'M_0 y$ with respect to $A_0$ in the direction $H$ is $b_j' D(A_0)[H] y$. By the Mean-Value Theorem (see, e.g., Theorem 5.10 in \cite{magnus2019matrix}), $f(\hat A_0) - f(A_0) = b_j^\prime D(\tilde{A}_0)[\hat A_0 - A_0]y,$ where $\tilde{A}_0 = t\hat A_0 + (1-t)A_0$ for some $t \in [0, 1]$, so we can write
\begin{equation}\label{eq:lin_firstp1}
 b_j'(\hat M_0 - M_0)y = b_j'D(A_0)[\hat A_0 - A_0]y + \underbrace{b_j'(D(\tilde{A}_0)[\hat A_0 - A_0] - D(A_0)[\hat A_0 - A_0])y}_{(R)}.
\end{equation}
To bound the term $(R)$ in result \eqref{eq:lin_firstp1} uniformly in $P\in \mathbf P$ we decompose it into the terms
\begin{align*}
(R) &= -b_j'\{ \tilde{M}_0(\hat A_0 - A_0) \tilde{A}_0^{\dagger} - M_0(\hat A_0 - A_0) A_0^{\dagger} \} y && (R.1) \\
	& \hspace{2em} - b_j'\{ (\tilde{A}_0^{\dagger})' (\hat A_0 - A_0)' \tilde{M}_0 - (A_0^{\dagger})'(\hat A_0 - A_0)'M_0 \} y~. && (R.2)
\end{align*}
First examining the terms $(R.1)$, we expand it further to obtain the decomposition:
\begin{align*}
	(R.1) &= - b_j' (\tilde{M}_0 - M_0) (\hat A_0 - A_0) A_0^{\dagger} y && (R.1.A) \\
	& \hspace{2em} - b_j'M_0(\hat A_0 - A_0)(\tilde{A}_0^{\dagger} - A_0^{\dagger})y && (R.1.B) \\
	& \hspace{2em}+ b_j'(\tilde{M}_0 - M_0)(\hat A_0 - A_0)(\tilde{A}_0^{\dagger} - A_0^{\dagger})y && (R.1.C) ~.  
\end{align*}
Next note that the Cauchy-Schwarz inequality implies that $|a'b| \leq \|a\|_{2} \|b\|_2$ for any $a$ and $b$.
We therefore obtain that
\begin{equation} \label{eq:R1A}
\begin{aligned}
	|(R.1.A)| &\leq\|(A_0^{\dagger})' (\hat A_0 - A_0)'(\tilde{M}_0 - M_0)b_j\|_{2}\cdot \|y\|_2\\
	&\leq \|(A_0^{\dagger})'\|_{2, 2} \cdot\|(\hat A_0 - A_0)'\|_{2, 2} \cdot \|\tilde{M}_0 - M_0\|_{2, 2} \cdot \|b_j\|_{2} \cdot \|y\|_2~.
\end{aligned}
\end{equation}
Similarly, we have
\begin{equation} \label{eq:R1B}
\begin{aligned}
	|(R.1.B)| &\leq \|(\tilde{A}_0^{\dagger} - A_0^{\dagger})' (\hat A_0 - A_0)'M_0 b_j\|_{2}\cdot \|y\|_2\\
	&\leq \|(\tilde{A}_0^{\dagger} - A_0^{\dagger})'\|_{2, 2} \cdot \|(\hat A_0 - A_0)'\|_{2, 2} \cdot \|M_0\|_{2, 2} \cdot \|b_j\|_{2} \cdot \|y\|_2~.
\end{aligned}
\end{equation} 
and
\begin{equation}\label{eq:R1C}
\begin{aligned}
|(R.1.C)| &\leq \|(\tilde{A}_0^{\dagger} - A_0^{\dagger})' (\hat A_0 - A_0)'(\tilde M_0 - M_0) b_j\|_{\infty}\cdot \|y\|_1\\
	&\leq \|(\tilde{A}_0^{\dagger} - A_0^{\dagger})'\|_{2, 2} \cdot \|(\hat A_0 - A_0)'\|_{2, 2} \cdot \|\tilde M_0 - M_0\|_{2, 2} \cdot \|b_j\|_{2} \cdot \|y\|_1~.    
\end{aligned}
\end{equation}
Proceeding to bound the term $(R.2)$ in our expansion for $(R)$, we first decompose $(R.2)$ into the three terms
\begin{align*}
	(R.2) &= - b_j'  (\tilde{A}_0^{\dagger} - A_0^{\dagger})'(\hat A_0 - A_0)' M_0 y && (R.2.A) \\
	& \hspace{2em} - b_j'(\tilde{A}_0^{\dagger})'(\hat A_0 - A_0)'(\tilde{M}_0 - M_0)y  && (R.2.B) \\
	& \hspace{2em} +  b_j'(\tilde{A}_0^{\dagger} - A_0^{\dagger})'(\hat A_0 - A_0)' (\tilde{M}_0 - M_0)y && (R.2.C)  ~.
\end{align*}
Using similar arguments to those employed in deriving \eqref{eq:R1A}, \eqref{eq:R1B}, and \eqref{eq:R1C}, we then obtain the bounds
\begin{equation} \label{eq:R2A}
\begin{aligned}
	|(R.2.A)| & \leq \|M_0 (\hat A_0 - A_0)(\tilde{A}_0^\dagger - A_0^\dagger) b_j\|_{2} \cdot \|y\|_2\\
	& \leq \|M_0\|_{2, 2} \cdot \|\hat A_0 - A_0\|_{2, 2} \cdot \|\tilde{A}_0^\dagger - A_0^\dagger\|_{2, 2} \cdot \|b_j\|_{2} \cdot \|y\|_2
\end{aligned}
\end{equation}
and
\begin{equation} \label{eq:R2B}
\begin{aligned}
	|(R.2.B)| &\leq\| (\tilde{M}_0 - M_0) (\hat A_0 - A_0) A_0^\dagger b_j\|_{2} \|y\|_2\\
	& \leq \|\tilde{M}_0 - M_0\|_{2, 2} \cdot \|\hat A_0 - A_0\|_{2, 2} \cdot \|A_0^\dagger\|_{2, 2} \cdot \|b_j\|_{2} \cdot \|y\|_2~.
\end{aligned}
\end{equation}
and
\begin{equation} \label{eq:R2C}
\begin{aligned}
	|(R.2.C)| &\leq\| (\tilde{M}_0 - M_0) (\hat A_0 - A_0) A_0^\dagger b_j\|_{2} \|y\|_2\\
	& \leq \|\tilde{A}_0^\dagger - A_0^\dagger\|_{2, 2} \cdot \|\hat A_0 - A_0\|_{2, 2} \cdot \|\tilde M_0 - M_0\|_{2, 2} \cdot \|b_j\|_{2} \cdot \|y\|_2~.
\end{aligned}
\end{equation}
Next note that by Weyl's perturbation inequality (see, e.g., Corollary III.2.6 in \cite{bhatia2013matrix}) we have
\begin{equation}\label{eq:perturb}
\underline s(\tilde A_0)^2 \geq \underline s(A_0)^2 - \|\tilde A_0^\prime \tilde A_0 - A_0^\prime A_0\|_{2,2} \geq \underline s(A_0)^2 - 2\|A_0\|_{2, 2}\|\hat A_0 - A_0\|_{2, 2} - \|\hat A_0 - A_0\|_{2, 2}^2~, 
\end{equation}
where in the final inequality we used that $\|\tilde A_0 - A_0\|_{2,2} \leq \|\hat A_0 - A_0\|_{2,2}$ due to $\tilde A_0$ being a convex combination of $\hat A_0$ and $A_0$.
Since, by hypothesis, $2 \|A_0(P)\|_{2,2}\|\hat A_0 - A_0\|_{2,2} + \|\hat A_0 - A_0\|_{2,2}^2 < \underline s(A_0)^2/2$ with probability tending to one (uniformly in $P\in \mathbf P$) we obtain from \eqref{eq:perturb} that with probability tending to one
\begin{equation}\label{eq:lower}
\underline s(\tilde A_0)^2 \geq \frac{\underline{s}(A_0)^2}{2} > 0   
\end{equation}
(uniformly in $P\in \mathbf P$).
In particular, result \eqref{eq:lower} implies that $\mathrm{rank}\{\tilde A_0\} = \mathrm{rank}\{A_0\} = d_0$ with probability tending to one, which together with Theorem 4.1 in \cite{wedin1973perturbation} gives us
\begin{equation}\label{eq:daggerdiff22}
	\|\tilde{A}_0^\dagger - A_0^\dagger\|_{2, 2} \lesssim \| \tilde{A}_0^\dagger\|_{2, 2} \cdot \|A_0^\dagger\|_{2, 2} \cdot \|\tilde{A}_0 - A_0\|_{2, 2}
	\\ = \frac{1}{\underline{s}(\tilde{A}_0)} \frac{1}{\underline s(A_0)} \|\hat A_0 - A_0\|_{2, 2} \lesssim \frac{1}{\underline s(A_0)^2} \|\hat A_0 - A_0\|_{2, 2} ~,
\end{equation}
where the final inequality holds with probability tending to one uniformly in $P\in \mathbf P$ by \eqref{eq:lower}.
Furthermore, Theorem 2.5 in \cite{chen2016perturbation} and $\|\tilde A_0 - A_0\|_{2,2}\leq \|\hat A_0 - A_0\|_{2,2}$ imply that
\begin{equation}\label{eq:Mdiff22}
 \|\tilde{M}_0 - M_0\|_{2, 2} \leq \min\{\| \tilde{A}_0^\dagger\|_{2, 2}, \|A_0^{\dagger}\|_{2, 2} \} \cdot \|\hat A_0 - A_0\|_{2, 2} \lesssim \frac{1}{\underline s(A_0)} \|\hat A_0 - A_0\|_{2, 2} ,
\end{equation}
where the final inequality holds with probability tending to one uniformly in $P\in \mathbf P_0$ by Lemma \ref{lem-auxAdagger} and result \eqref{eq:lower}.
Further note that $\|S\|_{2,2} = \|S^\prime\|_{2,2}$ for any matrix $S$ by Theorem 6.5.1 in \cite{luenberger1969optimization}.
Since $\|M_0\|_{2,2}\leq 1$ due to $M_0$ being a projection matrix, and $\|\hat A_0 - A_0\|_{2,2} \leq \underline{s}(A_0)/\sqrt 2$ with probability tending to one uniformly in $P\in \mathbf P$ by hypothesis, we can then combine the bounds in results \eqref{eq:R1A}, \eqref{eq:R1B}, \eqref{eq:R1C}, \eqref{eq:R2A}, \eqref{eq:R2B}, and \eqref{eq:R2C} with results \eqref{eq:daggerdiff22} and \eqref{eq:Mdiff22} to conclude that
\begin{equation}\label{eq:linrem1}
|(R)| \lesssim \frac{\|\hat A_0 - A_0\|_{2, 2}^2 \|b_j\|_{2}}{\underline s(A_0)^2}
\end{equation}
with probability approaching one uniformly in $P \in \mathbf P$ and $1 \leq j \leq d_1 + 1$.
Moreover, by similar arguments
\begin{multline}\label{eq:linrem2}
	|(\hat b_{j, n} - b_j)'(\hat M_0 - M_0)y| \leq \|(\hat{M_0} - M_0)(\hat b_{j, n} - b_j)\|_{\infty} \|y\|_1\\
	\leq \|\hat M_0 - M_0\|_{2, 2} \|\hat b_{j, n} - b_j\|_{2}
	\lesssim \frac{\|\hat A_0 - A_0\|_{2, 2} \|\hat b_{j, n} - b_j\|_2}{\underline s(A_0)},
\end{multline}
where the final inequality holds uniformly in $P\in \mathbf P$ by the same arguments used in \eqref{eq:Mdiff22}.
Part (a) of the lemma therefore follows from the decomposition in equations \eqref{eq:lin_first} and \eqref{eq:lin_firstp1}, and the bounds in \eqref{eq:linrem1} and \eqref{eq:linrem2}.
Part (b) can be proved similarly because the with probability approaching one statement above only depends on $\hat A_{0, n}$ and $A_0(P)$ and not $1 \leq j \leq d_1+1$.
\end{proof}

\begin{lemma} \label{lem:coupling}
Suppose $Z_i$, $1\leq i \leq n$ is an i.i.d.\ sequence, $\mathbb Z \sim N(0,1)$, Assumption \ref{ass:bdd-moments-IF} holds, and define $S_{j, n}^*(P) := n^{-1/2}\sum_{1 \leq i \leq n} \xi_j(Z_i, P)$. Then, it follows that for some universal $C < \infty$ and any $\delta > 0 $, $x\in \mathbb R$
\begin{align*}
\sup_{P\in \mathbf P} \sup_{y\in \mathbb R^p : \|y\|_1\leq 1} \max_{1\leq j \leq d_1+1} \Big\{P \left \{ S_{j, n}^*(P)' y \leq x \right \} -   P\left(\sigma_j(P;y)\mathbb Z\leq x+3\delta\right)\Big\} & \leq \Big(\frac{K_\xi}{\delta} \vee 1 \Big)^3 \frac{C\log(n)}{\sqrt n} \\
\sup_{P\in \mathbf P} \sup_{y\in \mathbb R^p : \|y\|_1\leq 1} \max_{1\leq j \leq d_1+1} \Big\{P\left( \mathbb \sigma_j(P;y)\mathbb Z \leq x - 3\delta\right) - P \left \{ S_{j, n}^*(P)' y \leq x \right \}\Big\} & \leq \Big(\frac{K_\xi}{\delta} \vee 1 \Big)^3 \frac{C\log(n)}{\sqrt n}.
\end{align*}
\end{lemma}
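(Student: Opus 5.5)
The plan is to reduce the statement to a univariate Berry--Esseen (Kolmogorov distance) bound and then pay for the $3\delta$ smoothing. Fix $P\in\mathbf P$, $y$ with $\|y\|_1\le 1$, and $j$. Set $W_i := \xi_j(Z_i,P)'y$. These are i.i.d.\ and mean zero, since $E_P[\Psi]=0$ and $E_P[\varphi_j]=0$ and $\xi_j$ is linear in them. By Lemma \ref{lem:variance-xi}, $\mathrm{Var}_P[W_i]=\sigma_j^2(P;y)$, and Assumption \ref{ass:bdd-moments-IF} gives $E_P|W_i|^3\le K_\xi^3$ uniformly. Then $S_{j,n}^*(P)'y = n^{-1/2}\sum_i W_i$. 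Everything below depends on $(P,y,j)$ only through the law of $W_i$, so the suprema are automatic once we have bounds depending only on $K_\xi$, $\delta$ and $n$.

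Split into two cases according to the size of $\sigma:=\sigma_j(P;y)$ relative to $\delta$.

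\emph{Case $\sigma\ge\delta$.} Apply the classical Berry--Esseen theorem to $n^{-1/2}\sum_i W_i/\sigma$. This gives
$$\sup_x\bigl|P\{S'y\le x\}-P\{\sigma\mathbb Z\le x\}\bigr|\le C_{BE}\frac{E|W|^3}{\sigma^3\sqrt n}\le C_{BE}\Big(\frac{K_\xi}{\delta}\Big)^3\frac1{\sqrt n}.$$
Both displayed inequalities then follow, because $P(\sigma\mathbb Z\le x)\le P(\sigma\mathbb Z\le x+3\delta)$ and $P(\sigma\mathbb Z\le x-3\delta)\le P(\sigma\mathbb Z\le x)$. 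Here the shifts are not even needed.

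\emph{Case $\sigma<\delta$.} Here Berry--Esseen degenerates, so we use concentration instead and absorb the error into the $3\delta$ shift. By Chebyshev, $P\{|S'y|>2\delta\}\le \sigma^2/(4\delta^2)$. I expect this to be the main obstacle, because the bound is not small when $\sigma$ is comparable to $\delta$. To fix it, sharpen the case split to $\sigma\ge \delta/\sqrt{\log n}$ versus $\sigma<\delta/\sqrt{\log n}$ (or a similar threshold).

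In the large-$\sigma$ subcase, Berry--Esseen gives $C(K_\xi/\delta)^3(\log n)^{3/2}/\sqrt n$. That is slightly worse than the stated $\log n$ factor, so the threshold should instead be chosen at $\delta/(\log n)^{1/3}$, which makes the Berry--Esseen term $C(K_\xi/\delta)^3\log n/\sqrt n$.

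In the small-$\sigma$ subcase, we need $P\{|S'y|>\delta\}$ and $P\{|\sigma\mathbb Z|>\delta\}$ to be $O(\log n/\sqrt n)$. For the Gaussian, $P(|\sigma\mathbb Z|>\delta)\le 2\exp(-(\log n)^{2/3}/2)$, which is not $O(n^{-1/2})$. So I would go back to a threshold $\sigma<\delta/\sqrt{2\log n}$ for the Gaussian tail, which gives $\le n^{-1/2}$.

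For the sum in the small-$\sigma$ subcase, I would not use Chebyshev. Instead, apply a nonuniform Berry--Esseen bound, or Gaussian comparison at scale $\delta$ via a smoothing argument: take a smooth $h$ approximating $1\{\cdot\le x\}$ with transition width $\delta$. Lindeberg replacement then gives $|E h(S'y)-E h(\sigma\mathbb Z)|\lesssim \|h'''\|_\infty E|W|^3/\sqrt n\lesssim (K_\xi/\delta)^3/\sqrt n$, with no dependence on $\sigma$.

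This smoothing route in fact handles all $\sigma$ at once, and I would adopt it as the main argument. Sandwich $1\{s\le x\}\le h_+(s)\le 1\{s\le x+\delta\}$, then use $E h_+(\sigma\mathbb Z)\le P(\sigma\mathbb Z\le x+\delta)$ together with Lindeberg. This yields the first inequality with error $C(K_\xi/\delta\vee1)^3/\sqrt n$. The $\log n$ and the slack $3\delta$ are then harmless. The second inequality follows symmetrically, using $h_-$ with $1\{s\le x-\delta\}\le h_-(s)\le 1\{s\le x\}$.

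The only care needed is that $h_\pm$ has $\|h'''\|_\infty\lesssim\delta^{-3}$ with a universal constant, which is a standard construction; the maximum with $1$ in $(K_\xi/\delta\vee1)^3$ covers this.
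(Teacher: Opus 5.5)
Your final argument is correct: you smooth the indicator at scale $\delta$ with $\|h'''_{\pm}\|_\infty\lesssim\delta^{-3}$ and apply Lindeberg replacement. This is a genuinely different route from the paper's.

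The paper never compares distribution functions directly. It invokes the Yurinskii-type coupling of Lemma 39 in \cite{belloni2019conditional} to build, for each $(P,y,j)$, a variable $T_j(P;y)\sim N(0,\sigma_j^2(P;y))$ with $P\{|S_{j,n}^*(P)'y-T_j(P;y)|>3\delta\}\le\min_{t\ge0}\{2P(|\mathbb Z|>t)+3K_\xi^3t^2/(\delta^3\sqrt n)\}$. It takes $t=\sqrt{\log n}$ and then reads off the two one-sided inequalities from $P\{S_{j,n}^*(P)'y\le x\}\le P\{T_j(P;y)\le x+3\delta\}+P\{|S_{j,n}^*(P)'y-T_j(P;y)|>3\delta\}$ and its mirror image. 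The $\log n$ and the factor $3$ in the shift are therefore artifacts of the coupling. The $\log n$ comes from trading the Gaussian tail $2P(|\mathbb Z|>t)$ against $t^2/\sqrt n$.

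The only downstream use, in \eqref{thm:test7}, is a one-dimensional CDF comparison, so a coupling is more than is needed. Your direct smoothing is self-contained and gives the sharper bound $C(K_\xi/\delta)^3/\sqrt n$ with shift $\delta$ instead of $3\delta$. The paper's route is shorter on the page because it outsources the smoothing step to a citation; such couplings are themselves proved by smoothing.

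Some points for a clean write-up:
\begin{itemize}
\item State the Lindeberg bound with both third-moment sums, $\frac{\|h'''\|_\infty}{6}\sum_i(E|W_i/\sqrt n|^3+E|G_i|^3)$ with $G_i\sim N(0,\sigma^2/n)$. Control the Gaussian part by $\sigma_j(P;y)\le(E_P|\xi_j(Z,P)'y|^3)^{1/3}\le K_\xi$, as in \eqref{lem:couplin2}.
\item Drop the exploratory Berry--Esseen/Chebyshev case split. As you noticed, it does not reach the stated rate, and the smoothing argument supersedes it.
\item At $n=1$ the stated right-hand side is $0$ while the left-hand side need not be. So the lemma, and the paper's proof (where $t=\sqrt{\log 1}=0$), should be read for $n\ge2$. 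For $n\ge2$ your bound converts to the stated form after enlarging $C$.
\end{itemize}
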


\begin{proof}
First note that for any $y \in \mathbb R^p$, $\eta_{j,i}(P;y) := \xi_j(Z_i,P)^\prime y$ is an i.i.d.\ sequence satisfying $E_P[\eta_{j,i}(P;y)] = 0$ and $\mathrm{Var}_P[\eta_{j,i}(P;y)] = \sigma_j^2(P;y)$ by Lemma \ref{lem:variance-xi}.
Further note that Assumption \ref{ass:bdd-moments-IF} implies that
\begin{equation}\label{lem:couplin1}
\sup_{P\in \mathbf P} \sup_{y\in \mathbb R^p :\|y\|_1\leq 1} \max_{1\leq j \leq d_1+1} \sum_{i=1}^n E_P\Big[ \Big|\frac{\eta_{j,i}(P;y)}{\sqrt n}\Big|^3\Big] \leq \frac{K_\xi^3}{\sqrt n}   .
\end{equation}
Letting $g_{j,i}(P;y) \sim N(0,\sigma_j^2(P;y)/n)$ and $\mathbb Z \sim N(0,1)$, next use that $E[|\mathbb Z|^3]\leq 2$  to conclude that
\begin{equation}\label{lem:couplin2}
\sup_{P\in \mathbf P} \sup_{y\in \mathbb R^p :\|y\|_1\leq 1} \max_{1\leq j \leq d_1+1}\sum_{i=1}^n E[|g_{j,i}(P;y)|^3] \leq \sup_{P\in \mathbf P} \sup_{y\in \mathbb R^p :\|y\|_1\leq 1} \max_{1\leq j \leq d_1+1} \frac{2 \sigma_j^3(P;y)}{\sqrt n} \leq \frac{2 K_\xi^3}{\sqrt n},
\end{equation}
where in the final inequality we used that $\sigma_j(P;y) = (E_P[\eta^2_{j,i}(P;y)])^{1/2} \leq (E_P[|\eta_{j,i}(P;y)|^3])^{1/3}$ and Assumption \ref{ass:bdd-moments-IF}.
Combining results \eqref{lem:couplin1} and \eqref{lem:couplin2} and applying Lemma 39 in \cite{belloni2019conditional}, we then obtain that for each $\delta > 0$ there exists a random variable $T_j(P;y) \sim N(0, \sigma^2_j(P;y))$ satisfying
\begin{multline}\label{lem:coupling3}
\sup_{P\in \mathbf P} \sup_{y\in \mathbb R^p : \|y\|_1\leq 1} \max_{1\leq j \leq d_1+1} P\Big\{\Big|\frac{1}{\sqrt n}\sum_{i=1}^n \eta_{j,i}(P;y) - T_j(P;y)\Big| > 3\delta \Big\} \\ \leq \min_{t \geq 0} \Big(2 P(|\mathbb Z| > t) + \frac{3 K_\xi^3}{\delta ^3\sqrt n} t^2\Big) \leq \Big(\frac{K_\xi}{\delta} \vee 1 \Big)^3 \frac{C\log(n)}{\sqrt n},
\end{multline}
where the final inequality holds for some $C < \infty$ by setting $t = \sqrt{\log(n)} $ and using the bound $P\{|\mathbb Z| > t\} \leq 2\exp\{-t^2/2\}$.
Since $S_{j,n}^*(P)^\prime y = \sum_{1 \leq i \leq n} \eta_{j,i}(P;y)/\sqrt n$, we obtain from result \eqref{lem:coupling3} and Strassen's theorem (see, e.g., Theorem 10.3.8 in \cite{pollard2002user}) that for any $\delta > 0$ and any $x\in \mathbb R$ we have that
\begin{align*}
\sup_{P\in \mathbf P} \sup_{y\in \mathbb R^p : \|y\|_1\leq 1} \max_{1\leq j \leq d_1+1} \Big\{P \left \{ S_{j, n}^*(P)' y \leq x \right \} -    P\left\{T_j(P;y) \leq x + 3\delta\right\}\Big\} & \leq \Big(\frac{K_\xi}{\delta} \vee 1 \Big)^3 \frac{C\log(n)}{\sqrt n} \\
\sup_{P\in \mathbf P} \sup_{y\in \mathbb R^p : \|y\|_1\leq 1} \max_{1\leq j \leq d_1+1} \Big\{P\{T_j(P;y) \leq x - 3\delta\} - P \left \{ S_{j, n}^*(P)' y \leq x \right \}\Big\} & \leq \Big(\frac{K_\xi}{\delta} \vee 1 \Big)^3 \frac{C\log(n)}{\sqrt n}.
\end{align*}
The claim of the lemma then follows from $T_j(P;y) \sim N(0,\sigma_j^2(P;y))$.
\end{proof}

\begin{lemma} \label{lem:pd13}
Let Assumptions \ref{ass:prelim-estimators}, \ref{ass:singular-value}, \ref{ass:bdd-moments-IF}, and \ref{ass:rates} hold.
\vspace{-0.1 in}
\begin{packed_enum}
    \item[(a)] If $K_{1,p}(K_{0,p}\vee K_{1,p})\log(1+p) d_1^{1/3} = O(n p^{2/3})$, then it follows that uniformly in $P\in \mathbf P$:
\[ \max_{1 \leq j \leq d_1 + 1} \sup_{\|y\|_1\leq 1}|\sqrt n(\hat b_{j, n}' \hat M_{0, n} y - b_j(P)' M_0(P) y)| = O_P((pd_1)^{1/3}) ~. \]

    \item[(b)] Suppose in addition that for each $p \geq 1$ there are finite constants $C_{\xi,p}$ and $C_{\varphi,p}$ satisfying the inequalities
    \begin{align*}
     \sup_{P\in \mathbf P} E_P\left[\max_{1\leq j \leq d_1+1}\|\xi_j(Z,P)\|_\infty^2\right] & \leq C_{\xi,p}^2 \\
     \sup_{P\in \mathbf P} E_P\left[\max_{1\leq j \leq d_1+1} \|\varphi_j(Z,P)\|_2^2\right] & \leq C_{\varphi,p}^2.
    \end{align*} 
    If $ pC_{\varphi,p}^2 \log^2(p+d_1)(K_{0,p}\vee K_{1,p}) = o(n)$, then it follows that uniformly in $P\in \mathbf P$ we have:
    \[ \max_{1 \leq j \leq d_1 + 1} \sup_{\|y\|_1\leq 1}|\sqrt n(\hat b_{j, n}' \hat M_{0, n} y - b_j(P)' M_0(P) y)| = O_P(C_{\xi,p}\sqrt{\log(p+d_1)}) ~. \]
\end{packed_enum}
\end{lemma}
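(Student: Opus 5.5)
The plan is to reduce both parts to bounding the supremum over $\|y\|_1\le 1$ of a linear statistic plus a linearization remainder. Since $\|y\|_1\le 1$, and a linear function of $y$ on the $\ell_1$-ball attains its supremum at a signed coordinate vector $\pm e_k$, we have
\[
\sup_{\|y\|_1\le 1}|S_{j,n}^*(P)'y| = \|S_{j,n}^*(P)\|_\infty = \max_{1\le k\le p}|S^*_{j,n}(P)'e_k|,
\]
where $S_{j,n}^*(P)=n^{-1/2}\sum_i \xi_j(Z_i,P)$. So we write
\[
\sqrt n(\hat b_{j,n}'\hat M_{0,n}y-b_j(P)'M_0(P)y) = S_{j,n}^*(P)'y + R_{j,n}(y),
\]
and bound $\max_j\|S^*_{j,n}(P)\|_\infty$ and $\max_j\sup_{\|y\|_2\le1}|R_{j,n}(y)|$ separately, using $\|y\|_2\le\|y\|_1$.

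\textbf{Remainder.} Lemma \ref{lem:rate_final} controls $R_{j,n}$ only uniformly in $j$, not for the maximum over $j$, so I would instead redo its argument with Lemma \ref{lem:lin_error}(b). Its precondition is verified exactly as in the proof of Lemma \ref{lem:rate_final}, via Lemma \ref{lem:rates}(a) and Assumption \ref{ass:rates}(b). This gives a bound of order
\[
\sqrt n\,\|\hat A_{0,n}-A_0(P)\|_{2,2}^2 K_{2,p} + \sqrt n\,\|\hat A_{0,n}-A_0(P)\|_{2,2}\max_j\|\hat b_{j,n}-b_j(P)\|_2 .
\]
The first term is $o_P(1)$ by Assumption \ref{ass:rates}(a),(c). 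For the second, Lemma \ref{lem:rates}(b) gives the maximum over $j$ at rate $\sqrt{K_{1,p}(d_1+1)/n}$ in part (a), and in part (b) the sharper rate $C_{\varphi,p}/\sqrt n$ (Markov's inequality on $E\max_j\|n^{-1/2}\sum_i\varphi_j\|_2^2 \le C_{\varphi,p}^2$, after pulling the max inside the sum by convexity). Multiplying by $\sqrt{(K_{0,p}\vee K_{1,p})\log(1+p)/n}$ from Lemma \ref{lem:rates}(a), the extra hypotheses make this $O((pd_1)^{1/3})$ in part (a) and $o(1)$ in part (b). The linearization errors from Assumption \ref{ass:prelim-estimators}(a) contribute at most $O_P(a_nK_{2,p}/\sqrt n)=o_P(1)$, uniformly in $j$, since the $a_n$ bounds there already hold with the max over $j$.

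\textbf{Main term, part (a).} I would use a crude moment bound over the $p(d_1+1)$ scalars $\eta_{jk}:=S^*_{j,n}(P)'e_k$. Each $e_k$ has $\|e_k\|_1=1$, so by Assumption \ref{ass:bdd-moments-IF} each $\xi_j(Z,P)'e_k$ has third moment at most $K_\xi^3$. Rosenthal's (or Marcinkiewicz--Zygmund) inequality then gives $E|\eta_{jk}|^3\lesssim K_\xi^3$ uniformly. Bounding a maximum by a sum of cubes,
\[
E\max_{j,k}|\eta_{jk}|\le\Big(\sum_{j,k}E|\eta_{jk}|^3\Big)^{1/3}\lesssim K_\xi\,(p(d_1+1))^{1/3},
\]
and Markov's inequality yields $O_P((pd_1)^{1/3})$.

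\textbf{Main term, part (b); main obstacle.} This step is the main obstacle: a $\sqrt{\log}$ bound requires a maximal inequality for $\max_{j,k}|\eta_{jk}|$ with only second moments of the envelope. I would apply the maximal inequality of Chernozhukov, Chetverikov and Kato (Lemma 8 type), or a symmetrization plus Nemirovski inequality. This gives
\[
E\max_{j,k}|\eta_{jk}| \lesssim \sqrt{\log(p+d_1)}\,\max_{j,k}\sigma_{jk} + \frac{\log(p+d_1)}{\sqrt n}\Big(E\max_i\max_j\|\xi_j(Z_i,P)\|_\infty^2\Big)^{1/2}.
\]
Here $\sigma_{jk}\le K_\xi$, and the envelope term is bounded by $\sqrt n\,C_{\xi,p}$, which would cost a $\log$ rather than $\sqrt{\log}$. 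To fix this I would either truncate the envelope, or use $E\max_i\le(nE\max_j\|\xi_j\|_\infty^2)^{1/2}$ together with the growth hypothesis on $p\,C_{\varphi,p}^2\log^2(p+d_1)$ to absorb the second term. The first term is $\lesssim C_{\xi,p}\sqrt{\log(p+d_1)}$, noting that $\sigma_{jk}^2\le E\|\xi_j\|_\infty^2\le C_{\xi,p}^2$. Carrying out this absorption is the delicate part, and I would expect to need the stated $o(n)$ condition precisely there.
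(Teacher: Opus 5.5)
\textbf{Overall.} Your architecture matches the paper's. You linearize with Lemma \ref{lem:lin_error}(b), dispose of the $a_n$-errors from Assumption \ref{ass:prelim-estimators}(a) as $O_P(K_{2,p}a_n/\sqrt n)$, and in part (a) bound $\max_{j,k}|\eta_{jk}|$ through third moments and Lemma \ref{lem:rosenthal}, as the paper does via the $\psi(x)=x^3$ maximal inequality. Part (a) is fine. Part (b) is not established, because you have swapped the roles of its two ingredients.

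\emph{Main term.} The Chernozhukov--Chetverikov--Kato-type inequality you write down is calibrated to $\max_{j,k}\sigma_{jk}$. Its second term, given only $E_P[F_P^2]\le C_{\xi,p}^2$ with $F_P:=\max_j\|\xi_j(\cdot,P)\|_\infty$, is in general bounded only by $\log(p+d_1)\,C_{\xi,p}$. That is a factor $\sqrt{\log(p+d_1)}$ too large. Your proposed absorption cannot fix this: the condition $pC_{\varphi,p}^2\log^2(p+d_1)(K_{0,p}\vee K_{1,p})=o(n)$ says nothing that shrinks a term proportional to $C_{\xi,p}$.

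In fact there is no obstacle and no growth condition is needed, because the target rate is stated in terms of the envelope norm $C_{\xi,p}$ rather than $\max\sigma_{jk}$. Symmetrize and apply the sub-Gaussian maximal inequality conditionally on the data:
\[
E_P\Big[\max_{j,k}|\eta_{jk}|\Big]\lesssim \sqrt{\log(2p(d_1+1))}\;E_P\Big[\Big(\frac{1}{n}\sum_{i=1}^n F_P(Z_i)^2\Big)^{1/2}\Big]\le \sqrt{\log(2p(d_1+1))}\;C_{\xi,p}~.
\]
This is the paper's step, phrased there as Theorem 2.14.1 in \cite{van_der_vaart1996weak} for a finite class of $p(d_1+1)$ functions. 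Nemirovski's inequality applied to the stacked vectors $(\xi_1(Z_i,P)',\ldots,\xi_{d_1+1}(Z_i,P)')'$, which you mention in passing, gives the same bound. Your truncation idea would also work, with level $M\asymp C_{\xi,p}\sqrt{n/\log(p+d_1)}$, and again it uses no rate hypothesis.

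\emph{Remainder.} This is where the hypothesis is genuinely needed, and here your bound is false. The inequality $E_P[\max_j\|n^{-1/2}\sum_i\varphi_j(Z_i,P)\|_2^2]\le C_{\varphi,p}^2$ does not follow by pulling the maximum inside the sum, since that discards the centering. It also fails in general: take $p=1$ and let $\varphi_j$ be the coordinates of a vector of $d_1+1$ independent Rademacher variables. Then $C_{\varphi,p}=1$, while for large $n$ the left side is of order $\log(d_1+1)$.

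You need a genuine maximal inequality for the class $\{u'\varphi_j:\|u\|_2\le 1,\ 1\le j\le d_1+1\}$. The paper uses bracketing (Theorem 2.14.2 in \cite{van_der_vaart1996weak}, with $N_{[\,]}\le (d_1+1)(4/\varepsilon+1)^p$) to get $O_P(\sqrt{p\log(1+d_1)}\,C_{\varphi,p})$. Multiplying by $\|\hat A_{0,n}-A_0(P)\|_{2,2}=O_P(\sqrt{(K_{0,p}\vee K_{1,p})\log(1+p)/n}+a_n/n)$ shows the cross term is $o_P(1)$ exactly under $pC_{\varphi,p}^2\log^2(p+d_1)(K_{0,p}\vee K_{1,p})=o(n)$. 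With these two corrections your argument goes through.
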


\begin{proof}
We begin with some preliminary steps that apply to both claims of the lemma. 
For any matrices $A, H\in \mathbb R^{p\times d_0}$ with ${\rm rank}(A) = d_0$ define $M(A) := \mathbf I_p - A(A^\prime A)^{-1} A^\prime$ where ${\bf I}_p$ denotes the $p\times p$ identity matrix, and set $D(A)[H]:=-M(A)H(A'A)^{-1}A'-A(A'A)^{-1}H'M(A)$ and the two terms
\begin{align}\label{lem:pd13:1}
R_{1,j,n}(y) & := \sqrt n(\hat b_{j,n}-b_j(P)-\frac{1}{\sqrt n}\sum_{i=1}^n \varphi_j(Z_i,P))^\prime M_0(P) y \notag    \\
R_{2,j,n}(y) & := b_j(P)^\prime D(A_0(P))[\sqrt n(\hat A_{0,n} - A_0(P)) - \frac{1}{\sqrt n}\sum_{i=1}^n \Psi(Z_i,P)]y ~.
\end{align}
Next note that Lemma \ref{lem:rates}(a), Assumptions \ref{ass:singular-value}(a)(b), and Assumption \ref{ass:rates} imply that $\|A_0(P)\|_{2,2}\|\hat A_{0,n} - A_0(P)\|_{2,2} + \|\hat A_{0,n}-A_0(P)\|_{2,2}^2 < \underline{s}(A_0(P))^2/2$ with probability tending to one uniformly in $P\in \mathbf P$ (see also the the arguments in \eqref{lem:rate-final1} and \eqref{lem:rate-final2} for additional details).
We can therefore apply Lemma \ref{lem:lin_error} and use Assumptions \ref{ass:singular-value}(b)(c) and $\|y\|_2 \leq \|y\|_1$ to obtain that uniformly in $P\in \mathbf P$ that
\begin{align}\label{lem:pd13:2}
   \max_{1\leq j \leq d_1+1} \sup_{\|y\|_1\leq 1}& \Big|\sqrt n(\hat b_{j,n}^\prime \hat M_{0,n} y - b_j(P)^\prime M_0(P)y)\Big| \notag 
    \\
    \lesssim ~ & \max_{1 \leq j \leq d_1+1} \sup_{\|y\|_1\leq 1}\Big|\frac{1}{\sqrt n}\sum_{i=1}^n y^\prime \xi_j(Z_i,P)\Big| + \sup_{\|y\|_2 \leq 1}\max_{1\leq j \leq d_1+ 1}(|R_{1,j,n}(y)| + |R_{2,j,n}(y)|) \notag \\ & + K_{2,p}\sqrt n\|\hat A_{0,n} - A_0(P)\|_{2,2}^2 + \sqrt n\|\hat A_{0,n}-A_0(P)\|_{2,2} \max_{1\leq j \leq d_1+1} \|\hat b_{j,n} - b_j(P)\|_2 ~.
\end{align}
Since $\|M_0(P) y\|_2 \leq 1$ for any $\|y\|_2\leq 1$ due to $M_0(P)$ being a projection matrix, Assumption \ref{ass:prelim-estimators}(a) implies
\begin{equation}\label{lem:pd13:3}
\max_{1\leq j \leq d_1+ 1}\sup_{\|y\|_2 \leq 1} |R_{1,j,n}(y)| = O_P(a_n/\sqrt n) ~.    
\end{equation}
Similarly, again using that $M_0(P)$ is a projection matrix, and that $A^\dagger(P) = (A_0(P)^\prime A_0(P))^{-1}A_0(P)^\prime$ (see, e.g., Proposition 6.12.1 in \cite{luenberger1969optimization}) we obtain uniformly in $P\in \mathbf P$ that
\begin{multline}\label{lem:pd13:4}
\sup_{\|y\|_2 \leq 1}\max_{1\leq j \leq d_1+1} |R_{2,j,n}(y)| \\
\leq 2\max_{1 \leq j \leq d_1+1}\|b_j(P)\|_2  \|\sqrt{n}(\hat A_{0,n}-A_0(P)) - \frac{1}{\sqrt n}\sum_{i=1}^n \Psi(Z_i,P)\|_{2,2} \|A^\dagger_0(P)\|_{2,2} = O_P(K_{2,p} a_n/\sqrt n)~,
\end{multline}
where in the final result we employed Assumptions \ref{ass:prelim-estimators}(a), \ref{ass:singular-value}(b)(c), and Lemma \ref{lem-auxAdagger}.
Therefore, combining results \eqref{lem:pd13:2}, \eqref{lem:pd13:3}, and \eqref{lem:pd13:4} together with Lemma \ref{lem:rates}, and Assumptions \ref{ass:prelim-estimators}(a) and \ref{ass:rates} yields
\begin{multline}\label{lem:pd13:5}
  \max_{1\leq j \leq d_1+1}  \sup_{\|y\|_1\leq 1}|\sqrt n(\hat b_{j,n}^\prime \hat M_{0,n} y - b_j(P)^\prime M_0(P)y)| \\
    \lesssim \max_{1 \leq j \leq d_1+1} \sup_{\|y\|_1\leq 1}\Big|\frac{1}{\sqrt n}\sum_{i=1}^n y^\prime \xi_j(Z_i,P)\Big| +\max_{1\leq j \leq d_1+1} \Big\|\frac{1}{\sqrt n}\sum_{i=1}^n \varphi_j(Z_i,P)\Big\|_2 \|\hat A_{0,n}-A_0(P)\|_{2,2}  + o_P(1)
\end{multline}
uniformly in $P\in \mathbf P$. 

The two parts of the lemma follow from the bound in \eqref{lem:pd13:5} and employing the different assumptions to control the terms in the right hand side of \eqref{lem:pd13:5}.
To establish part (a) let $\xi_{j,k}(Z,P)$ denote the $k^{th}$ coordinate of $\xi_{j}(Z,P)\in \mathbb R^p$ and use that $\sup_{\|y\|_1\leq 1} |y^\prime b| = \|b\|_\infty$ for any $b\in \mathbb R^p$ to obtain the upper bound
\begin{align*}%\label{lem:pd13:6}
E_P\left[\max_{1\leq j \leq d_1+1}\sup_{\|y\|_1\leq 1}\left|\frac{1}{\sqrt n}\sum_{i=1}^n y^\prime \xi_j(Z_i,P)\right|\right] & = 
E_P\left[\max_{1\leq j \leq d_1+1}\max_{1\leq k \leq p} \left|\frac{1}{\sqrt n}\sum_{i=1}^n \xi_{j,k}(Z_i,P)\right|\right] \notag \\
& \lesssim (p d_1 )^{1/3} \max_{1\leq j \leq d_1+1} \max_{1\leq k \leq p} E_P\left[\left|\frac{1}{\sqrt n}\sum_{i=1}^n \xi_{j,k}(Z_i,P)\right|^3\right]^{1/3} \notag \\
& \lesssim (pd_1)^{1/3} \max_{1\leq j \leq d_1+1} \sup_{\|y\|_1\leq 1} E_P[|\xi_j(Z,P)^\prime y|^3]^{1/3}~,
\end{align*}
where the first inequality follows from applying Lemma 2.2.2 in \cite{van_der_vaart1996weak} with $\psi(x) = x^3$ and the second from Lemma \ref{lem:rosenthal}. 
Hence, Assumption \ref{ass:bdd-moments-IF} and Markov's inequality yield
\begin{equation}\label{lem:pd13:7}
\max_{1\leq j \leq d_1+1}\sup_{\|y\|_1\leq 1}\left|\frac{1}{\sqrt n}\sum_{i=1}^n y^\prime \xi_j(Z_i,P)\right| = O_P((pd_1)^{1/3})~.    
\end{equation}
Moreover, Lemma \ref{lem:rates} together with Assumptions \ref{ass:prelim-estimators}(a) and  Assumption \ref{ass:rates} yield uniformly in $P\in \mathbf P$
\begin{equation}\label{lem:pd13:8}
\max_{1\leq j \leq d_1+1} \Big\|\frac{1}{\sqrt n}\sum_{i=1}^n \varphi_j(Z_i,P)\Big\|_2  \|\hat A_{0,n}-A_0(P)\|_{2,2} = O_P\left(1\vee \frac{\sqrt{K_{1,p}(d_1+1)(K_{0,p}\vee K_{1,p})\log(1+p)}}{\sqrt n}\right).
\end{equation}
Part (a) of the lemma therefore follows from results \eqref{lem:pd13:5}, \eqref{lem:pd13:7}, \eqref{lem:pd13:8}, and the rate condition $K_{1,p}(K_{0,p}\vee K_{1,p})\log(1+p) d_1^{1/3} = O(n p^{2/3})$.

To establish part (b) define $\mathcal F_P := \{f : f(Z) = \xi_{j,k}(Z,P) \text{ for some } 1\leq j \leq d_1+1 \text{ and } 1\leq k \leq p\}$ and note that $\mathcal F_P$ has envelope $F_P(Z) := \max_{1\leq j \leq d_1+1} \|\xi_j(Z,P)\|_\infty$. 
Moreover, since $|\mathcal F_P| \leq p(1+d_1)$, we can apply Theorem 2.14.1 in \cite{van_der_vaart1996weak} to conclude that uniformly in $P\in \mathbf P$ we have
\begin{multline}\label{lem:pd13:9}
E_P\left[\max_{1\leq j \leq d_1+1}\sup_{\|y\|_1\leq 1}\left|\frac{1}{\sqrt n}\sum_{i=1}^n y'\xi_j(Z_i,P)\right|\right]   = 
E_P\left[\max_{1\leq j \leq d_1+1}\max_{1\leq k \leq p} \left|\frac{1}{\sqrt n}\sum_{i=1}^n \xi_{j,k}(Z_i,P)\right|\right]  \\ 
 \lesssim \sqrt{\log(pd_1)} \cdot (E_P[F_P^2(Z)])^{1/2} \leq  \sqrt{\log(pd_1)} C_{\xi,p} ~,
\end{multline}
where the final inequality follows by definition of $F_P$ and $C_{\xi,p}$.
Similarly, denote the $\|\cdot\|_2$-unit ball in $\mathbb R^p$ by $\mathcal B_p := \{u \in \mathbb R^p : \|u\|_2\leq 1\}$ and define the class $\mathcal G_P := \bigcup_{j=1}^{d_1+1} \mathcal G_{j,P}$ where for each $1\leq j \leq d_1+1$ we set $\mathcal G_{j,P} := \{f : f(Z) = u^\prime \varphi_j(Z,P)  \text{ for some } u \in \mathcal B_p\}$.
Further note that, by the Cauchy-Schwarz inequality, $\mathcal G_P$ has envelope $G_P(Z):=\max_{1\leq j \leq d_1+1}\|\varphi_j(Z,P)\|_2$.
Setting $\|g\|_{P,2} := (E_P[g^2(Z)])^{1/2}$ for any $g\in \mathcal G_P$ and letting $N_{[\hspace{0.01in}]}(\varepsilon, {\mathcal G}_P, \|\cdot\|_{P,2})$ denote the bracketing numbers of $\mathcal G_P$ under the norm $\|\cdot\|_{P,2}$, we then obtain from $\mathcal G_P := \bigcup_{j=1}^{d_1+1} \mathcal G_{j,P}$ and Theorem 2.7.11 in \cite{van_der_vaart1996weak} that for any $\varepsilon \leq 1$ we have 
\begin{multline}\label{lem:pd13:10}
N_{[\hspace{0.01in}]}(\varepsilon \|G_P\|_{P,2}, \mathcal G_P,\|\cdot\|_{P,2}) \leq  (d_1+1) \max_{1\leq j \leq d_1+1} N_{[\hspace{0.01in}]}(\varepsilon \|G\|_{P,2}, \mathcal G_{j,P},\|\cdot\|_{P,2}) \\
\leq (d_1+1) N(\varepsilon/2, \mathcal B_p, \|\cdot\|_2) \leq (d_1+1)\left(\frac{4}{\varepsilon} + 1\right)^p
\end{multline}
where the final inequality follows from Lemma 14.27 in \cite{buhlmann2011statistics}.
Therefore, using that $\sup_{\|u\|_2\leq 1} |u^\prime b| = \|b\|_2$ for any $b\in \mathbb R^p$ and applying Theorem 2.14.2 in \cite{van_der_vaart1996weak} we can conclude from result \eqref{lem:pd13:10} and the definition of $C_{\varphi,p}$ and $G_P$ that
\begin{multline}\label{lem:pd13:11}
E_P\left[\max_{1\leq j \leq d_1+ 1} \Big\|\frac{1}{\sqrt n}\sum_{i=1}^n \varphi_j(Z_i,P)\Big\|_2 \right] = E_P\left[\max_{1\leq j \leq d_1+ 1} \sup_{\|u\|_2\leq 1} \left|\frac{1}{\sqrt n}\sum_{i=1}^n u^\prime \varphi_j(Z_i,P)\right| \right] \\
\lesssim \int_0^1 \sqrt{1+\log N_{[\hspace{0.01 in}]}(\varepsilon \|G_P\|_{P,2},\mathcal G_P, \|\cdot\|_{P,2})}d\varepsilon \cdot \|G_P\|_{P,2} \lesssim \sqrt{p\log(1+d_1)} C_{\varphi,p} .
\end{multline}
The second part of the lemma then follows from results \eqref{lem:pd13:5}, \eqref{lem:pd13:9}, \eqref{lem:pd13:11}, Markov's inequality, Lemma \ref{lem:rates} and the rate condition $C_{\varphi,p}^2 p \log^2(p+d_1)(K_{0,p}\vee K_{1,p}) = o(n)$. 
\end{proof}

\begin{lemma} \label{lem:derivative}
For any $A\in \mathbb R^{p\times d_0}$ with $\mathrm{rank}(A) = d_0$ define the function $A\mapsto M(A) := \mathbf I_p - A (A' A)^{-1} A'$.
Then, $M : \mathbb R^{p\times d_0}\to \mathbb R^{p\times d_0}$ is differentiable at $A$ and its derivative in the direction $H\in \mathbb R^{p\times d_0}$ equals
\[ D(A)[H] = - M(A) H (A' A)^{-1} A' - A (A' A)^{-1} H' M(A)~. \]
Therefore, $A \mapsto b' M(A) y$ is differentiable at any $A$ with ${\rm rank}(A) =d_0$ and its derivative is $H\mapsto b' D(A)[H] y$.
\end{lemma}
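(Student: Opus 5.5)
The plan is a direct computation of the derivative. Write $G(A) := A'A$, which is invertible on the open set of full-column-rank matrices, and $M(A) = \mathbf I_p - A G(A)^{-1}A'$. Since $A\mapsto G(A)$ is a polynomial map, matrix inversion is differentiable on invertible matrices, and products of differentiable maps are differentiable, $M$ is differentiable at every full-rank $A$; in fact it is $C^\infty$ on an open neighbourhood. (The codomain is really $\mathbb R^{p\times p}$.) All that remains is to identify the derivative.

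First compute the differential of $G$ in direction $H$, namely $H'A + A'H$. Then use $d(G^{-1}) = -G^{-1}(dG)G^{-1}$. The product rule then gives
\[
D(A)[H] = -H G^{-1}A' + A G^{-1}(H'A + A'H)G^{-1}A' - A G^{-1}H'.
\]
Next regroup these terms using $P_A := AG^{-1}A' = \mathbf I_p - M(A)$. Collect $-HG^{-1}A' + AG^{-1}A'HG^{-1}A'$ into $-(\mathbf I_p - P_A)HG^{-1}A' = -M(A)HG^{-1}A'$. Collect $-AG^{-1}H' + AG^{-1}H'AG^{-1}A'$ into $-AG^{-1}H'(\mathbf I_p - P_A) = -AG^{-1}H'M(A)$. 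Together these give exactly the stated formula.

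The final claim follows from the chain rule. The map $A\mapsto b'M(A)y$ is the composition of $M$ with the continuous linear functional $N\mapsto b'Ny$. It is therefore differentiable at any full-rank $A$, with derivative $H\mapsto b'D(A)[H]y$.

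The only point needing care is the bookkeeping in the regrouping step, in particular keeping the sides of the non-commuting factors straight. No genuine obstacle is expected. To be rigorous about "differentiable" (Fréchet) rather than merely directional, note that $G^{-1}$ is Fréchet differentiable via the Neumann series $(G+E)^{-1} = G^{-1} - G^{-1}EG^{-1} + O(\|E\|^2)$ for small $E$.
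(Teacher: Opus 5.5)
Your proposal is correct and follows the paper's proof: differentiate $A\mapsto A'A$, apply the derivative of matrix inversion and the product/chain rule, regroup using $\mathbf I_p - A(A'A)^{-1}A' = M(A)$, and finish with the chain rule through the linear functional $N\mapsto b'Ny$. Your side remarks are also correct refinements of points the paper leaves implicit: the codomain should be $\mathbb R^{p\times p}$, full-column-rank matrices form an open set, and the Neumann series gives Fr\'echet differentiability.
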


\begin{proof}
For any invertible $d_0\times d_0$ matrix $S$, Theorem 8.3 in \cite{magnus2019matrix} implies that the derivative of the inverse map $S\mapsto S^{-1}$ is given by $H\mapsto -S^{-1}H S^{-1}$.
Moreover, by the chain rule, the derivative of $A\mapsto A'A$ is given by $H\mapsto A'H+ H'A$.
Therefore, once again applying the chain rule we obtain that derivative of $A\mapsto M(A)$ at any $A$ with ${\rm rank}(A) = d_0$ is given by
\begin{align*}
D(A)[H] & = -H(A'A)^{-1}A^\prime - A(A'A)^{-1}H' + A(A'A)^{-1}(H'A + A'H)(A'A)^{-1}A'\\
& = -M(A)H(A'A)^{-1}A' - A (A^\prime A)^{-1} H^\prime M(A),
\end{align*}
as desired. The fact that the derivative of $b^\prime M(A)y$ equals $H\mapsto b^\prime D(A)[H] y$ follows from a second application of the chain rule. 
\end{proof}

\begin{lemma} \label{lem:rosenthal}
Let $V_i$, $1 \leq i \leq n$ be i.i.d.\ random variables with $E[V_i] = 0$ and $E[|V_i|^3] < \infty$. Then,
\[ E \Big [ \Big | \frac{1}{\sqrt n} \sum_{1 \leq i \leq n} V_i \Big |^3 \Big ] \leq 3456 E[|V_i|^3]~. \]
\end{lemma}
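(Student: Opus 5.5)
A Rosenthal-type inequality for third moments would do it, but I plan to use a more direct route: symmetrization followed by Khintchine and a moment bound for sums of nonnegative variables. Write $S_n = \sum_{i=1}^n V_i$.

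\emph{Symmetrization.} Let $\varepsilon_1,\ldots,\varepsilon_n$ be i.i.d.\ Rademacher signs independent of the $V_i$. Since $E[V_i]=0$ and $x\mapsto |x|^3$ is convex, the standard symmetrization inequality (obtained from an independent copy $V_i'$, Jensen, and exchangeability of $V_i - V_i'$) gives
\[
E|S_n|^3 \le 8\, E\Big|\sum_{i=1}^n \varepsilon_i V_i\Big|^3 .
\]

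\emph{Khintchine.} Conditional on the $V_i$, Khintchine's inequality with exponent $3$ gives
\[
E_\varepsilon\Big|\sum_i \varepsilon_i V_i\Big|^3 \le B_3 \Big(\sum_i V_i^2\Big)^{3/2}, \qquad B_3 \le 3^{3/2}.
\]
Here $B_3 \le 3^{3/2}$ follows from $E_\varepsilon|\sum_i \varepsilon_i a_i|^3 \le (E_\varepsilon|\sum_i \varepsilon_i a_i|^4)^{3/4} \le (3\|a\|_2^4)^{3/4}$.

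\emph{Bounding $E(\sum_i V_i^2)^{3/2}$.} This is the step needing care. Set $W_i = V_i^2 \ge 0$ and $p = 3/2$. I would use the elementary inequality
\[
\Big(\sum_i W_i\Big)^{3/2} \le \sum_i W_i^{3/2} + \frac{3}{2}\sum_i W_i \Big(\sum_{k\ne i} W_k\Big)^{1/2}.
\]
It follows from $(a+b)^{3/2} \le a^{3/2} + \tfrac32 a\, b^{1/2} + b^{3/2}$ applied inductively, or from the mean value theorem with concavity of $t\mapsto t^{1/2}$. Take expectations and use independence of $W_i$ from $\sum_{k\ne i}W_k$ together with Jensen on the concave square root:
\[
E\Big(\sum_i W_i\Big)^{3/2} \le n\, E|V_1|^3 + \frac32\, n\, E[V_1^2]\,\big(n E[V_1^2]\big)^{1/2}.
\]
Lyapunov gives $E[V_1^2]^{3/2} \le E|V_1|^3$. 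Dividing by $n^{3/2}$, and using $n \le n^{3/2}$, yields
\[
E\Big(\frac{1}{n}\sum_i V_i^2\Big)^{3/2} \le \frac{5}{2}\, E|V_1|^3 .
\]

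\emph{Combining.}
\[
E|S_n/\sqrt n|^3 \le 8 \cdot 3^{3/2} \cdot \frac52\, E|V_1|^3 \approx 104\, E|V_1|^3 \le 3456\, E|V_1|^3 .
\]
The constant $3456$ is generous, so any loss in the constants along the way is harmless. If the elementary inequality above turned out to need adjusting, the crude fallback $E(\sum_i W_i)^{3/2} \le E[\max_i W_i^{1/2} \sum_i W_i] + \ldots$ or a cited Rosenthal bound with explicit constant would also land well below $3456$.
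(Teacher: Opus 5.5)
Your argument is correct, but it takes a different route from the paper. The paper invokes Rosenthal's inequality (Theorem 15.11 of Boucheron, Lugosi and Massart) for the positive part. This gives $E[(n^{-1/2}\sum_i V_i)_+^3]^{1/3}\le 6E[V_i^2]^{1/2}+6n^{-1/6}E[|V_i|^3]^{1/3}\le 12\,E[|V_i|^3]^{1/3}$. The paper then repeats the bound for $-V_i$ and adds the two cubes, which is where $2\cdot 12^3=3456$ comes from.

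You instead prove the $q=3$ Rosenthal-type bound essentially from scratch. Your steps are symmetrization (constant $8$), Khintchine conditionally on the $V_i$, and an elementary bound on $E(\sum_i V_i^2)^{3/2}$ using independence of $W_i$ and $\sum_{k\neq i}W_k$ together with Jensen. Each step checks out, with two small remarks:
\begin{itemize}
\item Your fourth-moment computation actually gives $(3\|a\|_2^4)^{3/4}=3^{3/4}\|a\|_2^3$, so $B_3\le 3^{3/4}$ rather than $3^{3/2}$. This slip is harmless because you only need an upper bound.
\item The elementary inequality holds even with constant $1$ in place of $3/2$: writing $S=\sum_i W_i$ and $S_{-i}=\sum_{k\neq i}W_k$, you have $S^{3/2}=\sum_i W_i S^{1/2}\le \sum_i W_i\,(W_i^{1/2}+S_{-i}^{1/2})$ by subadditivity of the square root.
\end{itemize}

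The paper's route is a few lines given the cited theorem. Yours is self-contained, uses only textbook tools, and yields a far smaller constant: $8\cdot 3^{3/4}\cdot\tfrac52\approx 46$, or about $104$ with your stated $B_3$. So the $3456$ in the statement has considerable slack.
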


\begin{proof}
For $(a)_+ = \max\{a,0\}$, it follows from Rosenthal's inequality (see, e.g., Theorem 15.11 in \cite{boucheron2013concentration}), the random variables $V_i$, $1\leq i \leq n$ being i.i.d., and the inequality $E[V_i^2]^{1/2} \leq E[|V_i|^3]^{1/3}$ that
\begin{equation}\label{lem:rosenthal1}
 E \Big [ \Big ( \frac{1}{\sqrt n} \sum_{1 \leq i \leq n} V_i \Big )_+^3 \Big ]^{1/3} \leq 6 E[V_i^2]^{1/2} + 6 \Big ( \frac{n}{n^{3/2}} E[|V_i|^3] \Big )^{1/3} \leq 6(1 + n^{-1/6}) E[|V_i|^3]^{1/3} \leq 12 E[|V_i|^3]^{1/3}~. 
\end{equation}
Next, note that we can apply the same arguments to $-V_i$ in place of $V_i$ to obtain the upper bound
\begin{equation}\label{lm:rosenthal2}
E \Big [ \Big ( \frac{1}{\sqrt n} \sum_{1 \leq i \leq n} (-V_i) \Big )_+^3 \Big ]^{1/3} \leq  12 E[|V_i|^3]^{1/3}~. 
\end{equation}
For $(a)_{-} = \min\{a,0\}$, then note that for any random variable $X$ we have $E[|X|^3] = E[X_+^3 - X_-^3] = E[X_+^3 + (-X)_+^3]$ . The claim of the lemma therefore follows form combining \eqref{lem:rosenthal1} and \eqref{lm:rosenthal2}.
\end{proof}

\begin{lemma} \label{lem:lipschitz}
If $c_1 \ge \underline \sigma^2>0$ and $c_2 \ge \underline \sigma^2$, then
\[
|c_1^{-1/2} - c_2^{-1/2}| \leq \frac{1}{2\underline \sigma^3}|c_1 - c_2|.
\]
\end{lemma}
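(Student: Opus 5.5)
The plan is to reduce the claim to a mean-value bound for the scalar map $f(c) := c^{-1/2}$ on the half-line $[\underline\sigma^2, \infty)$. Both $c_1$ and $c_2$ lie in this interval by hypothesis, and the interval is convex, so the segment between them stays inside it. The function $f$ is continuously differentiable on $(0,\infty)$ with $f'(c) = -\tfrac{1}{2}c^{-3/2}$.

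By the mean value theorem there is some $\tilde c$ between $c_1$ and $c_2$ with
\[
c_1^{-1/2} - c_2^{-1/2} = -\tfrac{1}{2}\tilde c^{-3/2}(c_1 - c_2).
\]
Since $\tilde c \geq \min\{c_1,c_2\} \geq \underline\sigma^2$, we get $\tilde c^{-3/2} \leq (\underline\sigma^2)^{-3/2} = \underline\sigma^{-3}$, using $\underline\sigma > 0$. Taking absolute values then gives
\[
|c_1^{-1/2} - c_2^{-1/2}| \leq \tfrac{1}{2}\underline\sigma^{-3}|c_1 - c_2|,
\]
which is the claim.

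A purely algebraic route is also available as an alternative. Write
\[
c_1^{-1/2} - c_2^{-1/2} = \frac{c_2 - c_1}{\sqrt{c_1c_2}\,(\sqrt{c_1} + \sqrt{c_2})}
\]
and bound the denominator below by $\underline\sigma^2 \cdot 2\underline\sigma$. This yields the same constant.

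The only point needing care is the case $c_1 = c_2$. It is trivial, since both sides vanish, and the mean value step is only invoked when $c_1 \neq c_2$. I do not expect any genuine obstacle.
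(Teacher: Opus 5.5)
Your proof is correct and is essentially the paper's argument: apply the mean value theorem to $f(x)=x^{-1/2}$ and bound $|f'(x)|=\tfrac{1}{2}x^{-3/2}\le 1/(2\underline\sigma^3)$ for $x\ge\underline\sigma^2$. Your algebraic alternative, bounding the denominator $\sqrt{c_1c_2}(\sqrt{c_1}+\sqrt{c_2})$ below by $2\underline\sigma^3$, is also valid and gives the same constant.
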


\begin{proof}
Consider the function $f(x)=x^{-1/2}$. For $x \ge \underline \sigma^2$, we have
$|f'(x)| = \frac{1}{2}x^{-3/2} \le \frac{1}{2}(\underline \sigma^2)^{-3/2} = 1/(2\underline \sigma^3)$. The conclusion now follows from the mean-value theorem.
\end{proof}

\begin{lemma} \label{lem:eigen-cs}
Let $X\in \mathbb R^p$ be random vector and suppose $E[XX'] < \infty$. Then, $\|E[X X']\|_{2, 2} \leq E[X' X]$.
\end{lemma}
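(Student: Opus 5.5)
The statement to prove is Lemma \ref{lem:eigen-cs}: for a random vector $X\in\mathbb R^p$ with $E[XX']<\infty$, we have $\|E[XX']\|_{2,2}\le E[X'X]$. The plan is to bound the operator norm through the quadratic form of the positive semi-definite matrix $E[XX']$, and then bound that quadratic form pointwise using Cauchy--Schwarz before taking expectations.

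First, note that $\Sigma:=E[XX']$ is symmetric and positive semi-definite, since $u'\Sigma u=E[(u'X)^2]\ge 0$ for every $u\in\mathbb R^p$. For such a matrix the operator norm equals its largest eigenvalue, so
\[\|\Sigma\|_{2,2}=\sup_{\|u\|_2\le 1} u'\Sigma u .\]

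Second, fix $u$ with $\|u\|_2\le 1$. By linearity of expectation, $u'\Sigma u=E[(u'X)^2]$. By Cauchy--Schwarz, $(u'X)^2\le \|u\|_2^2\|X\|_2^2\le X'X$ pointwise. Taking expectations gives $u'\Sigma u\le E[X'X]$.

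Third, take the supremum over $u$ with $\|u\|_2\le 1$; this yields $\|\Sigma\|_{2,2}\le E[X'X]$.

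Two details need checking. The finiteness hypothesis is only used so that all the quantities involved are finite: $E[X'X]=\mathrm{tr}(E[XX'])$, which is finite because the diagonal entries of $E[XX']$ are finite. The other point is the identity $\|\Sigma\|_{2,2}=\sup_{\|u\|_2\le1}u'\Sigma u$, which relies on the spectral theorem for symmetric PSD matrices. Alternatively, one can bypass it entirely by noting that $\|\Sigma\|_{2,2}\le \mathrm{tr}(\Sigma)$ whenever all eigenvalues are non-negative, since the largest eigenvalue is at most their sum. Either route completes the argument, and I expect no real obstacle.
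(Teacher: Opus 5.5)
Your argument is correct, and your main route differs slightly from the paper's. The paper uses that $E[XX']$ is symmetric positive semi-definite, so $\|E[XX']\|_{2,2}$ is its largest eigenvalue. That eigenvalue is at most the sum of the (non-negative) eigenvalues, which is $\mathrm{trace}\{E[XX']\} = E[X'X]$. This is exactly the alternative you mention at the end.

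Your primary route instead writes $u'E[XX']u = E[(u'X)^2]$ and bounds $(u'X)^2 \le X'X$ pointwise by Cauchy--Schwarz before taking expectations. It needs only the Rayleigh-quotient characterization of the top eigenvalue, not the eigenvalue-sum/trace identity. It therefore carries over essentially verbatim to random elements of a Hilbert space, where the trace argument would require trace-class machinery. The paper's version is the shorter, purely linear-algebraic one-liner in finite dimensions. For the lemma as stated, both are complete.
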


\begin{proof}
%$E[X X'] < \infty$ implies $E[X'X] < \infty$. The Cauchy-Schwarz inequality implies that $(X'c)^2 \leq (X'X) (c'c)$. 
Because $E[XX']$ is symmetric and positive semi-definite, all its eigenvalues are non-negative and $\|E[XX^\prime]\|_{2,2}$ equals the largest eigenvalue of $E[XX']$. Using that $\mathrm{trace}\{E[XX']\}$ equals the sum of the eigenvalues of $E[XX']$, we can then conclude that $\|E[X X']\|_{2, 2} \leq \mathrm{trace}\{E[XX']\} = E[X'X]$.
\end{proof}

\begin{lemma} \label{lem-auxAdagger}
Let $A_0$ be an arbitrary $p\times d_0$ matrix of rank $d_0$. Then, $\|A^\dagger_0\|_{2,2} = \|(A^\dagger_0)^\prime\|_{2,2} = 1/\underline{s}(A_0)$.
\end{lemma}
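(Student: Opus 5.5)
We will prove the claim using the singular value decomposition. Since $\mathrm{rank}(A_0) = d_0$, write the thin SVD $A_0 = U \Sigma V'$. Here $U \in \mathbb R^{p\times d_0}$ has orthonormal columns, $V\in\mathbb R^{d_0\times d_0}$ is orthogonal, and $\Sigma = \mathrm{diag}(s_1,\ldots,s_{d_0})$ with $s_1 \ge \cdots \ge s_{d_0} = \underline s(A_0) > 0$. Full column rank guarantees that every singular value is strictly positive. The Moore--Penrose pseudoinverse is then $A_0^\dagger = V\Sigma^{-1}U'$. This can be verified directly from the four Penrose conditions, or from the formula $A_0^\dagger = (A_0'A_0)^{-1}A_0'$ already used in the proof of Lemma \ref{lem:pd13}.

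Next we compute the operator norm. For any $x\in\mathbb R^p$, write $x = Uu + w$ with $w \perp \mathrm{col}(U)$. Then $A_0^\dagger x = V\Sigma^{-1}u$, so
\[
\|A_0^\dagger x\|_2 = \|\Sigma^{-1}u\|_2 \le \frac{\|u\|_2}{s_{d_0}} \le \frac{\|x\|_2}{s_{d_0}} .
\]
The first equality uses orthogonality of $V$, and the last step uses $\|u\|_2 \le \|x\|_2$. Equality is attained at $x = U e_{d_0}$, where $e_{d_0}$ is the last standard basis vector. Hence $\|A_0^\dagger\|_{2,2} = 1/\underline s(A_0)$.

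Finally, the transpose satisfies $(A_0^\dagger)' = U\Sigma^{-1}V'$, which has the same singular values $1/s_k$. So $\|(A_0^\dagger)'\|_{2,2} = 1/s_{d_0}$ as well. Alternatively, this step follows from the general identity $\|S\|_{2,2} = \|S'\|_{2,2}$ (Theorem 6.5.1 in \cite{luenberger1969optimization}), which the paper invokes elsewhere.

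There is no real obstacle in this argument. The only point requiring care is that the full-rank assumption is what ensures the smallest singular value $s_{d_0}$ of $\Sigma$ coincides with $\underline s(A_0)$ and is nonzero. Without full rank, $1/\underline s(A_0)$ would be infinite, whereas $\|A_0^\dagger\|_{2,2}$ would instead equal the reciprocal of the smallest nonzero singular value.
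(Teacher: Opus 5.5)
Your proof is correct. It takes a slightly different route from the paper's.

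The paper never introduces the SVD. It uses $A_0^\dagger = (A_0'A_0)^{-1}A_0'$ to write $\|(A_0^\dagger)'x\|_2^2 = x'(A_0'A_0)^{-1}x$. It reads off $\|(A_0^\dagger)'\|_{2,2}$ from the largest eigenvalue of $(A_0'A_0)^{-1}$, and gets $\|A_0^\dagger\|_{2,2}$ from the general identity $\|S\|_{2,2} = \|S'\|_{2,2}$. That version needs only the normal-equations formula and the spectral norm of a symmetric positive definite matrix.

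Your thin SVD computes $\|A_0^\dagger\|_{2,2}$ directly and exhibits an explicit maximizer $Ue_{d_0}$. It also makes the transpose claim immediate, since $V\Sigma^{-1}U'$ and $U\Sigma^{-1}V'$ visibly share the singular values $1/s_k$.

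Both arguments rest on the same fact: the eigenvalues of $A_0'A_0$ are $s_k^2$. Your route has the small advantage of avoiding the squaring bookkeeping. The paper's last step cites $\|(A_0'A_0)^{-1}\|_{2,2} = 1/\underline{s}(A_0)$, which should read $1/\underline{s}(A_0)^2$ before the square root is taken, although its conclusion is right.
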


\begin{proof}
The claim that $\|A^\dagger_0\|_{2,2} = \|(A^\dagger_0)^\prime\|_{2,2}$ follows from Theorem 6.5.1 in \cite{luenberger1969optimization}.
Since $A_0$ has rank $d_0$ it follows from Proposition 6.12.1 in \cite{luenberger1969optimization} that $A_0^\dagger = (A_0^\prime A_0)^{-1}A_0^\prime$.
Hence, we have
\begin{equation}
\|(A^{\dagger}_0)^\prime\|_{2, 2}  = \sup_{\|x\|_2 \leq 1} (x'(A_0'A_0)^{-1} A_0' A_0(A_0'A_0)^{-1} x)^{1/2} 
 = \sup_{\|x\|_2 \leq 1} (x'(A_0'A_0)^{-1}x)^{1/2}  = \frac{1}{\underline s(A_0)}~,
\end{equation}
where the first equality follows by definition of $\|\cdot\|_{2,2}$ and the final one from $\|(A_0^\prime A_0)^{-1}\|_{2,2} = 1/\underline{s}(A_0)$. 
\end{proof}

\end{document}